\tikzset{>=stealth}
\pgfplotsset{compat=1.18}
\tikzstyle{vertex}=[auto=left,circle,fill=black!25,minimum size=20pt,inner sep=0pt]
\renewcommand{\vec}[1]{\ensuremath{\boldsymbol{#1}}}
\newcommand{\velb}{\varepsilon}
\newcommand{\vedisc}{\omega}
\newcommand{\EE}{\mathbb{E}}
\newtheorem{remark}{Remark}
\newtheorem{definition}{Definition}
\title{Bandit-Based Rate Adaptation for a Single-Server Queue}
\author{Mevan Wijewardena, Kamiar Asgari, Michael J. Neely
\thanks{The authors are with the Electrical Engineering department at the University of Southern California.}
}
\date{December 2025}
\newtheorem{theorem}{Theorem}
\newtheorem{lemma}{Lemma}
\newtheorem{corollary}{Corollary}[theorem]
\begin{document}
\maketitle
\begin{abstract}
This paper considers the problem of obtaining bounded time-average expected queue sizes in a single-queue system with a partial-feedback structure. Time is slotted; in slot $t$ the transmitter chooses a rate $V(t)$ from a continuous interval. Transmission succeeds if and only if $V(t)\le C(t)$, where channel capacities $\{C(t)\}$ and arrivals are i.i.d. draws from fixed but unknown distributions. The transmitter observes only binary acknowledgments (ACK/NACK) indicating success or failure. Let $\varepsilon>0$ denote a sufficiently small lower bound on the slack between the arrival rate and the capacity region.  We propose a \emph{phased} algorithm that progressively refines a discretization of the uncountable infinite rate space and, without knowledge of $\varepsilon$, achieves a $\mathcal{O}\!\big(\log^{3.5}(1/\varepsilon)/\varepsilon^{3}\big)$ time-average expected queue size uniformly over the horizon. We also prove a converse result showing that for any rate-selection algorithm, regardless of whether $\varepsilon$ is known, there exists an environment in which the worst-case time-average expected queue size is $\Omega(1/\varepsilon^{2})$. Thus, while a gap remains in the setting without knowledge of $\varepsilon$, we show that if $\varepsilon$ is known, a simple single-stage UCB type policy with a fixed discretization of the rate space achieves $\mathcal{O}\!\big(\log(1/\varepsilon)/\varepsilon^{2}\big)$, matching the converse up to logarithmic factors.
\end{abstract}

\begin{IEEEkeywords}
Multi-armed bandit learning; Continuum-armed bandits; Queueing bandits, Stochastic control; Partial monitoring
\end{IEEEkeywords}
\section{Introduction}
Research on controlling queues in stochastic environments has received widespread attention in both networking and online learning communities. The classical work on this front assumed full feedback on the network conditions~\cite{Neely2010coml}. However, in many real-world systems, the network conditions are unknown and have to be estimated using partial feedback signals (e.g., ACK/NACK). This motivated the line of work known as \emph{queueing bandits}~\cite{Krishnasamy2016} that combines bandit learning with queue stability.

In the full feedback setting, the work on queue-stability focuses on achieving different forms of stability, such as mean-rate stability and strong stability~\cite{Neely2010coml}. Queueing bandit models extend these results to partial-feedback scenarios, where the system must balance learning unknown service characteristics with maintaining stable queues. The prior work on queueing bandits focused on finite action spaces. However, finite-armed bandit formulations are computationally expensive in certain applications such as rate selection in IEEE 802.11 systems~\cite{combes_optimal_2013} due to the sheer size of the action space. This motivates our continuum-armed formulation, where in each time slot the transmitter selects a rate $V(t)$ from the continuous rate space $[0,1]$ to serve the backlog of queued arrivals. The data has to be transmitted through a channel with unknown time-varying capacity, and the transmitter only receives binary (ACK/NACK) feedback indicating whether the transmission was a success. The goal is to achieve a uniformly bounded time-average expected queue size. 

Due to the continuous rate space, we cannot directly apply the techniques developed for classical queueing bandits in our setting. The line of work on \emph{continuum-armed bandits} extends the classical multi-armed bandit problem to handle continuous action spaces~\cite{Agrawal1995}. In continuum-armed bandits, the set of arms is indexed by a (possibly uncountable) subset of the real line, and each arm’s mean reward is a continuous function of its index.  To the best of our knowledge, our work is the first to integrate queueing with continuum-armed bandits. The continuum arm bandit problems are typically solved by picking a finite set of arms, where at least one of the picked arms guarantees a good reward~\cite{Kleinberg2003}. However, in our setting, the unknown arrival and service rates makes it impossible to fix any finite set of arms that guarantees queue stability---it is possible that none of the initially chosen arms stabilizes the system. Hence, the algorithm must adaptively refine the set of picked arms using the information learned on the arrival and service rates.

\noindent
\textbf{Contributions:} Below we list our major contributions. 
\begin{enumerate}
    \item We consider a novel formulation of the rate adaptation problem as a continuum-armed queueing bandit, where the transmitter chooses transmission rates from the continuous rate space $[0,1]$ and receives only binary feedback (ACK/NACK) indicating transmission success. In each time slot, arriving data are queued and the transmitter chooses a rate $V(t)$ to serve the backlog over a time-varying channel with unknown capacity. The objective is to ensure a uniformly bounded time-average expected queue size over the horizon.
    \item We design a phased UCB scheme that iteratively refines a discretization of the rate space [0,1] across phases. In each phase, we run an adaptation of the UCB1 algorithm from~\cite{Auer2002FinitetimeAO} on the current discretization. Let $\varepsilon > 0$ denote a lower bound on the gap between the arrival rate and the channel capacity. Without requiring prior knowledge of $\varepsilon$, the proposed algorithm guarantees a time-average expected queue size of order $\mathcal{O}(\log^{3.5}(1/\varepsilon)/\varepsilon^3)$ uniformly over the horizon.
    \item We establish a converse result showing that for any algorithm that chooses transmission rates, \textbf{whether or not it knows $\varepsilon$}, there exists an environment such that the worst-case time-average expected queue size is of the order $\Omega(1/\varepsilon^2)$. Thus, while our current algorithm achieves a time-average expected queue-size bound polynomial in $(1/\varepsilon)$, there remains a gap between the upper and lower bounds when $\varepsilon$ is unknown.
    \item  We establish that when the transmitter knows $\varepsilon$, adopting the UCB1 algorithm from~\cite{Auer2002FinitetimeAO} yields a time-average expected queue size of order $\mathcal{O}(\log(1/\varepsilon)/\varepsilon^2)$, uniformly over the horizon. This matches the converse bound, and hence the algorithm is optimal up to logarithmic factors when $\varepsilon$ is known. 
\end{enumerate}
\subsection{Related Work}
Network scheduling in stochastic environments has received widespread attention over the past few decades. This includes scheduling for vehicular networks~\cite{Cai2024}, unmanned aerial vehicle networks~\cite{Kong2021}, wireless networks~\cite{Jung2007,Tarzjani2025}, and computer networks~\cite{Maguluri2012StochasticMO}. The main goal of these works is to schedule to minimize power consumption~\cite{Cruz2003}, maximize utility~\cite{Palomar2006}, ensure fairness~\cite{Songwu1999}, and ensure queue-stability~\cite{Neely2010coml}. The above problems have additional challenges in the partial feedback settings~\cite{lyu2017optimal,li2020multi,Huang2024}. In the partial feedback setting, the above problems can be more generally captured under stochastic control problems with partial information~\cite{KAELBLING199899}. In addition to scheduling, these problems have applications in finance and pricing~\cite{Kleinberg2003,Chichportich2023}, resource allocation~\cite{zuo2021combinatorial}, smart grid~\cite{Shardin02012017}, trajectory planning~\cite{Sun2021}, and neuroscience~\cite{Velentzas117598}.

One of the most common partial feedback models is the multi-armed bandit (MAB) problem~\cite{Lai1985,Auer2002FinitetimeAO}. In its basic form, an agent repeatedly chooses from a finite set of arms, each associated with an unknown reward distribution. Upon selecting the arm, the agent observes a random reward drawn from the corresponding distribution. The agent's goal is to learn, over time, to identify and select the arm with the highest mean reward. This problem has the classic exploration vs. exploitation trade-off, where if the agent does not explore to learn the best arm, she may end up persistently choosing a suboptimal arm. However, exploration comes at a cost since the agent has to choose suboptimal arms when exploring. Hence, any suitable algorithm for the MAB problem must achieve a balance between the two~\cite{Cesa-Bianchi_Lugosi_2006, Bubeck2012}. Upper confidence bound-based algorithms are designed to handle the aforementioned exploration vs. exploitation tradeoff~\cite{Bubeck2012,lattimore_szepesvári_2020}. Beyond the classic stochastic model, numerous extensions of the MAB framework have been studied, including adversarial bandits~\cite{auer1995gambling,pmlr-v75-wei18a}, linear bandits, combinatorial bandits, and contextual bandits~\cite{Bubeck2012}. Multi-armed bandit problems are also extended to handle possibly uncountable infinite, continuous action spaces through the line of work known as continuum-armed bandits~\cite{Agrawal1995,Kleinberg2008,pmlr-v134-podimata21a}.

Queueing bandits that combines queueing with multi-armed bandits is also extensively studied in the past decade~\cite{Krishnasamy2020,Fu2023,Freund2023,Huang2024,Nguyen,Yang2023}. The work~\cite{Krishnasamy2016} studies a time-slotted multiple-server system in which arriving jobs are queued for service, and the number of arrivals in each time slot is independent and identically distributed (i.i.d.). In each time slot, the job at the head of the queue has to be assigned to one of the servers. If the service is successful, the job leaves the queue at the end of the time slot. The service distribution of each server is unknown, and in every time slot, the service outcome is drawn independently and identically from this distribution. The goal is to design an algorithm to minimize \emph{queue regret}, defined as the difference between the queue lengths under the considered algorithm and those under an oracle policy that knows the true service distributions. It was established in~\cite{Krishnasamy2016} that the \emph{queue regret} scales as $\tilde{\mathcal{O}}(1/t)$ with respect to time $t$, where $\tilde{\mathcal{O}}$ hides polylogarithmic factors. In terms of the traffic slackness $\varepsilon$, their analysis implies a time-average expected queue size of at least $\mathcal{O}(1/\varepsilon^2)$. The work of~\cite{Huang2024} relaxes the i.i.d. arrival and service assumptions by considering a dynamic environment, where arrival and service rates may vary subject to constraints. Meanwhile,~\cite{Fu2023} introduces a different model in which the incoming jobs are assigned to servers that maintain separate queues for the assigned jobs. Table~\ref{tab:comparison} provides a comparison of the worst-case time-average expected queue size of recent work on queueing bandits.

\begin{table}[t]
\centering
\caption{Comparison of the worst-case time-average expected queue size achieved by recent queueing bandit algorithms. Here, $\varepsilon$ denotes the traffic slackness, i.e., the gap between the arrival rate and the capacity region.}
\label{tab:comparison}
\small
\begin{tabular}{lccccc}
\toprule
\textbf{Work} & \textbf{Action Space} & \textbf{Environment} & \textbf{Upper Bound}&\textbf{Lower Bound}\\
\midrule
\textbf{This Paper} & Continuous   & Stochastic & $\mathcal{O}\left(\frac{\log^{3.5}(1/\varepsilon)}{\varepsilon^3}\right)$&$\Omega\left(\frac{1}{\varepsilon^2}\right)$ \\[3pt]
&  ($[0,1]$)& (Unknown $\varepsilon$)& &\\[3pt]
\hline
\textbf{This Paper} & Continuous    & Stochastic & $\mathcal{O}\left(\frac{\log(1/\varepsilon)}{\varepsilon^2}\right)$ &$\Omega\left(\frac{1}{\varepsilon^2}\right)$\\[3pt]
& ($[0,1]$)& (Known $\varepsilon$)& &\\[3pt]
\hline
Krishnasamy et al~\cite{Krishnasamy2020}& Discrete  &Stochastic& $\mathcal{O}\!\left(\frac{1}{\varepsilon^2}\right)$ &$\Omega\left(\frac{1}{\varepsilon}\right)$\\[3pt]
\hline
Yang et al~\cite{Yang2023}& Discrete & Stochastic & $\mathcal{O}\!\left(\frac{1}{\varepsilon^3}\right)$ &- \\[3pt]
\hline
Freund et al~\cite{Freund2023} & Discrete  & Stochastic &$\mathcal{O}\left(\frac{\log(1/\varepsilon)}{\varepsilon}\right)$ &$\Omega\left(\frac{1}{\varepsilon}\right)$\\[3pt]
\hline
Huang et al~\cite{Huang2024} & Discrete  & Adversarial & $\mathcal{O}\!\left(\frac{1}{\varepsilon^2}\right)$  &- \\
\bottomrule
\end{tabular}
\end{table}

Rate selection and adaptation has become one of the most important problems in communications, particularly in wireless systems such as IEEE 802.11~\cite{combes_optimal_2013,combes_optimal_2019,tang_joint_2021,cho_use_2023,tong_rate_2023,le_multi-armed_2024}. In each time interval, the transmitter selects a combination of parameters: module scheme, coding rate, guard interval, channel width, and number of spatial streams that jointly determine the attempted transmission rate for that slot. Given an attempted rate $r$, the transmission succeeds if and only if $r$ is no greater than the unknown instantaneous time-varying channel capacity. Let $R^{\text{max}}$ denote the maximum transmission rate that can be attempted. A possible approach is to model the rate selection problem as a finite armed bandit problem with action space $\{0,1,2,\dots,R^{\text{max}}\}$, and learn the unknown channel capacity. However, in practical schemes $R^{\text{max}}$ can be very large (typically between $10^7$ and $10^{10}$ in IEEE 802.11 schemes), which makes the finite armed bandit formulation above computationally expensive due to the sheer size of the decision space. This motivates our continuum formulation, where we choose $V(t)$ as an arbitrary real number in $[0,1]$. Here, the rates are normalized to the interval $[0,1]$ for analytical tractability, where 1 corresponds to the maximum achievable transmission rate $R^{\text{max}}$. With this approach, we avoid the need to exhaustively consider all possible discrete transmission rates in the vast set $\{0,1,2,\dots,R^{\text{max}}\}$ in each time slot.

\subsection{Notation}
For integers $n$ and $m$, we denote by $[n:m]$ the set of integers between $n$ and $m$ inclusive. If $m<n$, $[n:m]$ is the empty set. We use calligraphic letters to denote sets. Vectors and matrices are denoted in boldface characters. For a vector $\vec{x} \in \mathbb{R}^n$, and $k \in [1:n]$, $x_k$ denotes the $k$-th entry of $\vec{x}$. Likewise, for a matrix $\vec{M} \in \mathbb{R}^{ n \times m}$, $k \in [1:n]$, and $l \in [1:m]$, $M_{k,l}$ denotes the entry at the intersection of $k$-th row and $l$-th column of $\vec{M}$. For $\vec{x} \in \mathbb{R}^n$, define $[\vec{x}]_+$ to be the projection of $\vec{x}$ onto the nonnegative orthant. In particular, \([\vec{x}]_+ = \max\{\vec{x},\vec{0}\}\), where the max is taken entry-wise.
\section{System Model}\label{sec:sys_mod}
We consider a system with a single transmitter attempting to transmit over a single channel in discrete time slots $t \in \{1,2,\dots\}$. In time slot $t$, the transmitter receives $A(t)$ data units to be transmitted through the channel, and the channel has a time-varying capacity $C(t)$ supported in $[0,1]$. The transmitter chooses a rate $V(t)$, without knowing $C(t)$. Transmission is successful if only if $V(t) \leq C(t)$. If the transmission is successful, the transmitter transmits $V(t)$ units of data. The transmitter only gets feedback on whether the transmission is successful or not (i.e., $\mathbbm{1}\{V(t) \leq C(t)\}$). The data to be transmitted are queued on the transmitter's side. The queue evolves according to the following rule:
\begin{align}\label{eqn:basic_queeing equation}
    &Q(1) = 0, \text{ and } \ Q(t+1) = \Bigl[Q(t) + A(t) - V(t)\mathbbm{1}\{V(t) \leq C(t)\}\Bigr]_+ \text{ for all } t\geq 1. 
\end{align}
Our objective is to ensure a finite time-average expected queue size. Specifically, when the arrival process lies strictly within the system's capacity region, we aim to establish a constant \(G\)—which depends only on the fixed parameters of the problem—such that
\begin{align}\label{eqn:Main_goal}
    \frac{1}{H} \sum_{t=1}^H \mathbb{E}\{Q(t)\} \leq G
\end{align}
holds for all time horizons \(H \in \{1,2,3,\dots\}\).

We make the following assumptions:

\begin{enumerate}
    \item[\textbf{A1}] In each time slot \(t\), the random variables \(A(t)\) and \(C(t)\), both taking values in \([0,1]\), are drawn independently from distributions that are unknown to the transmitter. We define the average arrival rate as \(\lambda = \mathbb{E}\{A(t)\}\).

    \item[\textbf{A2}] There exists a maximizer \(r^* \in [0,1]\) of the function \(g : [0,1] \to [0,1]\) defined by
    \begin{align}\label{eqn:g_def}
        g(r) = r \, \mathbb{P}\{C(1) \geq r\}.
    \end{align}
    Furthermore, we assume that
    \begin{align*}
        g(r^*) - \varepsilon \geq \lambda
    \end{align*}
    for some \(\varepsilon > 0\).
\end{enumerate}
We begin with the following three lemmas.
\begin{lemma}\label{eqn:det_queue_bound}
    We have $Q(t) \leq t-1$ for all $t \in \mathbb{N}$. 
    \begin{proof}
        Notice that from the queueing equation~\eqref{eqn:basic_queeing equation}, we have for all $t \geq 1$,
        \begin{align}
             Q(t+1) = \Bigl[Q(t) + A(t) - V(t)\mathbbm{1}\{V(t) \leq C(t)\}\Bigr]_+ \leq \Bigl[Q(t) + A(t)\Bigr]_+ = Q(t) + A(t) \leq Q(t)+1. \nonumber
        \end{align}
        Combining the above with the fact that $Q(1) = 0$, we have the lemma.
    \end{proof}
\end{lemma}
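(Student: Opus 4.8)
The plan is to proceed by induction on $t$. The base case is immediate: the queueing recursion~\eqref{eqn:basic_queeing equation} initializes $Q(1) = 0$, which equals $1-1$, so the claimed bound holds at $t = 1$.

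For the inductive step, I would assume $Q(t) \le t-1$ and then read off from~\eqref{eqn:basic_queeing equation} that $Q(t+1) \le Q(t) + A(t)$: the served amount $V(t)\mathbbm{1}\{V(t) \le C(t)\}$ is nonnegative, so dropping it can only increase the bracketed quantity, and since $Q(t) + A(t) \ge 0$ the projection $[\,\cdot\,]_+$ acts as the identity there. Next I invoke assumption \textbf{A1}, which says $A(t)$ takes values in $[0,1]$, hence $A(t) \le 1$; combining with the inductive hypothesis gives $Q(t+1) \le (t-1) + 1 = t = (t+1) - 1$, which closes the induction.

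There is essentially no obstacle here; the only thing to be careful about is making sure each ingredient is actually guaranteed by the model rather than assumed implicitly—namely the nonnegativity of the served amount, the bound $A(t) \le 1$ (the upper support endpoint in \textbf{A1}), and $Q(1) = 0$ (the initialization). An equivalent non-inductive phrasing would be to unroll the recursion and bound $Q(t) \le Q(1) + \sum_{s=1}^{t-1} A(s) \le 0 + (t-1)$, but the one-line induction is cleaner and is the route I would write up.
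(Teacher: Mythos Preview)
Your proposal is correct and matches the paper's proof essentially line for line: both drop the nonnegative service term to get $Q(t+1)\le Q(t)+A(t)\le Q(t)+1$, then combine with $Q(1)=0$. The only cosmetic difference is that you frame the argument explicitly as induction, whereas the paper states the one-step increment bound and then says ``combining with $Q(1)=0$''---but these are the same proof.
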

\begin{lemma}[one-sided Lipschitz continuity]\label{lem:lip}
The function $g(r)$ satisfies the following one-sided 1-Lipschitz continuity property: For any $0\leq r_2\leq r_1\leq 1$ we have $g(r_1) - g(r_2) \leq r_1 - r_2$.
\end{lemma}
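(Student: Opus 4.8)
The plan is to work directly with the survival function $\bar F(r) := \mathbb{P}\{C(1)\ge r\}$, which is non-increasing in $r$ and takes values in $[0,1]$, and to write $g(r) = r\,\bar F(r)$. First I would fix $0\le r_2\le r_1\le 1$ and expand the difference as $g(r_1)-g(r_2) = r_1\bar F(r_1) - r_2\bar F(r_2)$, then add and subtract the cross term $r_2\bar F(r_1)$ to get
\begin{align*}
g(r_1)-g(r_2) = (r_1-r_2)\,\bar F(r_1) + r_2\bigl(\bar F(r_1)-\bar F(r_2)\bigr).
\end{align*}

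Second, I would observe that $r_2\le r_1$ forces the event inclusion $\{C(1)\ge r_1\}\subseteq\{C(1)\ge r_2\}$, hence $\bar F(r_1)\le \bar F(r_2)$, so the second term above is $\le 0$ because $r_2\ge 0$. This yields $g(r_1)-g(r_2)\le (r_1-r_2)\,\bar F(r_1)$. Third, since $0\le\bar F(r_1)\le 1$ and $r_1-r_2\ge 0$, we conclude $g(r_1)-g(r_2)\le r_1-r_2$, which is exactly the claimed one-sided Lipschitz bound.

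There is no real obstacle here; the statement is elementary, and the only properties of the capacity distribution it uses are monotonicity of $\bar F$ and the bound $\bar F\le 1$. The one point worth flagging is why the estimate must be one-sided: $\bar F$ can drop abruptly (e.g.\ when $C(1)$ is nearly deterministic), so $g$ can decrease much faster than linearly as $r$ increases past the bulk of the capacity mass, and thus $g$ is \emph{not} two-sided Lipschitz. The inequality only goes through in the stated direction because moving the argument upward can only shrink $\bar F$, which is precisely what makes the cross term sign-definite.
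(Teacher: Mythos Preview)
Your proof is correct and essentially the same as the paper's: both use monotonicity of the survival function and the bound $\bar F\le 1$. The only cosmetic difference is which cross term is inserted---you add and subtract $r_2\bar F(r_1)$ to get $(r_1-r_2)\bar F(r_1)$, while the paper replaces $\bar F(r_1)$ by $\bar F(r_2)$ in the first product to get $(r_1-r_2)\bar F(r_2)$---but the argument is the same.
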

\begin{proof}
 Since $ r_2\leq r_1$, we have $\mathbb{P}\{C(1) \geq r_1\}\leq\mathbb{P}\{C(1) \geq r_2\}$. Thus, 
 \begin{align}
 g(r_1) - g(r_2) = r_1\mathbb{P}\{C(1) \geq r_1\}- r_2\mathbb{P}\{C(1) \geq r_2\}  \leq (r_1-r_2)\mathbb{P}\{C(1) \geq r_2\} \leq r_1 - r_2 \nonumber
 \end{align}
\end{proof}
\begin{lemma}\label{eqn:key_lemma_1}
Consider $d \in \mathbb{N}$ such that $d \geq 1/\varepsilon$. There exists $k^* \in [1:d]$ such that 
\begin{align}
g\left(k^*/d\right) -\lambda \geq \varepsilon - \frac{1}{d}, \nonumber   
\end{align}
where $g$ and $\varepsilon$ are defined in \eqref{eqn:g_def}, and $\lambda$ is defined in Assumption~\textbf{A1}.
\begin{proof}
Let $k_{\mathrm{low}} = \max\bigl\{\,k\in [1:d] : k/d \le r^{*}\,\bigr\}$, where $r^*$ is defined in Assumption~\textbf{A1}.
Since \(1/{d} \le \velb \le  g(r^*) = r^*\mathbb{P}\{C(1) \leq r^*\} \le r^{*}\), the index \(k_{\mathrm{low}}\) is well‑defined.

We first prove that
\begin{align}\label{eqn:int_res}
    g(r^*) - g\left(\frac{k_{\mathrm{low}}}{d}\right) \leq \frac{1}{d}. 
\end{align}
From the definition of $k_{\mathrm{low}}$, we have \(k_{\mathrm{low}}/d \le r^{*}\). Hence, if $k_{\mathrm{low}} = d$, we have $r^* = 1$, in which case \eqref{eqn:int_res} holds trivially. Therefore, we assume $k_{\mathrm{low}}  \in [1:d-1]$. By the Lipschitz property (Lemma~\ref{lem:lip}),
\begin{align}
 g(r^*)- g\left(\frac{k_{\mathrm{low}}}{d}\right) 
\leq
\left(r^{*} - \frac{k_{\text{low}}}{d}\right)
<_{(a)}\;
\frac{1}{{d}}. \nonumber
\end{align}
where (a) follows from the definition of $k_{\text{low}}$, since $(k_{\text{low}} +1)/d > r^*$. Hence, \eqref{eqn:int_res} holds. Now we complete the proof. Notice that
\begin{align}
g\left(\frac{k_{\text{low}}}{d}\right)
\geq_{(a)}
g(r^*) -
\frac{1}{{d}}
\geq_{(b)}
\lambda + \velb-
\frac{1}{d}, \nonumber
\end{align}
where (a) holds from \eqref{eqn:int_res}, and (b) follows from Assumption~\textbf{A2}. Hence, we are done.
\end{proof}
\end{lemma}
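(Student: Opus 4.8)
The plan is to discretize the maximizer guaranteed by Assumption~\textbf{A2}. By that assumption there is an $r^{*}\in[0,1]$ maximizing $g$ with $g(r^{*})\ge\lambda+\varepsilon$. The starting observation is that $g(r)=r\,\mathbb{P}\{C(1)\ge r\}\le r$ for every $r\in[0,1]$, so in particular $r^{*}\ge g(r^{*})\ge\lambda+\varepsilon\ge\varepsilon\ge 1/d$, where the last inequality is the hypothesis $d\ge 1/\varepsilon$. Thus $r^{*}$ lies at least one grid step to the right of $0$, which is exactly what is needed to round it down onto the grid $\{1/d,2/d,\dots,d/d\}$ without falling off the left end.

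Given this, I would take $k^{*}=\max\{\,k\in[1:d]:k/d\le r^{*}\,\}$. The set on the right is nonempty precisely because $1/d\le r^{*}$, so $k^{*}$ is well defined and $k^{*}/d\le r^{*}$. If $k^{*}<d$, then maximality forces $(k^{*}+1)/d>r^{*}$, hence $0\le r^{*}-k^{*}/d<1/d$; if instead $k^{*}=d$, then $r^{*}=1$ and $r^{*}-k^{*}/d=0$. Either way $0\le r^{*}-k^{*}/d\le 1/d$ with $k^{*}/d\le r^{*}$.

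Finally I would invoke the one-sided $1$-Lipschitz property of $g$ (Lemma~\ref{lem:lip}) with $r_{1}=r^{*}$ and $r_{2}=k^{*}/d$: since $k^{*}/d\le r^{*}$ this gives $g(r^{*})-g(k^{*}/d)\le r^{*}-k^{*}/d\le 1/d$, and combining with $g(r^{*})\ge\lambda+\varepsilon$ from Assumption~\textbf{A2} yields $g(k^{*}/d)\ge\lambda+\varepsilon-1/d$, which is the claim. I do not expect a genuine obstacle here: the only point that really needs attention is confirming that $k^{*}$ is well defined, and this is exactly where the two facts $d\ge 1/\varepsilon$ and $r^{*}\ge g(r^{*})\ge\varepsilon$ must be used together; the equality case $r^{*}=1$ is the single separate subcase and is immediate.
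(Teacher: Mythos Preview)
Your proof is correct and follows essentially the same approach as the paper: you define the same grid point $k^{*}=\max\{k\in[1:d]:k/d\le r^{*}\}$, verify well-definedness via $r^{*}\ge g(r^{*})\ge\varepsilon\ge 1/d$, handle the same edge case $k^{*}=d$, and then apply the one-sided Lipschitz lemma together with Assumption~\textbf{A2}. The arguments are identical in structure and detail.
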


Now, we have the following corollary as a result of the above lemma.
\begin{corollary}\label{corr:main_corr}
    Fix $\gamma > 1$, and let $d \in \mathbb{N}$ such that $d \geq \gamma/\varepsilon$.  There exists $k^* \in [1:d]$ such that 
    \begin{align}
        g\!\left(\frac{k^*}{d}\right)-\lambda \geq \frac{\gamma-1}{\gamma}\,\varepsilon \nonumber
  \;>\;0. 
    \end{align}
\end{corollary}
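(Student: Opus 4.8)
The plan is to derive this directly from Lemma~\ref{eqn:key_lemma_1} together with a one-line estimate on $1/d$. First I would check that the hypothesis of Lemma~\ref{eqn:key_lemma_1} is met: since $\gamma>1$ and $d\ge\gamma/\varepsilon$, we have $d\ge\gamma/\varepsilon>1/\varepsilon$, so in particular $d\ge 1/\varepsilon$ and the lemma applies. This yields an index $k^*\in[1:d]$ with
\begin{align}
g\!\left(\frac{k^*}{d}\right)-\lambda \;\ge\; \varepsilon-\frac{1}{d}. \nonumber
\end{align}

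Next I would bound the error term: from $d\ge\gamma/\varepsilon$ we get $1/d\le\varepsilon/\gamma$, hence
\begin{align}
\varepsilon-\frac{1}{d}\;\ge\;\varepsilon-\frac{\varepsilon}{\gamma}\;=\;\frac{\gamma-1}{\gamma}\,\varepsilon. \nonumber
\end{align}
Chaining the two displays gives $g(k^*/d)-\lambda\ge\frac{\gamma-1}{\gamma}\varepsilon$. Finally, since $\gamma>1$ implies $\frac{\gamma-1}{\gamma}>0$ and $\varepsilon>0$ by Assumption~\textbf{A2}, the quantity $\frac{\gamma-1}{\gamma}\varepsilon$ is strictly positive, which gives the strict inequality in the statement.

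There is no real obstacle here: the corollary is a quantitative strengthening of Lemma~\ref{eqn:key_lemma_1} obtained by choosing the discretization finer than $1/\varepsilon$ by a factor $\gamma$, so that the discretization loss $1/d$ consumes only a $1/\gamma$ fraction of the slack $\varepsilon$. The only points worth stating explicitly are (i) that $d\ge\gamma/\varepsilon>1/\varepsilon$ legitimately triggers Lemma~\ref{eqn:key_lemma_1}, and (ii) the elementary algebra $\varepsilon-\varepsilon/\gamma=\frac{\gamma-1}{\gamma}\varepsilon$; everything else is immediate.
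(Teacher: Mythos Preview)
Your proof is correct and matches the paper's intended argument exactly: the corollary is stated immediately after Lemma~\ref{eqn:key_lemma_1} as a direct consequence, and your derivation---invoke the lemma (valid since $d\ge\gamma/\varepsilon>1/\varepsilon$), then use $1/d\le\varepsilon/\gamma$ to turn $\varepsilon-1/d$ into $\frac{\gamma-1}{\gamma}\varepsilon$---is precisely that consequence spelled out.
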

In the paper, we consider two settings; when $\varepsilon$ is known (Section~\ref{sec:known}) and when $\varepsilon$ (Section~\ref{sec:unknown}) is unknown. Below, we briefly describe these two settings.

\subsection{Known $\varepsilon$.}
Fix \(\gamma>1\) and choose \(d=\lceil \gamma/\varepsilon\rceil\). We restrict rates to the grid \(\{1/d,2/d,\dots,1\}\); selecting \(V(t)=k/d\) induces a service process with mean \(g(k/d)\) (by \eqref{eqn:g_def}). By Corollary~\ref{corr:main_corr}, there exists \(k^*\in[1:d]\) such that $g\!\left(\frac{k^*}{d}\right)-\lambda>0$.
Hence, repeatedly using the rate \(k^*/d\) yields service strictly exceeding the arrival rate in expectation, implying a bounded time‐average queue size. Note that \(r^*\) need not equal \(k^*/d\). Define the \emph{rate levels}  $\mathcal{K}=\{1,2,\dots,d\}$, where level \(k\in\mathcal{K}\) corresponds to rate \(k/d\). The learning goal is then to identify $k^* \in \arg\max_{k\in\mathcal{K}} g(k/d)$. We carefully choose $\gamma$ to obtain the best bounds. We achieve this via the classical UCB algorithm; our main contribution in this setting is the technically rigorous analysis that yields tight bounds on the time-average expected queue size. This is addressed in Section~\ref{sec:known}.

\subsection{Unknown $\varepsilon$}\label{eqn:unknown_para}
When \(\varepsilon\) is unknown, \(d\) above cannot be chosen as a function of \(\varepsilon\). We therefore partition time into phases. In phase \(\ell\) (of length \(T_\ell\)), we consider the set of \emph{rate levels} $\mathcal{K}_\ell=\{1,2,\dots,d_\ell\}$
and restrict \(V(t)\in\{k/d_\ell: k\in\mathcal{K}_\ell\}\); in particular, the \(k\)-th level corresponds to rate \(k/d_\ell\). The idea then is to use an adaptation of the UCB1 algorithm from~\cite{Auer2002FinitetimeAO} with $\mathcal{K}_l$ as the set of arms (recall that by \eqref{eqn:g_def}, arm $k \in \mathcal{K}_l$ induces a service process with mean \(g(k/d)\)). We choose nondecreasing, unbounded sequences \(\{T_\ell\}_{\ell\ge1}\) and \(\{d_\ell\}_{\ell\ge1}\). Given $\gamma > 1$, for sufficiently large \(i\), we have \(d_i\ge \gamma/\varepsilon\), so if the algorithm selects near-optimal levels sufficiently often within each phase, the queue becomes stable from phase \(i\) onward. However, since the number of \emph{rate levels} increases for each phase, the exploration time required to learn near-optimal levels also increases. If $d_l$ grows too quickly, this exploration burden can lead to instability. Hence, to obtain sharp bounds, the sequences \(\{T_\ell\}\) and \(\{d_\ell\}\) must be chosen carefully; this is addressed in the Section~\ref{sec:unknown}. 

\noindent
\textbf{Organization.}
Section~\ref{sec:unknown} treats the unknown-\(\varepsilon\) case, Section~\ref{sec:converse} presents a converse result, and Section~\ref{sec:known} treats the known-\(\varepsilon\) case.
\section{Unknown $\varepsilon$}\label{sec:unknown}
In this section, we focus on developing the algorithm for the unknown $\varepsilon$ case. The algorithm takes in two tunable parameters $C \in (0,1)$ and $\delta \in (0,1/2)$. The algorithm proceeds in phases, where the $l$-th phase ($l \in \{1,2,\dots\}$) lasts for 
\begin{align}\label{eqn:t_l}
    T_l = 2^{l+2} 
\end{align}
time slots, and during the $l$-th phase we choose rates $V(t) \in \{k/d_l: k \in \mathcal{K}_l\}$, where 
\begin{align}\label{eqn:rate_levels}
    \mathcal{K}_l = \{1,2\dots,d_l\}
\end{align}
is the set of \emph{rate levels} in phase $l$, and 
\begin{align}\label{eqn:K_l}
    d_l= \left\lceil C T_l^{\left(\frac{1}{2}-\delta\right)}\right\rceil.
\end{align}
Now, we describe the motivation for the choice of $T_l,d_l$. The choice of $T_l$ follows from the standard doubling trick argument used in classic multi armed bandit algorithms. For the sequence $\{d_l\}$, the key requirement is that the number of \emph{rate levels} at time $t$ must grow slower than $\sqrt{t}$ in order to ensure stability. This condition will become evident during our analysis.

First, we define some notation. Let us denote by $T^{\text{sum}}_l$ the last time slot of the $(l-1)$-th phase. Hence,
\begin{align}\label{eqn:t_l_sum}
    T^{\text{sum}}_l = \sum_{i = 1}^{l-1}T_i
\end{align}
with the convention that $T^{\text{sum}}_1 = 0$. For $t \in \mathbb{N}$, let $a(t)$ denote the phase to which time slot $t$ belongs. In particular,
\begin{align}\label{eqn:a_t_def}
    a(t) = \min\{ l \in \mathbb{N}: T^{\text{sum}}_l \geq t\} - 1.
\end{align}
Throughout the analysis,  whenever we refer to a time slot using the number of the time slot in the phase, we will use the letter $u$ (i.e. $u$-th time slot of phase $l$). When we refer to the time slot using the number of the time slot in the overall time frame, we will use $t$. Hence, $u$-th time slot of the $l$-th phase is the $(T^{\text{sum}}_l + u)$-th time slot in the overall time frame, and $t$-th time slot of the overall time frame is the $(t-T_{a(t)})$-th time slot of the $a(t)$-th phase.
Let $K_l(u) \in \mathcal{K}_l$ denote the \emph{rate level} used during the $u$-th time slot of the $l$-th phase, where $\mathcal{K}_l = \{1,2,\dots,d_l\}$. Also, let $S_{l,k}(u) \in [0,1]$ denote the service received during the $u$-th time slot of the $l$-th phase if \emph{rate level} $k$ is used. In particular, 
\begin{align}\label{eqn:slt_klt_eqn}
S_{l,k}(u) = \frac{k}{d_l}\mathbbm{1}\left\{\frac{k}{d_l} \leq C(T^{\text{sum}}_l+u)\right\}.
\end{align}
for each $k \in \mathcal{K}_l$. Notice that $\mathbb{E}\{S_{l,k}(u)\} = g(k/d_l)$ (see the definition of function $g$ in \eqref{eqn:g_def}). For $l \geq 1$, and $k \in \mathcal{K}_l$, let us define
\begin{align}\label{eqn:mu_lk}
    \mu_{l,k} =  g\left(\frac{k}{d_l}\right)
\end{align}
for notational convenience. With this notation, the queueing equation can be written as
\begin{align}\label{eqn:queing_equation}
    Q(T^{\text{sum}}_l+u+1) = \left[Q(T^{\text{sum}}_l+u) + A(T^{\text{sum}}_l+u) - S_{l,K_l(u)}(u)\right]_+.
\end{align}
For phase $l \in \{1,2,\dots\}$, \emph{rate level} $k \in \mathcal{K}_l$, and $u \in [0:T_l]$, we define the following. Let $N_{l,k}(u)$ denote the number of times the \emph{rate level} $k$ is chosen on or before the $u$-th time slot of phase $l$. In particular,
\begin{align}
    N_{l,k}(u) = \sum_{\tau = 1}^u \mathbbm{1}\{K_l(\tau) = k \}. \nonumber
\end{align}
Hence, $N_{l,k}(0) = 0$. Let $\bar{\mu}_{l,k}(u)$ denote the empirical mean of the $k$-th \emph{rate level} at time slot $u$ during the $l$-th phase. In particular,
\begin{align}
    \bar{\mu}_{l,k}(u) = \begin{cases}
        0 & \text{ if }  N_{l,k}(u) = 0\\
        \frac{\sum_{\tau = 1}^u \mathbbm{1}\{K_l(\tau) = k \}S_l(\tau)}{N_{l,k}(u)} & \text{ otherwise}
    \end{cases} \nonumber
\end{align}
In addition, we define
\begin{equation} \label{eq:ucb}
    \text{UCB}_{l,k}(u) = \bar{\mu}_{l,k}(u) + \sqrt{\frac{(7-2\delta)
\log\Bigl(T_l\Bigr)}{4\max\{1, N_{l,k}(u)\}}},
\end{equation}
which is an upper confidence bound of $\mu_{l,k}$ at the $u$-th time slot of the $l$-th phase. The constant $(7-2\delta)$ above is carefully chosen to obtain the best constants in the queue bound.
Now we are ready to introduce the algorithm. Algorithm~\ref{algo:2} summarizes the steps.

\begin{algorithm}[H]
\label{algo:2}
\caption{UCB for a Single-Queue Uniform Mesh Rate (Parameters $C,\delta$)}\label{algo:UCB_mesh}
\DontPrintSemicolon
\For{each phase $l \in \{1,2,\dots\}$}{
    \textbf{Initialization:}\\
    \quad For each $k \in \mathcal{K}_l$ ($\mathcal{K}_l$ is defined in \eqref{eqn:rate_levels}), set:
    \begin{itemize}
        \item $\bar{\mu}_{l,k}(0) \gets 0$,
        \item $N_{l,k}(0) \gets 0$, and
        \item $\text{UCB}_{l,k}(0) \gets \sqrt{\frac{(7-2\delta)\log(T_l)}{4}}$.
    \end{itemize}
    \For{each timeslot $u \in [1:T_l]$}{
        Set 
        \begin{align}\label{eqn:decision}
        K_l(u) \gets \arg\max_{k \in \mathcal{K}_l} \text{UCB}_{l,k}(u-1)
        \end{align}
        and run the \textsc{Update Subroutine}$(l,u)$.
    }
}
\end{algorithm}
\begin{algorithm}[H]
\label{alg:Update}
\SetAlgoLined
\DontPrintSemicolon
Update the number of samples for each arm $k \in [1:d_l]$:
\begin{align}
N_{l,k}(u) = N_{l,k}(t-1) + \mathbbm{1}\{K_l(u) = k\}. \nonumber
\end{align}\\

Update the sample mean for each arm $k \in [1:d_l]$:
\begin{align}
\bar{\mu}_{l,k}(u) \gets \frac{N_{l,k}(u-1)\,\bar{\mu}_{l,k}(u-1) + \mathbbm{1}\{K_l(u) = k\}S_{l}(u)}{N_{l,k}(u)}. \nonumber
\end{align}\\
Update $\text{UCB}_{l,k}(u)$ for each arm $k \in [1:d_l]$ according to \eqref{eq:ucb}.
\caption{Update Subroutine$(l,u)$}
\end{algorithm}

\subsection{Performance Bounds of the Algorithm}
The main goal of this section is to prove Theorem~\ref{thm:main_thm_bef_opt}, which establishes a time-average expected queue-size bound expressed in terms of the algorithm parameters $C$ and $\delta$, and the auxiliary parameters $p$, $q$, and $\gamma$. The result holds uniformly over all $\gamma > 1$, $q \in (1,2)$, and $p$ satisfying $1/p + 1/q = 1$. Corollary~\ref{thm:main_thm} then refines this bound by optimizing over these parameters. In addition, to the finite-time result, Theorem~\ref{thm:main_thm_bef_opt} also establishes that the limiting time-average expected queue size is of the order $\mathcal{O}(1/\varepsilon)$. 
\begin{theorem}\label{thm:main_thm_bef_opt}
     Running Algorithm~\ref{algo:UCB_mesh} with parameters $\delta \in (0,1/2)$ and $C \in (0,1)$ we have the following.
    \begin{enumerate}
        \item Consider a time horizon $H \in \mathbb{N}$. We have that,
    \begin{align}\label{eqn:llate_stage_2}
         &\frac{\sum_{t = 1}^H \mathbb{E}\{Q(t)\}}{H}\nonumber\\&\leq \frac{65\times2^{\frac{2}{p-1}}\gamma }{(\gamma-1) \varepsilon}+\frac{\left(2^{\frac{p+1}{p-1}}+2\right)\gamma^{\frac{2}{1-2\delta}}}{\varepsilon^{\frac{2}{1-2\delta}}C^{^{\frac{2}{1-2\delta}}}}+1 +\frac{2^{\frac{5q}{2}-\delta q+3}C^q\gamma^{2q}(7-2\delta)^q\log^{q+2}(2H)H^{1-\frac{q}{2}-\delta q}}{(\gamma-1)^{2q}\varepsilon^{2q}(1-(q/2))^2}  \nonumber\\&\ \ \ \ + \frac{2^{2q +3}\gamma^{2q}(7-2\delta)^q\log^{q+2}(2H)H^{1-q}}{(\gamma-1)^{2q}\varepsilon^{2q}(1-(q/2))^2}
    \end{align}
    for all $p,q,\gamma$ such that $q \in (1,2)$, $1/p+1/q = 1$, and $\gamma > 1$.
    \item We have
    \begin{align}\label{eqn:limit}
         \lim\sup_{H \to \infty}\frac{1}{H}\sum_{t = 1}^H \mathbb{E}\{Q(t)\} \leq  \frac{65\times 2^{\frac{2-4\delta}{1+2\delta}}}{\varepsilon}.
    \end{align}
   
    \end{enumerate}   
\end{theorem}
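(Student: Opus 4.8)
Fix auxiliary parameters $\gamma>1$, $q\in(1,2)$ and $p$ with $1/p+1/q=1$, and write $\varepsilon'=\tfrac{\gamma-1}{\gamma}\,\varepsilon>0$. The first step is to isolate a \emph{burn-in} phase $\ell_0$, the smallest index with $d_{\ell_0}\ge\gamma/\varepsilon$: from \eqref{eqn:K_l} and $T_l=2^{l+2}$ one has $T^{\mathrm{sum}}_{\ell_0}=\Theta\big((\gamma/(C\varepsilon))^{2/(1-2\delta)}\big)$, and by Corollary~\ref{corr:main_corr} \emph{every} phase $l\ge\ell_0$ admits a level $k^*_l\in\mathcal K_l$ with $\mu_{l,k^*_l}-\lambda\ge\varepsilon'$. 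I split $\sum_{t=1}^{H}\mathbb E\{Q(t)\}$ at $t=T^{\mathrm{sum}}_{\ell_0}$. On the pre-$\ell_0$ block, Lemma~\ref{eqn:det_queue_bound} gives $Q(t)\le t-1$ pointwise, so that block contributes at most $\tfrac12\min\{H,T^{\mathrm{sum}}_{\ell_0}\}^2$, hence after dividing by $H$ at most $\tfrac12 T^{\mathrm{sum}}_{\ell_0}$; with the constants tracked this is exactly the $(2^{(p+1)/(p-1)}+2)\gamma^{2/(1-2\delta)}/(\varepsilon^{2/(1-2\delta)}C^{2/(1-2\delta)})$ summand of \eqref{eqn:llate_stage_2}. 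If $H\le T^{\mathrm{sum}}_{\ell_0}$ we are done; otherwise it remains to bound $\Sigma_{\mathrm{late}}=\sum_{t=T^{\mathrm{sum}}_{\ell_0}+1}^{H}\mathbb E\{Q(t)\}$.

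For the late phases I use the quadratic Lyapunov function $Q^2$. From \eqref{eqn:queing_equation}, $A(t),S_{l,k}(u)\in[0,1]$, and $\mathbb E\{A(t)-S_{l,K_l(u)}(u)\mid\mathcal F_t\}=\lambda-\mu_{l,K_l(u)}$ (with $K_l(u)$ adapted to $\mathcal F_t$), the one-slot conditional drift of $Q^2$ is at most $2Q(t)(\lambda-\mu_{l,K_l(u)})+1\le -\varepsilon'\,Q(t)+2Q(t)\,\Delta_{l,K_l(u)}+1$, where $\Delta_{l,k}=\mu_{l,k^*_l}-\mu_{l,k}\ge0$ is the suboptimality gap of level $k$. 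Summing over a phase, rearranging, and summing over $l=\ell_0,\dots,a(H)$, the $\mathbb E\{Q^2\}$ terms telescope — leaving only $\mathbb E\{Q(T^{\mathrm{sum}}_{\ell_0}+1)^2\}\le(T^{\mathrm{sum}}_{\ell_0})^2$, which is why the \emph{quadratic} Lyapunov function is essential — and yield
\[
\varepsilon'\,\Sigma_{\mathrm{late}}\;\le\;(T^{\mathrm{sum}}_{\ell_0})^2\;+\;2\sum_{t=T^{\mathrm{sum}}_{\ell_0}+1}^{H}\mathbb E\{Q(t)\,\Delta_{a(t),K(t)}\}\;+\;\sum_{l\le a(H)}T_l,
\]
where the last sum is $O(H)$.

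It remains to bound the gap-weighted term. Here I invoke the UCB1 analysis of the index \eqref{eq:ucb}: the constant $7-2\delta>6$ makes the per-step concentration failures summable (Hoeffding), and the standard pull-count bound gives $\mathbb E\{N_{l,k}(T_l)\}\le (7-2\delta)\log(T_l)/\Delta_{l,k}^2+O(1)$ for each suboptimal level, among the $d_l\le 2C\,T_l^{1/2-\delta}$ levels of phase $l$. I would then bound $\sum_t\mathbb E\{Q(t)\,\Delta_{a(t),K(t)}\}$ by Hölder's inequality with exponents $(p,q)$, bounding the queue factor locally ($Q(t)\le t-1$, or $Q\le 2T_l$ phase by phase) and reducing the gap factor to moment estimates $\mathbb E\{\sum_u\Delta_{l,K_l(u)}^q\}$ that follow from the pull-count bound (and, where relevant, the Lipschitz regularity of $g$ from Lemma~\ref{lem:lip}), then summing the geometric series $\sum_{l\le a(H)}T_l^{\alpha}\le c_\delta T_{a(H)}^{\alpha}\le c_\delta(2H)^{\alpha}$ for $\alpha>0$, whose denominators $1-2^{-\alpha}$ near $\alpha\downarrow0$ generate the $(1-q/2)^{-2}$-type factors. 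This produces a scalar inequality $\varepsilon'\,\Sigma_{\mathrm{late}}\le a+b\,\Sigma_{\mathrm{late}}^{1/p}$ with $a\le(T^{\mathrm{sum}}_{\ell_0})^2+O(H)$ and $b$ polynomial in $H$, $\log(2H)$, $C$, $\gamma$ and $7-2\delta$; solving it ($\Sigma_{\mathrm{late}}\le 2a/\varepsilon'+O((b/\varepsilon')^q)$), dividing by $H$, using $T_{a(H)}\le 2H$, $\log T_{a(H)}\le\log(2H)$, and adding the pre-$\ell_0$ block reproduces the remaining terms of \eqref{eqn:llate_stage_2}: the $65\cdot2^{2/(p-1)}\gamma/((\gamma-1)\varepsilon)$ term and the $1$ from $2a/\varepsilon'$ and $\sum_l T_l$, and the two $H$-dependent terms from $(b/\varepsilon')^q$; the leftover $(T^{\mathrm{sum}}_{\ell_0})^2/(\varepsilon' H)\le T^{\mathrm{sum}}_{\ell_0}$ (valid since $H>T^{\mathrm{sum}}_{\ell_0}$) folds into the pre-$\ell_0$ term. (The exact exponents and constants require patient, not necessarily tight, bookkeeping.) Part~2 then follows by fixing $\gamma,p,q$ with $q>2/(1+2\delta)$, so that $1-q/2-\delta q<0$, and letting $H\to\infty$: the two $H$-dependent terms, the now-finite $(T^{\mathrm{sum}}_{\ell_0})^2/(\varepsilon' H)$, and the pre-$\ell_0$ contribution all vanish, leaving a $\limsup$ of the form $c(\gamma,p)/\varepsilon$; optimizing over $\gamma>1$ and admissible $p$ (i.e. $\gamma\to\infty$ and $q\downarrow2/(1+2\delta)$, whence $2/(p-1)=2(q-1)\downarrow(2-4\delta)/(1+2\delta)$) gives \eqref{eqn:limit}.

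The main obstacle — and the reason $\{T_l\}$, $\{d_l\}$ must be as in \eqref{eqn:t_l}, \eqref{eqn:K_l} — is the tension between refinement and stability: a finer grid is needed before \emph{some} level stabilizes the queue, yet each new level must be explored, so the exploration cost in phase $l$ grows like $d_l\log T_l$; for the time-averaged backlog to stay bounded uniformly in $H$, this cost must not blow up after accumulation and division by $H$, which forces $d_l=\Theta(T_l^{1/2-\delta})=o(\sqrt{T_l})$ and is exactly what makes the $H$-exponents $1-q/2-\delta q$ and $1-q$ negative for admissible $q$. The subtle point in executing this is that the per-slot queue $\mathbb E\{Q(t)\}$ genuinely grows (up to logs, like $t^{1/2-\delta}$), so it cannot be bounded crudely; one must exploit that the backlog inflates only transiently near each phase boundary and is drained (at rate $\gtrsim\varepsilon'$) within the same phase, so that its \emph{time-average} over phase $l$ shrinks like $T_l^{-2\delta}$ — this is what the $Q^2$-drift-plus-Hölder bookkeeping extracts. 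A secondary nuisance is the burn-in queue, which in the $Q^2$ accounting can reach $(T^{\mathrm{sum}}_{\ell_0})^2=\Theta((1/\varepsilon)^{4/(1-2\delta)})$, larger than any term in the target bound; one must use that, once divided by any horizon exceeding $T^{\mathrm{sum}}_{\ell_0}$, it collapses to order $T^{\mathrm{sum}}_{\ell_0}$ and merges into the $\gamma^{2/(1-2\delta)}/(\varepsilon C)^{2/(1-2\delta)}$ term.
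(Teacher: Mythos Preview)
Your overall architecture matches the paper's: define the burn-in phase $\ell_0$ via $d_{\ell_0}\ge\gamma/\varepsilon$, bound the early block by Lemma~\ref{eqn:det_queue_bound}, run a $Q^2$-Lyapunov drift over the late phases, decouple the queue from the exploration term by H\"older with exponents $(p,q)$, and close with a self-bounding scalar inequality. The good/bad-event decomposition and the Part~2 optimization over $q>2/(1+2\delta)$, $\gamma\to\infty$ are also as in the paper.

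The genuine gap is the step ``bounding the queue factor locally ($Q(t)\le t-1$, or $Q\le 2T_l$)'' after H\"older. That bound is far too weak and cannot reproduce the exponents of \eqref{eqn:llate_stage_2}. The paper's proof does \emph{not} bound $Q$ pointwise here; instead it uses a structural lemma (Lemma~\ref{lemma:q_pow_bnd}, adapted from \cite{Huang2024}) exploiting the $1$-Lipschitz increments $|Q(t+1)-Q(t)|\le1$ to obtain
\[
\Bigl(\sum_{t} Q(t)^p\Bigr)^{1/p}\;\le\;2^{(p-1)/(2p)}\Bigl(\sum_t Q(t)\Bigr)^{(p+1)/(2p)}.
\]
This yields the self-bounding inequality $\Sigma\le A+B\,\Sigma^{(p+1)/(2p)}$, whose solution (Lemma~\ref{lemma:abX_lemma} with $d=2p/(p-1)=2q$) gives a $B^{2q}$ term and hence the $\varepsilon^{-2q}$ and $H^{1-q/2-\delta q}$ factors in \eqref{eqn:llate_stage_2}. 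With your crude bound $Q\le H$, interpolation only gives $(\sum Q^p)^{1/p}\le H^{1/q}\Sigma^{1/p}$, so the self-bounding inequality becomes $\Sigma\le A+B'\Sigma^{1/p}$ with $B'\sim H^{1/q}B$; solving gives a $(B')^{q}$ term, i.e.\ $\varepsilon^{-q}$ (not $\varepsilon^{-2q}$) and, after tracking $d_l\sim T_l^{1/2-\delta}$, an $H$-exponent of $1-q/4-\delta q/2$ after dividing by $H$. That exponent is strictly positive for every $q\in(1,2)$ and $\delta\in(0,1/2)$, so the resulting bound is not uniformly bounded in $H$ and Part~1 fails.

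Two smaller points. First, bounding $\mathbb{E}\{\sum_u \Delta_{l,K_l(u)}^q\}$ via the pull-count inequality $\mathbb{E}[N_{l,k}]\lesssim \log T_l/\Delta_{l,k}^2$ runs into $\sum_k \Delta_{l,k}^{q-2}$, which diverges for small gaps since $q<2$; the paper sidesteps this by never introducing $\Delta$ and instead bounding the confidence width $\sqrt{\log T_l/N}$ directly (Lemma~\ref{lemma:bounding_ucb_slack_holder} gives a deterministic bound on $\sum_u(\log T_l/N)^{q/2}$). Second, your fold-in ``$(T^{\mathrm{sum}}_{\ell_0})^2/(\varepsilon'H)\le T^{\mathrm{sum}}_{\ell_0}$'' loses a factor $1/\varepsilon'$; the paper removes this residual altogether in Part~1 by a separate device: it applies the main lemma not at $I=H$ but at $I=\tilde H$, the last time before $H$ at which $\mathbb{E}\{Q^2(\cdot)\}\ge\mathbb{E}\{Q^2(T^{\mathrm{sum}}_{\ell_0}+1)\}$, so that the telescoped boundary term $[\mathbb{E}\{Q^2(T^{\mathrm{sum}}_{\ell_0}+1)\}-\mathbb{E}\{Q^2(\tilde H+1)\}]_+$ vanishes by construction.
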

In Section~\ref{sec:analysis}, we focus on proving the preceding theorem. Examining the finite-time bound in the first part of the theorem, we observe that first three terms (including the 1) do not depend on the horizon $H$ and therefore remain uniformly bounded. The last term vanishes as $H \to \infty$ since $q \in (1,2)$. The fourth term can be made to vanish as $H \to \infty$ by choosing $q \in \left(\frac{1}{1/2+\delta},2\right)$, which is always feasible because $\delta \in (0,1/2)$. Combining these observations yields
\begin{align}\label{eqn:triv_bound}
    \frac{1}{H}\sum_{t = 1}^H \mathbb{E}\{Q(t)\} = \mathcal{O}\left(\frac{\log^{q+2}\left(\frac{1}{\varepsilon}\right)}{\varepsilon^{\max\left\{2q,\frac{2}{1-2\delta}\right\}}}\right)
\end{align}
which holds uniformly over the horizon provided that $1-\frac{q}{2}-\delta q < 0$. Optimizing \eqref{eqn:triv_bound} over $q$, $\delta$ under this constraint gives $\frac{1}{H}\sum_{t = 1}^H \mathbb{E}\{Q(t)\} = \mathcal{O}\left(\frac{\log^{3.5+\alpha}\left(\frac{1}{\varepsilon}\right)}{\varepsilon^{3+2\alpha}}\right)$ for any $\alpha >0$. However, when obtaining \eqref{eqn:triv_bound}, we neglected the fact that the last two terms of \eqref{eqn:llate_stage_2} vanish as $H \rightarrow \infty$ under $1-\frac{q}{2}-\delta q < 0$. By leveraging this fact and combining the bounds obtained from Theorem~\ref{thm:main_thm_bef_opt}-1 for two different values of $q$, we obtain a tighter scaling $\frac{1}{H}\sum_{t = 1}^H \mathbb{E}\{Q(t)\} = \mathcal{O}\left(\frac{\log^{3.5}\left(\frac{1}{\varepsilon}\right)}{\varepsilon^{3}}\right)$ that holds uniformly over the horizon. Corollary~\ref{thm:main_thm} summarizes this optimized result.
\begin{corollary}\label{thm:main_thm}
     Using $C = 0.04$, and $\delta = 1/6$, for all $H \in \mathbb{N}$, Algorithm~\ref{algo:UCB_mesh} satisfies
        \begin{align}\label{eqn:part__1}
            \frac{1}{H}\sum_{t=1}^{H} \mathbb{E}\{Q(t)\} \leq1 + \frac{267}{\varepsilon} +\frac{16846843}{\varepsilon^3} + \frac{2675\log^{3.5}\left(\frac{1}{\varepsilon}\right)}{\varepsilon^3}. 
        \end{align} 
        and
        \begin{align}\label{eqn:part_2}
              \lim\sup_{H \to \infty}\frac{1}{H}\sum_{t = 1}^H \mathbb{E}\{Q(t)\} \leq  \frac{130}{\varepsilon}.
        \end{align}
    \begin{proof}
        The limiting time-average result in \eqref{eqn:part_2} simply follows by plugging $\delta = 1/6$ in Theorem~\ref{thm:main_thm_bef_opt}-2. We prove \eqref{eqn:part__1} in Appendix~\ref{app:main_thm}. 
    \end{proof}
\end{corollary}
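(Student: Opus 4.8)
\emph{Part 2} is immediate: with $\delta=1/6$ one has $\tfrac{2-4\delta}{1+2\delta}=\tfrac{4/3}{4/3}=1$, so \eqref{eqn:limit} of Theorem~\ref{thm:main_thm_bef_opt} gives $\limsup_{H\to\infty}\tfrac1H\sum_{t=1}^{H}\mathbb{E}\{Q(t)\}\le 65\cdot 2/\varepsilon=130/\varepsilon$. The work is in \emph{Part 1}. I would start from the finite-time bound \eqref{eqn:llate_stage_2} of Theorem~\ref{thm:main_thm_bef_opt}-1 and substitute $\delta=1/6$, $C=0.04$; then $1-2\delta=\tfrac23$ (so $\tfrac{2}{1-2\delta}=3$), $7-2\delta=\tfrac{20}{3}$, $1-\tfrac q2-\delta q=1-\tfrac{2q}{3}$, and, since $p=q/(q-1)$, $\tfrac{2}{p-1}=2(q-1)$ and $\tfrac{p+1}{p-1}=2q-1$. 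The bound becomes, for every $\gamma>1$ and $q\in(1,2)$,
\begin{align}
\frac1H\sum_{t=1}^{H}\mathbb{E}\{Q(t)\}
&\le \frac{65\cdot 2^{2(q-1)}\gamma}{(\gamma-1)\varepsilon}
+\frac{\bigl(2^{2q-1}+2\bigr)\gamma^{3}}{(0.04)^{3}\varepsilon^{3}}
+1 \nonumber\\
&\quad +\frac{a(q,\gamma)\log^{q+2}(2H)\,H^{1-\frac{2q}{3}}}{\varepsilon^{2q}}
+\frac{b(q,\gamma)\log^{q+2}(2H)\,H^{1-q}}{\varepsilon^{2q}},\nonumber
\end{align}
with $a(q,\gamma)=\frac{2^{\frac{7q}{3}+3}(0.04)^{q}(20/3)^{q}\gamma^{2q}}{(\gamma-1)^{2q}(1-\frac q2)^{2}}$ and $b(q,\gamma)=\frac{2^{2q+3}(20/3)^{q}\gamma^{2q}}{(\gamma-1)^{2q}(1-\frac q2)^{2}}$, both bounded while $q$ stays in a compact subset of $(1,2)$ and $\gamma$ in a compact subset of $(1,\infty)$. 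Since this holds for every admissible $(q,\gamma)$, and these may depend on $H$ and $\varepsilon$, the plan is to fix a good $\gamma$ and, for each $H$, take the smaller of two instances of the bound.

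\emph{The two regimes}, split at $H=\varepsilon^{-3}$. For $1\le H<\varepsilon^{-3}$ I would either invoke Lemma~\ref{eqn:det_queue_bound} ($Q(t)\le t-1$, hence $\tfrac1H\sum_{t=1}^{H}\mathbb{E}\{Q(t)\}\le\tfrac{H-1}{2}<\tfrac12\varepsilon^{-3}$), or use the displayed bound with a value $q_{1}\in(1,\tfrac32)$: there $H^{1-\frac{2q_1}{3}}\le(\varepsilon^{-3})^{1-\frac{2q_1}{3}}=\varepsilon^{2q_1-3}$ cancels $\varepsilon^{-2q_1}$ down to $\varepsilon^{-3}$, and $\log^{q_1+2}(2H)\le\log^{q_1+2}(2\varepsilon^{-3})=O(\log^{3.5}(1/\varepsilon))$. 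For $H\ge\varepsilon^{-3}$ I would use the displayed bound with $q_{2}=\tfrac32+\Theta(1/\log(1/\varepsilon))$: then $1-\tfrac{2q_2}{3}<0$ forces $\log^{q_2+2}(2H)H^{1-\frac{2q_2}{3}}$ and $\log^{q_2+2}(2H)H^{1-q_2}$ to be bounded over $H\ge1$, with suprema $O(\log^{3.5}(1/\varepsilon))$ and $O(1)$; meanwhile $\varepsilon^{-2q_2}=\varepsilon^{-3}\varepsilon^{-\Theta(1/\log(1/\varepsilon))}=\Theta(\varepsilon^{-3})$, and $(1-\tfrac{q_2}{2})^{-2},a(q_2,\gamma),b(q_2,\gamma)$ stay $O(1)$ because $q_2$ is bounded away from $2$. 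Taking the pointwise minimum over $H$ of these two bounds — the ``two values of $q$'' device described after Theorem~\ref{thm:main_thm_bef_opt} — yields \eqref{eqn:part__1}: the $\log^{3.5}(1/\varepsilon)/\varepsilon^{3}$ term comes from the $H\ge\varepsilon^{-3}$ regime and the log-free $\varepsilon^{-3}$ term from the rest. The explicit constants $267,\,16846843,\,2675$ then come out of a one-dimensional numerical optimization over $\gamma$ (and the precise $q_1,q_2$), carried out in Appendix~\ref{app:main_thm}.

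\emph{Main obstacle.} The delicate point is choosing $q_2$: it must be pulled toward $\tfrac32$ so $\varepsilon^{-2q_2}$ stays within a constant of $\varepsilon^{-3}$, yet kept far enough away that $(1-\tfrac{2q_2}{3})^{-1}$ — which controls $\sup_{H}\log^{q_2+2}(2H)H^{1-\frac{2q_2}{3}}$ — and $(1-\tfrac{q_2}{2})^{-2}$ do not blow up. A single \emph{fixed} $q$ only gives $\log^{3.5+\alpha}(1/\varepsilon)/\varepsilon^{3+2\alpha}$ with a prefactor exploding as $\alpha\downarrow0$, which is exactly why one splits the horizon and combines two values of $q$ to pin the exponent of $\log(1/\varepsilon)$ at $3.5$. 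After the choices are made, the remaining work is careful but routine bookkeeping of constants.
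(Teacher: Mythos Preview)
Your approach is correct and would yield the stated order $\log^{3.5}(1/\varepsilon)/\varepsilon^{3}$, but it differs from the paper's argument in several structural ways. The paper splits at $n_{\mathrm{stage}}=e^{\alpha}/(2\varepsilon^{6})$ with $\alpha=26$ (so a threshold of order $\varepsilon^{-6}$, not $\varepsilon^{-3}$), and uses \emph{fixed} $q$-values in both regimes: $q=3/2$ for $H\le n_{\mathrm{stage}}$ and $q=1.81$ for $H\ge n_{\mathrm{stage}}$. With $q=3/2$ the exponent $1-\tfrac{2q}{3}$ vanishes, so the fourth term is just $\log^{3.5}(2H)/\varepsilon^{3}$ and is bounded via $\log(2H)\le \alpha+6\log(1/\varepsilon)$; this is where the $\log^{3.5}(1/\varepsilon)$ factor appears in the paper, i.e.\ in the \emph{small}-$H$ regime, opposite to your attribution. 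For $H\ge n_{\mathrm{stage}}$ they keep $q=1.81$ fixed, write $H^{1-2q/3}=H^{-(2q-3)/6}\cdot H^{-(2q-3)/6}$, use $H\ge n_{\mathrm{stage}}$ on one factor to convert $\varepsilon^{-2q}$ into $\varepsilon^{-3}$ (the surplus power $\varepsilon^{-(2q-3)}$ is killed by $e^{-\alpha(2q-3)/6}$, which is why $\alpha$ is taken large), and bound the remaining $\log^{q+2}(2H)/H^{(2q-3)/6}$ by a pure constant via the calculus lemma $\sup_{x\ge1}\log^{a}(bx)/x^{c}\le\max\{\log^{a}b,\,b^{c}(a/(ce))^{a}\}$. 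So the large-$H$ regime in the paper is log-free.

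Your route---an $\varepsilon$-dependent $q_{2}=\tfrac32+\Theta(1/\log(1/\varepsilon))$ for large $H$---is perfectly valid and makes the mechanism that pins the log exponent at $3.5$ transparent (the sup of $\log^{q_2+2}(2H)H^{1-2q_2/3}$ scales like $((q_2{+}2)/(2q_2/3{-}1))^{q_2+2}\sim\log^{3.5}(1/\varepsilon)$). The trade-off is that, because $q_{2}$ varies with $\varepsilon$, all prefactors $a(q_2,\gamma)$, $b(q_2,\gamma)$, $2^{2(q_2-1)}$, $(1-q_2/2)^{-2}$ and even $\log^{q_2+2}$ carry implicit $\varepsilon$-dependence that must be tracked to land on the specific constants $267,\,16846843,\,2675$; the paper's fixed-$q$ scheme keeps every coefficient numerical from the outset, which is why a single plug-in $(C,\gamma,q,\alpha)=(0.04,4,1.81,26)$ suffices for the final bookkeeping.
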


\subsection{Proof of Theorem~\ref{thm:main_thm_bef_opt}}\label{sec:analysis}

The goal of this section is to prove Theorem~\ref{thm:main_thm_bef_opt}. First, fix $\gamma,p,q \in \mathbb{R}$ such that $\gamma > 1$, $q \in (1,2)$, and $1/p+1/q = 1$. These are the variables appearing in \eqref{eqn:llate_stage_2}. Define
\begin{align}\label{eqn:b_def}
    b = \min\left\{l \in \mathbb{N}: d_l \geq \gamma/\varepsilon \right\}.
\end{align}
where $d_l$ defined in \eqref{eqn:K_l} is the number of \emph{rate levels} in phase $l$. We have the following lemma that bounds the number of time slots to reach phase $b$.
\begin{lemma}\label{lemma:to_b}
We have
\begin{align}
    T^{\text{sum}}_b <  2\left(\frac{\gamma}{\varepsilon C}\right)^{\frac{2}{1-2\delta}}, \nonumber
\end{align}
where $T_l^{\text{sum}}$ is defined in \eqref{eqn:t_l_sum}.
\begin{proof}
Notice that we can assume $b>1$ ($b = 1$ is trivial since $T^{\text{sum}}_1 = 0$). 
We have that
\begin{align}
    CT_{b-1}^{\left(\frac{1}{2}-\delta\right)} \leq d_{b-1} < \frac{\gamma}{\varepsilon} \nonumber
\end{align}
where the first inequality follows from the definition of $d_l$ in \eqref{eqn:K_l}, and the second inequality follows from the definition of $b$ in \eqref{eqn:b_def}. This gives
\begin{align}
    T_{b-1} \leq \left(\frac{\gamma}{\varepsilon C}\right)^{\frac{2}{1-2\delta}}. \nonumber
\end{align}
Hence,
\begin{align}
    T^{\text{sum}}_b =  \sum_{\tau = 1}^{b-1}T_{\tau}= \sum_{\tau = 1}^{b-1}2^{\tau+2} \leq 2^{b+2}  = 2T_{b-1} <  2\left(\frac{\gamma}{\varepsilon C}\right)^{\frac{2}{1-2\delta}}. \nonumber
\end{align}
Hence, we are done.
    \end{proof}
\end{lemma}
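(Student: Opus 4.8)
The plan is to use the minimality of $b$ together with the explicit geometric form $T_l = 2^{l+2}$. First I would record that $b$ is finite (so the statement is non-vacuous): since $T_l = 2^{l+2}\to\infty$ and $1/2-\delta>0$, we have $d_l = \lceil C T_l^{1/2-\delta}\rceil \to \infty$, hence $\{l\in\mathbb{N}: d_l\ge \gamma/\varepsilon\}$ is nonempty. If $b=1$ the claim is immediate because $T^{\text{sum}}_1 = 0$ by convention, so from now on assume $b>1$.

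The key step is that, by minimality of $b$, phase $b-1$ has too few rate levels: $d_{b-1} < \gamma/\varepsilon$. Since $d_{b-1} = \lceil C T_{b-1}^{1/2-\delta}\rceil \ge C T_{b-1}^{1/2-\delta}$ by \eqref{eqn:K_l}, this gives $C T_{b-1}^{1/2-\delta} < \gamma/\varepsilon$. Because $1/2-\delta>0$ (this is where $\delta\in(0,1/2)$ is used), raising both sides to the power $\frac{1}{1/2-\delta} = \frac{2}{1-2\delta}>0$ preserves the inequality and yields $T_{b-1} < \bigl(\gamma/(\varepsilon C)\bigr)^{2/(1-2\delta)}$.

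It then remains to bound the elapsed time $T^{\text{sum}}_b$ by a constant multiple of the last phase length. Summing the geometric series, $T^{\text{sum}}_b = \sum_{\tau=1}^{b-1} 2^{\tau+2} = 2^{b+2}-8 < 2^{b+2} = 2\cdot 2^{b+1} = 2 T_{b-1}$. Chaining with the previous display gives $T^{\text{sum}}_b < 2 T_{b-1} < 2\bigl(\gamma/(\varepsilon C)\bigr)^{2/(1-2\delta)}$, which is exactly the claimed bound.

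I do not expect a genuine obstacle here: the argument is the standard ``doubling-trick'' observation that a sum of geometrically growing phase lengths is dominated by twice the last term, combined with a one-line inversion of the map $T\mapsto C T^{1/2-\delta}$. The only points needing a touch of care are checking $b<\infty$ so the lemma is meaningful (and the harmless boundary case $b=1$), and tracking the inequality direction when inverting the exponent $1/2-\delta$, which relies on $\delta<1/2$.
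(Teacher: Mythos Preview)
Your proof is correct and follows essentially the same approach as the paper: use the minimality of $b$ to get $C T_{b-1}^{1/2-\delta}\le d_{b-1}<\gamma/\varepsilon$, invert the exponent, and then bound the geometric sum $T^{\text{sum}}_b$ by $2T_{b-1}$. Your version is in fact slightly cleaner in tracking the strict inequality (via $2^{b+2}-8<2^{b+2}$) and in noting that $b<\infty$.
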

The following lemma serves as the building block in proving both parts of Theorem~\ref{thm:main_thm_bef_opt}. We first state the lemma. In Section~\ref{sec:main_thm_1}, we prove Theorem~\ref{thm:main_thm_bef_opt}-1 using the lemma. In Section~\ref{sec:main_thm_2}, we prove Theorem~\ref{thm:main_thm_bef_opt}-2 using the lemma. Finally, in Section~\ref{sec:main_lemma}, we prove the lemma.
\begin{lemma}\label{lemma:unifying_lemma}
    Consider $I \in \mathbb{N}$. Running Algorithm~\ref{algo:UCB_mesh} with parameters $\delta \in (0,1/2)$ and $C \in (0,1)$ we have
    \begin{align}
        &\sum_{t=1}^{I} \mathbb{E}\{Q(t)\}   \nonumber\\&\leq \frac{65\times 2^{\frac{2}{p-1}}\gamma I}{(\gamma-1) \varepsilon}+2^{\frac{2}{p-1}}(T_b^{\text{sum}})^2+\frac{2^{\frac{2}{p-1}}\gamma}{(\gamma-1)\varepsilon}\left[\mathbb{E}\{Q^2(T^{\text{sum}}_b+1)\}-\mathbb{E}\{Q^2( I+1)\}\right]_+ \nonumber\\&\ \ \ \  +\frac{2^{\frac{5q}{2}-\delta q+3}C^q\gamma^{2q}(7-2\delta)^q \log^{q+2}(2I)I^{2-\frac{q}{2}-\delta q}}{(\gamma-1)^{2q}\varepsilon^{2q}(1-(q/2))^2} +\frac{2^{2q+3}\gamma^{2q}(7-2\delta)^q \log^{q+2}(2I)I^{2-q}}{(\gamma-1)^{2q}\varepsilon^{2q}(1-(q/2))^2} \nonumber
    \end{align}
    for any $I, p,q,\gamma$ satisfying $I \in \mathbb{N}$, $q \in (1,2)$, $1/p+1/q = 1$, and $\gamma >1$, where $T_l^{\text{sum}}$ is defined in \eqref{eqn:t_l_sum}
\end{lemma}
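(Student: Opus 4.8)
The plan is to use a Lyapunov drift argument on $Q^2(t)$, summed over the horizon $[1:I]$, and to split the analysis at the phase boundary $T^{\text{sum}}_b$ where the discretization finally becomes fine enough ($d_l \ge \gamma/\varepsilon$) for a near-optimal rate level to exist by Corollary~\ref{corr:main_corr}. For the early phases $t \le T^{\text{sum}}_b$, I would simply invoke the deterministic bound $Q(t) \le t-1$ from Lemma~\ref{eqn:det_queue_bound}, so that $\sum_{t \le T^{\text{sum}}_b}\mathbb{E}\{Q(t)\} \le (T^{\text{sum}}_b)^2$, which explains the $2^{2/(p-1)}(T^{\text{sum}}_b)^2$ term (the factor $2^{2/(p-1)}$ presumably arising from a Young/Hölder split applied uniformly). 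Lemma~\ref{lemma:to_b} is what ultimately converts this into the $\varepsilon^{-2/(1-2\delta)}$ contribution seen in the theorem, though inside this lemma it stays written as $(T^{\text{sum}}_b)^2$.

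For the phases $l \ge b$, the core step is a per-phase drift inequality. Within phase $l$, writing $u$ for the in-phase time index, the queueing equation~\eqref{eqn:queing_equation} gives $Q^2(T^{\text{sum}}_l+u+1) \le Q^2(T^{\text{sum}}_l+u) + 1 + 2Q(T^{\text{sum}}_l+u)\big(A(T^{\text{sum}}_l+u) - S_{l,K_l(u)}(u)\big)$, where I use $(A-S)^2 \le 1$. Taking expectations and using that $A$ is independent of the past with mean $\lambda$, the drift is controlled by $-2\,\mathbb{E}\{Q(T^{\text{sum}}_l+u)(\mu_{l,K_l(u)} - \lambda)\}$ plus lower-order terms. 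Now I fix a near-optimal level $k^* \in \mathcal{K}_l$ with $\mu_{l,k^*} - \lambda \ge \frac{\gamma-1}{\gamma}\varepsilon$ (Corollary~\ref{corr:main_corr}, valid since $l \ge b$), and decompose $\mu_{l,K_l(u)} - \lambda = (\mu_{l,k^*} - \lambda) - (\mu_{l,k^*} - \mu_{l,K_l(u)})$. The first piece gives the favorable negative drift $-\frac{2(\gamma-1)}{\gamma}\varepsilon\,\mathbb{E}\{Q(T^{\text{sum}}_l+u)\}$; the second piece is the instantaneous UCB regret $\mathbb{E}\{Q(T^{\text{sum}}_l+u)(\mu_{l,k^*} - \mu_{l,K_l(u)})\}$, which I would bound using $Q(T^{\text{sum}}_l+u) \le T^{\text{sum}}_l + u \le t$ (Lemma~\ref{eqn:det_queue_bound}) together with the standard UCB1 regret analysis of~\cite{Auer2002FinitetimeAO} adapted to the confidence radius in~\eqref{eq:ucb}. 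Summing the drift telescopically over $u$ within phase $l$ and then over phases $l$ from $b$ up to the phase containing $I$, and rearranging to isolate $\sum_t \mathbb{E}\{Q(t)\}$, yields the $\frac{65 \cdot 2^{2/(p-1)}\gamma I}{(\gamma-1)\varepsilon}$ leading term (absorbing the $1$-per-slot drift and the telescoped UCB regret into the $65$), the boundary term $\frac{2^{2/(p-1)}\gamma}{(\gamma-1)\varepsilon}[\mathbb{E}\{Q^2(T^{\text{sum}}_b+1)\} - \mathbb{E}\{Q^2(I+1)\}]_+$ from telescoping $Q^2$, and the two $\log^{q+2}$ terms from the UCB regret sum.

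The main obstacle I anticipate is the UCB regret accounting across a growing action set. Unlike a single-stage UCB run, here each phase $l$ restarts UCB1 on $d_l$ arms with a horizon-dependent confidence width $\sqrt{(7-2\delta)\log(T_l)/(4\max\{1,N_{l,k}(u)\})}$, so the per-phase pseudo-regret scales like $d_l \log(T_l)$ (from the $\sum_{k \ne k^*} \log(T_l)/\Delta_k$ bound, but here weighted by $Q \le t$ rather than a constant). The delicate part is that the queue weight $Q(t)$ multiplying each suboptimal pull can be as large as $t$, so I must be careful to pair the $d_l \approx CT_l^{1/2-\delta}$ arms against the $\sqrt{T_l}$-type exploration budget and show the product, summed over phases and divided by $I$, is $o(I)$ (indeed $O(\log^{q+2}(I)\,I^{1-q/2-\delta q})$ after a Hölder step with exponents $p,q$ — this is exactly where $p$ and $q$ enter, trading off the $2^{2/(p-1)}$ constant against the power of $\log$ and of $I$). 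Getting the constants $65$, $2^{5q/2-\delta q+3}$, and $2^{2q+3}$ exactly right, and correctly handling the phase that straddles $I$ (so that the per-phase sums only run to the relevant in-phase index), will require care but should be routine once the drift decomposition and the Hölder split are set up; the genuinely non-obvious ingredient is recognizing that the $d_l \ll \sqrt{T_l}$ growth condition is precisely what makes the regret term vanish in the time-average.
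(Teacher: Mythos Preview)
Your high-level plan is on the right track---Lyapunov drift on $Q^2$, splitting at $T^{\text{sum}}_b$, and a H\"older step with exponents $p,q$---but there is a genuine gap in how you propose to close the argument after H\"older.

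The problematic step is your treatment of the $Q$-weighted term. You suggest bounding $\mathbb{E}\{Q(t)(\mu_{l,k^*}-\mu_{l,K_l(u)})\}$ via $Q(t)\le t$ together with the standard UCB regret bound. That is too crude: per-phase regret is $O(\sqrt{d_lT_l\log T_l})$, and multiplying by $Q\le t\approx T_l$ and summing over phases gives a contribution growing like $I^{3/2}$, which does not yield the stated exponents. The paper instead works on a \emph{good event} (Lemma~\ref{lemma:good_event_lemma_app}) to replace the gap $\mu_{l,k^*}-\mu_{l,K_l(u)}$ by (twice) the confidence width $\sqrt{(7-2\delta)\log T_l/(1\vee N_{l,K_l(u)}(u-1))}$, so after summing the object to control is $\sum_t Q(t)\,h(t)$ with $h(t)$ the confidence width. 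H\"older gives $(\sum Q^p)^{1/p}(\sum h^q)^{1/q}$; the second factor is handled by a direct counting argument (Lemma~\ref{lemma:bounding_ucb_slack_holder}). What you are missing is how to control $(\sum_{t\le I} Q(t)^p)^{1/p}$: the paper exploits the bounded-increment property $|Q(t+1)-Q(t)|\le 1$, $Q(1)=0$, to prove $\sum_t Q(t)^p \le 2^{(p-1)/2}\bigl(\sum_t Q(t)\bigr)^{(p+1)/2}$ (Lemma~\ref{lemma:q_pow_bnd}, adapted from~\cite{Huang2024}). Substituting back produces a \emph{self-bounding} inequality
\[
\sum_t\mathbb{E}\{Q(t)\}\;\le\; a \;+\; b\Bigl(\sum_t\mathbb{E}\{Q(t)\}\Bigr)^{(p+1)/(2p)},
\]
which is then resolved by the elementary Lemma~\ref{lemma:abX_lemma} ($X^d\le a+bX^{d-1}\Rightarrow X^d\le 2^{d-1}a+2^{d-1}b^d$, here with $d=2p/(p-1)$). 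The universal factor $2^{2/(p-1)}$ in front of every term in the lemma's statement is exactly this $2^{d-1}=2^{(p+1)/(p-1)}$ (after absorbing a factor of $2$), not a Young/H\"older constant as you guessed; and the constant $65$ collects the per-slot $1/2$ together with the bad-event contribution $8T^{\text{sum}}_{l+1}/T_l$ summed via Lemma~\ref{lemma:init_lemma}, \emph{not} the UCB regret---the latter becomes the two $\log^{q+2}$ terms through the $b^d$ part of Lemma~\ref{lemma:abX_lemma}.

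In short: replace ``$Q\le t$ + standard UCB regret'' by ``good-event confidence-width bound + H\"older + Lemma~\ref{lemma:q_pow_bnd} + self-bounding via Lemma~\ref{lemma:abX_lemma}.'' Without the $\sum Q^p\lesssim(\sum Q)^{(p+1)/2}$ step and the subsequent self-bounding resolution, the argument does not close.
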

\subsubsection{Proof of Theorem~\ref{thm:main_thm_bef_opt}-1}\label{sec:main_thm_1}
First, notice that if $H \leq T^{\text{sum}}_b$, then
\begin{align}
    \sum_{t=1}^H Q(t) \leq \sum_{t=1}^H(t-1) \leq H^2 \leq HT^{\text{sum}}_b \leq  2H\left(\frac{\gamma}{\varepsilon C}\right)^{\frac{2}{1-2\delta}} \nonumber
\end{align}
where the first inequality follows from Lemma~\ref{eqn:det_queue_bound} and the last inequality follows from Lemma~\ref{lemma:to_b}.
Hence, Theorem~\ref{thm:main_thm_bef_opt}-1 trivially holds. Therefore, for the rest of this section, let us assume $H \geq T^{\text{sum}}_b+1$. Let us define 
\begin{align}\label{eqn:til_H_def}
    \tilde{H} = \max\{h \in [T^{\text{sum}}_b+1:H]:\mathbb{E}\{Q^2(h)\} \geq \mathbb{E}\{Q^2(T^{\text{sum}}_b+1)\}   \}-1.
\end{align}
Notice that the above definition is valid since we assumed $H \geq T^{\text{sum}}_b+1$ and $T^{\text{sum}}_b+1 \in \{h \in [T^{\text{sum}}_b+1:H]:\mathbb{E}\{Q^2(h)\} \geq \mathbb{E}\{Q^2(T^{\text{sum}}_b+1)\} \}$.  

From the definition of $\tilde{H}$, for all $h \in [\tilde{H}+2,H]$,  we have
\begin{align}\label{eqn:q_h}
    \mathbb{E}\{Q(h)\} \leq \sqrt{\mathbb{E}\{Q^2(h)\}} <_{(a)} \sqrt{\mathbb{E}\{Q^2(T^{\text{sum}}_b+1)\}} \leq T^{\text{sum}}_b
\end{align}
where (a) follows from the definition of $\tilde{H}$ in \eqref{eqn:til_H_def}, and the last inequality follows from Lemma~\ref{eqn:det_queue_bound}. This gives
\begin{align}\label{eqn:mp_eq}
    \sum_{t=\tilde{H}+1}^H \mathbb{E}\{Q(h)\} &=  \mathbb{E}\{Q(\tilde{H}+1)\}+\sum_{t=\tilde{H}+2}^H \mathbb{E}\{Q(h)\}  \leq_{(a)} \tilde{H}+HT^{\text{sum}}_b \leq H(T^{\text{sum}}_b+1) \nonumber\\&\leq 2H\left(\frac{\gamma}{\varepsilon C}\right)^{\frac{2}{1-2\delta}} +H
\end{align}
where (a) follows from Lemma~\ref{eqn:det_queue_bound} and \eqref{eqn:q_h} and the last inequality follows from Lemma~\ref{lemma:to_b}.

Notice that from the definition of $\tilde{H}$, we have $\tilde{H} \in [0:H-1]$. If $\tilde{H} = 0$, from \eqref{eqn:mp_eq}, we trivially have Theorem~\ref{thm:main_thm_bef_opt}-1. Hence, we assume $\tilde{H} \geq 1$. Using $I = \tilde{H}$ in Lemma~\ref{lemma:unifying_lemma}, we have
\begin{align}\label{eqn:added_new}
      &\sum_{t=1}^{\tilde{H}} \mathbb{E}\{Q(t)\} \nonumber\\ &\leq \frac{65\times 2^{\frac{2}{p-1}}\gamma \tilde{H}}{(\gamma-1) \varepsilon}+2^{\frac{2}{p-1}}(T_b^{\text{sum}})^2+\frac{2^{\frac{2}{p-1}}\gamma}{(\gamma-1)\varepsilon}\left[\mathbb{E}\{Q^2(T^{\text{sum}}_b+1)\}-\mathbb{E}\{Q^2( \tilde{H}+1)\}\right]_+ \nonumber\\&\ \   +\frac{2^{\frac{5q}{2}-\delta q+3}C^q\gamma^{2q}(7-2\delta)^q \log^{q+2}(2\tilde{H})\tilde{H}^{2-\frac{q}{2}-\delta q}}{(\gamma-1)^{2q}\varepsilon^{2q}(1-(q/2))^2} +\frac{2^{2q+3}\gamma^{2q}(7-2\delta)^q \log^{q+2}(2\tilde{H})\tilde{H}^{2-q}}{(\gamma-1)^{2q}\varepsilon^{2q}(1-(q/2))^2}\nonumber\\& =_{(a)} \frac{65\times 2^{\frac{2}{p-1}}\gamma \tilde{H}}{(\gamma-1) \varepsilon}+2^{\frac{2}{p-1}}(T_b^{\text{sum}})^2 +\frac{2^{\frac{5q}{2}-\delta q+3}C^q\gamma^{2q}(7-2\delta)^q \log^{q+2}(2\tilde{H})\tilde{H}^{2-\frac{q}{2}-\delta q}}{(\gamma-1)^{2q}\varepsilon^{2q}(1-(q/2))^2} \nonumber\\&\ \ +\frac{2^{2q+3}\gamma^{2q}(7-2\delta)^q \log^{q+2}(2\tilde{H})\tilde{H}^{2-q}}{(\gamma-1)^{2q}\varepsilon^{2q}(1-(q/2))^2}\nonumber\\& \leq_{(b)} \frac{65\times 2^{\frac{2}{p-1}}\gamma H}{(\gamma-1) \varepsilon}+2^{\frac{2}{p-1}}HT_b^{\text{sum}} +\frac{2^{\frac{5q}{2}-\delta q+3}C^q\gamma^{2q}(7-2\delta)^q \log^{q+2}(2H)H^{2-\frac{q}{2}-\delta q}}{(\gamma-1)^{2q}\varepsilon^{2q}(1-(q/2))^2} \nonumber\\&\ \ +\frac{2^{2q+3}\gamma^{2q}(7-2\delta)^q \log^{q+2}(2H)H^{2-q}}{(\gamma-1)^{2q}\varepsilon^{2q}(1-(q/2))^2}
\end{align}
where (a) follows since from the definition of $\tilde{H}$, we have $\mathbb{E}\{Q^2(T^{\text{sum}}_b+1)\}\leq \mathbb{E}\{Q^2( \tilde{H}+1)\}$, and (b) follows since $H \geq \tilde{H} \geq T^{\text{sum}}_b$.
Summing \eqref{eqn:added_new} with \eqref{eqn:mp_eq}, we have
\begin{align}\label{eqn:before_main_thm}
&\sum_{t=1}^{H} \mathbb{E}\{Q(t)\} \nonumber\\&\leq_{(a)} \frac{65\times 2^{\frac{2}{p-1}}\gamma H}{(\gamma-1) \varepsilon}+2^{\frac{2}{p-1}}HT_b^{\text{sum}}+ 2H\left(\frac{\gamma}{\varepsilon C}\right)^{\frac{2}{1-2\delta}}+H \nonumber\\&\ \ \ \ +\frac{2^{\frac{5q}{2}-\delta q+3}C^q\gamma^{2q}(7-2\delta)^q \log^{q+2}(2H)H^{2-\frac{q}{2}-\delta q}}{(\gamma-1)^{2q}\varepsilon^{2q}(1-(q/2))^2} +\frac{2^{2q+3}\gamma^{2q}(7-2\delta)^q \log^{q+2}(2H)H^{2-q}}{(\gamma-1)^{2q}\varepsilon^{2q}(1-(q/2))^2} \nonumber\\&\leq_{(b)} \frac{65\times 2^{\frac{2}{p-1}}\gamma H}{(\gamma-1) \varepsilon}+\frac{\left(2^{\frac{p+1}{p-1}}+2\right)\gamma^{\frac{2}{1-2\delta}}H}{\varepsilon^{\frac{2}{1-2\delta}}C^{\frac{2}{1-2\delta}}}+H +\frac{2^{\frac{5q}{2}-\delta q+3}C^q\gamma^{2q}(7-2\delta)^q \log^{q+2}(2H)H^{2-\frac{q}{2}-\delta q}}{(\gamma-1)^{2q}\varepsilon^{2q}(1-(q/2))^2} \nonumber\\&\ \ \ \ +\frac{2^{2q+3}\gamma^{2q}(7-2\delta)^q \log^{q+2}(2H)H^{2-q}}{(\gamma-1)^{2q}\varepsilon^{2q}(1-(q/2))^2} 
\end{align}
where (a) follows by adding~\eqref{eqn:added_new} with \eqref{eqn:mp_eq}, and (b) follows from Lemma~\ref{lemma:to_b}. Dividing both sides by $H$, we get Theorem~\ref{thm:main_thm_bef_opt}-1.
\subsubsection{Proof of Theorem~\ref{thm:main_thm_bef_opt}-2}\label{sec:main_thm_2}
Using $I = H$ in in Lemma~\ref{lemma:unifying_lemma}, and dividing both sides by $H$ we have
\begin{align}
      &\frac{1}{H}\sum_{t=1}^{H} \mathbb{E}\{Q(t)\}  \nonumber\\&\leq  \frac{65\times 2^{\frac{2}{p-1}}\gamma}{(\gamma-1) \varepsilon}+\frac{2^{\frac{2}{p-1}}(T_b^{\text{sum}})^2}{H}+\frac{2^{\frac{2}{p-1}}\gamma}{(\gamma-1)\varepsilon H}\left[\mathbb{E}\{Q^2(T^{\text{sum}}_b+1)\}-\mathbb{E}\{Q^2( I+1)\}\right]_+\nonumber\\&\ \ \ \  +\frac{2^{\frac{5q}{2}-\delta q+3}C^q\gamma^{2q}(7-2\delta)^q \log^{q+2}(2H)H^{1-\frac{q}{2}-\delta q}}{(\gamma-1)^{2q}\varepsilon^{2q}(1-(q/2))^2} +\frac{2^{2q+3}\gamma^{2q}(7-2\delta)^q \log^{q+2}(2H)H^{1-q}}{(\gamma-1)^{2q}\varepsilon^{2q}(1-(q/2))^2} \nonumber\\& \leq  \frac{65\times 2^{\frac{2}{p-1}}\gamma}{(\gamma-1) \varepsilon}+\frac{2^{\frac{2}{p-1}}(T_b^{\text{sum}})^2}{H}+\frac{2^{\frac{2}{p-1}}\gamma\mathbb{E}\{Q^2(T^{\text{sum}}_b+1)\}}{(\gamma-1)\varepsilon H}\nonumber\\&\ \ \ \  +\frac{2^{\frac{5q}{2}-\delta q+3}C^q\gamma^{2q}(7-2\delta)^q \log^{q+2}(2H)H^{1-\frac{q}{2}-\delta q}}{(\gamma-1)^{2q}\varepsilon^{2q}(1-(q/2))^2} +\frac{2^{2q+3}\gamma^{2q}(7-2\delta)^q \log^{q+2}(2H)H^{1-q}}{(\gamma-1)^{2q}\varepsilon^{2q}(1-(q/2))^2}\nonumber\\& \leq_{(a)}  \frac{65\times 2^{\frac{2}{p-1}}\gamma}{(\gamma-1) \varepsilon}+\frac{2^{\frac{2}{p-1}}(T_b^{\text{sum}})^2}{H}+\frac{2^{\frac{2}{p-1}}\gamma(T^{\text{sum}}_b)^2}{(\gamma-1)\varepsilon H}\nonumber\\&\ \ \ \  +\frac{2^{\frac{5q}{2}-\delta q+3}C^q\gamma^{2q}(7-2\delta)^q \log^{q+2}(2H)H^{1-\frac{q}{2}-\delta q}}{(\gamma-1)^{2q}\varepsilon^{2q}(1-(q/2))^2} +\frac{2^{2q+3}\gamma^{2q}(7-2\delta)^q \log^{q+2}(2H)H^{1-q}}{(\gamma-1)^{2q}\varepsilon^{2q}(1-(q/2))^2}\nonumber\\& \leq_{(b)}  \frac{65\times 2^{\frac{2}{p-1}}\gamma}{(\gamma-1) \varepsilon}+\frac{4}{H}\left(2^{\frac{2}{p-1}}+\frac{2^{\frac{2}{p-1}}\gamma}{(\gamma-1)\varepsilon}\right)\left(\frac{\gamma}{\varepsilon C}\right)^{\frac{4}{1-2\delta}}\nonumber\\&\ \ \ \  +\frac{2^{\frac{5q}{2}-\delta q+3}C^q\gamma^{2q}(7-2\delta)^q \log^{q+2}(2H)H^{1-\frac{q}{2}-\delta q}}{(\gamma-1)^{2q}\varepsilon^{2q}(1-(q/2))^2} +\frac{2^{2q+3}\gamma^{2q}(7-2\delta)^q \log^{q+2}(2H)H^{1-q}}{(\gamma-1)^{2q}\varepsilon^{2q}(1-(q/2))^2}\nonumber
\end{align}
where (a) follows from Lemma~\ref{eqn:det_queue_bound},  and (b) follows from Lemma~\ref{lemma:to_b}. Now, assume $q \in (1,2)$ is chosen such that $q> 1/(1/2+\delta)$ (recall that $\delta \in (0,1/2)$, so this choice is possible). Hence, we have $1-\frac{q}{2}-\delta q < 0$ and $1-q<0$. Hence, as $H \to \infty$, the last three terms of the above bound go to 0. This gives
\begin{align}
      &\lim_{H \to \infty}\frac{1}{H}\sum_{t=1}^{H} \mathbb{E}\{Q(t)\}   \leq  \frac{65\times 2^{2(q-1)}\gamma}{(\gamma-1) \varepsilon}.\nonumber
\end{align}
where we have used  $p = q/(q-1)$ (because $1/p+1/q = 1$).
Since the above holds for all $q \in \left(\frac{1}{1/2+\delta},2\right)$, and $\gamma > 1$, we have Theorem~\ref{thm:main_thm_bef_opt}-2.
\subsubsection{Proof of Lemma~\ref{lemma:unifying_lemma}}\label{sec:main_lemma}
First, notice that if $I \leq T^{\text{sum}}_b$, then
\begin{align}
    \sum_{t = 1} ^I Q(t) \leq \sum_{t = 1} ^I (t-1) \leq I^2 \leq (T_b^{\text{sum}})^2, \nonumber
\end{align} 
where the first inequality follows from Lemma~\ref{eqn:det_queue_bound}.
Hence, Lemma~\ref{lemma:unifying_lemma} holds in this case.  Furthermore, if $I \leq 8$, Lemma~\ref{lemma:unifying_lemma} trivially holds since $Q(t) \leq 7$ for all $t \in [I]$. Hence, we assume that $I \geq \max\{T^{\text{sum}}_b+1,9\}$.

We begin with the following lemma with respect to the specific values $T_l$ and $d_l$ defined in \eqref{eqn:t_l} and \eqref{eqn:K_l}.

\begin{lemma}\label{lemma:init_lemma}
   Consider $I \geq 9$. For the $T_l$ and $d_l$ defined in \eqref{eqn:t_l} and \eqref{eqn:K_l}, we have the following
    \begin{enumerate}
        \item $T_{a(I)} \leq 2I$
        \item $a(I) \leq \log_2(I)$
        \item $\sum_{n=1}^{a(I)} T^{\text{sum}}_n \leq 4I$
    \end{enumerate}
    where $a(t)$ defined in \eqref{eqn:a_t_def} is the phase to which time slot $t$ belongs, $T_l^{\text{sum}}$ is defined in \eqref{eqn:t_l_sum}.
    \begin{proof}
      See Appendix~\ref{app:init_lemma}
    \end{proof}
\end{lemma}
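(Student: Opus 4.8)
The plan is to reduce everything to two elementary facts: a closed form for the partial sums $T_l^{\text{sum}}$, and a usable characterization of the phase index $a(I)$. Since $T_i = 2^{i+2}$ by \eqref{eqn:t_l}, summing the geometric series in \eqref{eqn:t_l_sum} gives $T_l^{\text{sum}} = \sum_{i=1}^{l-1} 2^{i+2} = 2^{l+2} - 8 = T_l - 8$, which is consistent with the convention $T_1^{\text{sum}} = 0$. Note that none of the three claims involves $d_l$; they depend only on the sequence $\{T_l\}$ through $a(\cdot)$. Next, writing $l^\star := \min\{l \in \mathbb{N} : T_l^{\text{sum}} \geq I\}$ so that $a(I) = l^\star - 1$ by \eqref{eqn:a_t_def}, the fact that $T_1^{\text{sum}} = 0 < I$ forces $l^\star \geq 2$, and minimality of $l^\star$ then yields the key inequality $T_{a(I)}^{\text{sum}} = T_{l^\star-1}^{\text{sum}} < I$. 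This single inequality drives all three parts.

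For part 1, substitute the closed form: $T_{a(I)} = T_{a(I)}^{\text{sum}} + 8 < I + 8 \leq 2I$, where the last step uses $I \geq 8$ (implied by $I \geq 9$). For part 2, rewrite the same inequality as $2^{a(I)+2} - 8 = T_{a(I)}^{\text{sum}} < I$, hence $2^{a(I)+2} < I + 8 \leq 4I$, and taking base-$2$ logarithms gives $a(I) + 2 < \log_2 I + 2$, i.e. $a(I) < \log_2 I$, which is slightly stronger than claimed. For part 3, sum another geometric series using the closed form: $\sum_{n=1}^{a(I)} T_n^{\text{sum}} = \sum_{n=1}^{a(I)} (2^{n+2} - 8) = (2^{a(I)+3} - 8) - 8\,a(I) \leq 2^{a(I)+3} = 2\,T_{a(I)}$, and then apply part 1 to conclude $2\,T_{a(I)} \leq 4I$.

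I do not anticipate a genuine obstacle here: the entire argument is geometric-series bookkeeping. The only points requiring a little care are (i) confirming $l^\star \geq 2$ so that $T_{a(I)}^{\text{sum}}$ is an honest partial sum and the strict inequality $T_{a(I)}^{\text{sum}} < I$ is legitimate, and (ii) checking that the hypothesis $I \geq 9$ comfortably covers the small numerical thresholds that surface along the way (here $I \geq 8$ suffices for parts 1 and 3, and $I \geq 8/3$ for part 2). Once these are in place, each of the three bounds is a one-line estimate.
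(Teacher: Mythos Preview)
Your proof is correct and follows essentially the same geometric-series bookkeeping as the paper's own argument. The only cosmetic difference is that you work with the exact closed form $T_l^{\text{sum}} = 2^{l+2}-8$, whereas the paper uses the slightly weaker lower bound $T_l^{\text{sum}}\ge 2^{l+1}$ (valid for $l\ge 2$, which is why the paper separately checks $a(I)\ge 2$ from $I\ge 9$); both routes immediately yield all three inequalities.
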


Fix a phase $l \geq b$. For each $u \in [0:T_l-1]$, define the \emph{good event} $\mathcal{G}_{l}(u)$ as 
\begin{align}\label{eqn:good_event}
    \mathcal{G}_{l}(u) = \left\{ \mu_{l,k} \in \left[ \text{UCB}_{l,k}(u)- 2\sqrt{\frac{(7-2\delta)\log(T_l)}{4(1 \vee N_{l,k}(u))}},\text{UCB}_{l,k}(u)\right] \forall k \in \mathcal{K}_l\right\}
\end{align}
where $\text{UCB}_{l,k}(u)$ is defined in \eqref{eq:ucb}, and $\mu_{l,k}$ is defined in \eqref{eqn:mu_lk}.

We have the following lemma.
\begin{lemma}\label{lemma:good_event_lemma_app}
   Recall the definition of $b$ in \eqref{eqn:b_def}. Consider a phase $l \geq b$. For each $u \in [0:T_l - 1]$, we have that the event $\mathcal{G}_l(u)$ is independent of the history before phase $l$, and $\mathbb{P}\{\mathcal{G}_l(u)^c\} \leq \frac{4}{T_l}$.
\begin{proof}
See Appendix~\ref{app:good_event}
\end{proof}
\end{lemma}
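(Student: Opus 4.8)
The plan is to prove Lemma~\ref{lemma:good_event_lemma_app} by reducing the two-sided confidence statement to a standard Hoeffding bound on the empirical means $\bar\mu_{l,k}(u)$, applied uniformly over all arms $k \in \mathcal{K}_l$ and all possible sample counts.

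First I would establish the \emph{independence} claim. Within phase $l$, the algorithm resets all counters and empirical means to zero (see the initialization step of Algorithm~\ref{algo:UCB_mesh}) and makes all decisions $K_l(u)$ based only on quantities $\bar\mu_{l,k}(\cdot)$, $N_{l,k}(\cdot)$, $\text{UCB}_{l,k}(\cdot)$ computed from the service samples $S_{l,k}(\cdot)$ observed \emph{within phase $l$}. By Assumption~\textbf{A1}, the channel capacities $\{C(t)\}$ (and hence the service variables $S_{l,k}(u)$ defined in \eqref{eqn:slt_klt_eqn}) are i.i.d.\ across all time slots and independent of everything outside phase $l$. Therefore $\mathcal{G}_l(u)$, which is a function only of $\{\bar\mu_{l,k}(v), N_{l,k}(v): v \le u, k \in \mathcal{K}_l\}$, is a function of the phase-$l$ service samples alone, and is thus independent of the history before phase $l$.

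For the probability bound, the key step is a union bound over arms and over the number of pulls. Fix $k \in \mathcal{K}_l$. Conditioned on $N_{l,k}(u) = n$ for some $n \ge 1$, the empirical mean $\bar\mu_{l,k}(u)$ is an average of $n$ i.i.d.\ bounded random variables in $[0,1]$ with mean $\mu_{l,k}$; by Hoeffding's inequality, for any $x>0$, $\mathbb{P}\{|\bar\mu_{l,k}(u) - \mu_{l,k}| \ge x \mid N_{l,k}(u)=n\} \le 2\exp(-2nx^2)$. The event $\mathcal{G}_l(u)^c$ occurs only if for some $k$, $|\bar\mu_{l,k}(u) - \mu_{l,k}| > \sqrt{(7-2\delta)\log(T_l)/(4 N_{l,k}(u))}$ (the good event says $\mu_{l,k}$ lies in an interval of width $2\sqrt{(7-2\delta)\log(T_l)/(4(1\vee N_{l,k}(u)))}$ whose upper endpoint is $\text{UCB}_{l,k}(u) = \bar\mu_{l,k}(u) + \sqrt{(7-2\delta)\log(T_l)/(4(1\vee N_{l,k}(u)))}$, which is precisely the statement that $\bar\mu_{l,k}(u)$ is within that half-width of $\mu_{l,k}$; note that when $N_{l,k}(u)=0$ the deviation is deterministically $0 \le \mu_{l,k}$ and the bound $\mu_{l,k}\le\sqrt{(7-2\delta)\log(T_l)/4}$ is automatic since $\mu_{l,k}\le 1$ and $(7-2\delta)\log(T_l)/4 \ge 1$ for $T_l \ge 2$, so only $n\ge 1$ contributes). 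Plugging $x = \sqrt{(7-2\delta)\log(T_l)/(4n)}$ into Hoeffding gives a per-$(k,n)$ failure probability of at most $2\exp(-(7-2\delta)\log(T_l)/2) = 2 T_l^{-(7-2\delta)/2}$. Union-bounding over $k \in \mathcal{K}_l$ (at most $d_l$ arms) and over $n \in [1:T_l]$ (the only possible values of $N_{l,k}(u)$ within a phase of length $T_l$), the failure probability is at most $2 d_l T_l \cdot T_l^{-(7-2\delta)/2}$; since $d_l \le 2 T_l^{1/2-\delta} \le 2 T_l$ (using $T_l \ge 1$), this is at most $4 T_l^{3-(7-2\delta)/2} = 4 T_l^{-(1-2\delta)/2} \le 4/T_l$ provided $(1-2\delta)/2 \ge \dots$ — here I would instead keep the exponent explicit and verify $2 + 1 - (7-2\delta)/2 = -(1-2\delta)/2 \le -1$ fails in general, so the cleaner route is to note $d_l T_l \le T_l^2$ and $-2 + (7-2\delta)/2 = (3-2\delta)/2 \ge 1$, i.e.\ $\delta \le 1/2$, giving $2 d_l T_l T_l^{-(7-2\delta)/2} \le 2 T_l^{2-(7-2\delta)/2} = 2 T_l^{-(3-2\delta)/2} \le 2/T_l \le 4/T_l$ for all $\delta \in (0,1/2)$.

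The main obstacle I anticipate is handling the union bound over the \emph{random} sample count $N_{l,k}(u)$ correctly: one cannot simply plug $N_{l,k}(u)$ into Hoeffding since it is not independent of the samples. The standard fix, which I would invoke, is to define for each fixed $n$ the hypothetical empirical mean of the first $n$ pulls of arm $k$ and bound its deviation, then union over $n$; this decouples the count from the averaged randomness. The exact constant $(7-2\delta)$ in \eqref{eq:ucb} is engineered precisely so that after the $d_l T_l \le T_l^2$ union-bound factor the residual exponent is $\ge 1$, yielding the clean $4/T_l$ bound with room to spare; I would make sure the arithmetic $2 - (7-2\delta)/2 \le -1 \iff \delta \le 1/2$ is stated explicitly so the reader sees why this constant appears.
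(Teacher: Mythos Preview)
Your proposal is correct and follows essentially the same approach as the paper: independence follows from the phase reset and i.i.d.\ assumption, and the probability bound follows from Hoeffding plus a union over all arms $k\in\mathcal{K}_l$ and all possible sample counts $n\in[1:T_l]$. The paper packages the random-$N$ step slightly differently---it proves an auxiliary lemma (Lemma~\ref{lemma:hoeff_with_dep}) that handles the random stopping count $G$ via a telescoping $\sum_g \delta/(g(g+1))=\delta$ weighting, then applies it with $M=T_l^{3/4-\delta/2}$ and uses the tighter bound $d_l\le 2T_l^{1/2-\delta}$; your direct uniform union over $n\in[1:T_l]$ combined with the cruder $d_l T_l\le T_l^2$ is coarser but still lands on $2T_l^{-(3-2\delta)/2}\le 2/T_l\le 4/T_l$ since $\delta<1/2$. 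Both routes are standard; your exposition would just benefit from dropping the aborted first exponent calculation and going straight to $d_l T_l\le T_l^2$ (or, matching the paper, using $d_l\le 2T_l^{1/2-\delta}$ to get the cleaner exponent $3/2-\delta-(7-2\delta)/2=-2$).
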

From the queueing equation~\eqref{eqn:queing_equation}, we have for any $u \in [1:T_l]$. 
\begin{align}
    &Q(T^{\text{sum}}_l+u+1)^2 \nonumber\\&\leq \left[Q(T^{\text{sum}}_l+u) + A(T^{\text{sum}}_l+u) - S_{l,K_l(u)}(u)\right]^2 \nonumber\\& \leq Q(T^{\text{sum}}_l+u)^2 + [A(T^{\text{sum}}_l+u) - S_{l,K_l(u)}(u)]^2   + 2Q(T^{\text{sum}}_l+u)[A(T^{\text{sum}}_l+u)- S_{l,K_l(t)}(u)] \nonumber \\& \leq Q(T^{\text{sum}}_l+u)^2+ 1 + 2Q(T^{\text{sum}}_l+u)[A(T^{\text{sum}}_l+u) -S_{l,K_l(t)}(u)] \nonumber
\end{align}
where the last inequality follows since $A(T^{\text{sum}}_l+u)- S_{l,K_l(u)}(u) \in [-1,1]$. Define $\Delta_l(u) = \frac{1}{2}\mathbb{E}\{Q(T^{\text{sum}}_l+u+1)^2\} - \frac{1}{2}\mathbb{E}\{Q(T^{\text{sum}}_l+u)^2\}$. Taking the expectations of the above, we have
\begin{align}\label{eqn:1919191}
    \Delta_l(u) &\leq 
    \frac{1}{2}+ \mathbb{E}\{Q(T^{\text{sum}}_l+u)[\lambda - \mu_{l,K_l(u)}]\} \nonumber
    \\& =
    \frac{1}{2}+ \underbrace{\mathbb{E}\{Q(T^{\text{sum}}_l+u)[\lambda - \mu_{l,K_l(u)}]|\mathcal{G}_l(u-1)\}\mathbb{P}\{\mathcal{G}_{l}(u-1)\}}_{\text{Term 1}}
    \nonumber\\&\ \ \ \ +\underbrace{\mathbb{E}\{Q(T^{\text{sum}}_l+u)[\lambda - \mu_{l,K_l(u)}]|\mathcal{G}_l^c(u-1)\}\mathbb{P}\{\mathcal{G}_l^c(u-1)\}}_{\text{Term 2}}
\end{align}
Now, in the following two lemmas, we analyze term 1 and term 2 of the above inequality separately.
\begin{lemma}[Term 2 of \eqref{eqn:1919191}]\label{lemma:term_2}
    For any $l \geq b$ and $u \in [1:T_l]$, we have that
    \begin{align}
         \mathbb{E}\{Q(T^{\text{sum}}_l+u)[\lambda - \mu_{l,K_l(u)}]|\mathcal{G}^c(u-1)\}\mathbb{P}\{\mathcal{G}_l^c(u-1)\}  \leq  \frac{4T^{\text{sum}}_{l+1}}{T_l} \nonumber
    \end{align}
    \begin{proof}
        Notice that
        \begin{align}
            &\mathbb{E}\{Q(T^{\text{sum}}_l+u)[\lambda - \mu_{l,K_l(u)}]|\mathcal{G}^c(u-1)\}\mathbb{P}\{\mathcal{G}_l^c(u-1)\} \nonumber\\& \leq_{(a)} \mathbb{E}\{Q(T^{\text{sum}}_l+u)|\mathcal{G}^c(u-1)\}\mathbb{P}\{\mathcal{G}_l^c(u-1)\} \leq_{(b)} \frac{4(T^{\text{sum}}_l+u)}{T_l}\leq \frac{4T^{\text{sum}}_{l+1}}{T_l} \nonumber
        \end{align}
        where for (a) we have used $\lambda \leq 1$, for (b) we have used Lemma~\ref{eqn:det_queue_bound} and Lemma~\ref{lemma:good_event_lemma_app}, and the last inequality follows since $u \in [1:T_l]$.
    \end{proof}
\end{lemma}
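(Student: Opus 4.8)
The plan is to bound this ``bad-event'' contribution as crudely as possible: we discard the drift factor $[\lambda-\mu_{l,K_l(u)}]$ altogether and pay only for the small probability of $\mathcal{G}_l^c(u-1)$. First I would observe that the bracket is at most $1$, since $\mu_{l,K_l(u)}=g(K_l(u)/d_l)\ge 0$ and $\lambda=\mathbb{E}\{A(1)\}\le 1$ by Assumption~\textbf{A1}, so $\lambda-\mu_{l,K_l(u)}\le\lambda\le 1$; consequently the left-hand side is at most $\mathbb{E}\{Q(T^{\text{sum}}_l+u)\mid\mathcal{G}_l^c(u-1)\}\,\mathbb{P}\{\mathcal{G}_l^c(u-1)\}$. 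Second, I would bound the conditional expectation of the queue by the deterministic estimate of Lemma~\ref{eqn:det_queue_bound}: because $Q(t)\le t-1$ holds surely (hence also on the event $\mathcal{G}_l^c(u-1)$), we get $\mathbb{E}\{Q(T^{\text{sum}}_l+u)\mid\mathcal{G}_l^c(u-1)\}\le T^{\text{sum}}_l+u-1<T^{\text{sum}}_l+T_l=T^{\text{sum}}_{l+1}$ for $u\in[1:T_l]$. Third, since $l\ge b$, Lemma~\ref{lemma:good_event_lemma_app} gives $\mathbb{P}\{\mathcal{G}_l^c(u-1)\}\le 4/T_l$. Multiplying the bounds from the last two steps yields exactly $4T^{\text{sum}}_{l+1}/T_l$.

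I do not expect any real analytical obstacle here: the argument is a two-step chain of elementary inequalities, and the only things needing care are (i) using that Lemma~\ref{eqn:det_queue_bound} is a pathwise (not merely in-expectation) bound, so that conditioning on $\mathcal{G}_l^c(u-1)$ leaves it intact, and (ii) resisting the temptation to keep the drift on the bad event --- on $\mathcal{G}_l^c(u-1)$ the level $K_l(u)$ need not be near-optimal, so $\lambda-\mu_{l,K_l(u)}$ may well be positive and cannot help us. The reason such a lossy bound is affordable downstream is the geometric phase schedule $T_l=2^{l+2}$: summed over the $T_l$ slots of phase $l$ this term contributes $4T^{\text{sum}}_{l+1}$, which is comparable to $T_l$ itself (since $T^{\text{sum}}_{l+1}<2T_l$), and therefore the total over all phases up to $a(I)$ stays linear in the horizon when combined with the phase-length estimates of Lemma~\ref{lemma:init_lemma}.
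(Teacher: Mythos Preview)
Your proposal is correct and follows essentially the same argument as the paper: bound $\lambda-\mu_{l,K_l(u)}\le 1$, apply the deterministic queue bound $Q(T^{\text{sum}}_l+u)\le T^{\text{sum}}_l+u-1$ pathwise, invoke $\mathbb{P}\{\mathcal{G}_l^c(u-1)\}\le 4/T_l$, and use $u\le T_l$ to reach $4T^{\text{sum}}_{l+1}/T_l$. The additional commentary you give on why this crude bound is affordable downstream is accurate and helpful, though not part of the paper's proof itself.
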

\begin{lemma}[Term 1 of \eqref{eqn:1919191}]\label{lemma:temp_1}
   For any $l \geq b$ and $u \in [1:T_l]$, we have that
    \begin{align}
       & \mathbb{E}\{Q(T^{\text{sum}}_l+u)[\lambda - \mu_{l,K_l(u)}]|\mathcal{G}_l(u-1)\}\mathbb{P}\{\mathcal{G}_{l}(u-1)\} \nonumber\\& \leq    -\frac{(\gamma-1)\varepsilon}{\gamma}\mathbb{E}\{Q(u+T^{\text{sum}}_l)\}+\frac{4T^{\text{sum}}_{l+1}}{ T_l}+  \sqrt{7-2\delta}\mathbb{E}\left\{Q(T^{\text{sum}}_l+u)\sqrt{\frac{\log\left(T_l\right)}{(1 \vee N_{l,K_l(u)}(u-1))}}\right\} \nonumber
    \end{align}
    \begin{proof}
The main idea behind the proof is to use the definition of the good event $\mathcal{G}_l(u-1)$ in \eqref{eqn:good_event} to bound $\mu_{l,K_l(u)}$, and then use Corollary~\ref{corr:main_corr}. We defer the full proof to Appendix~\ref{app:temp_1}.
    \end{proof}
\end{lemma}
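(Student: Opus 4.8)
The plan is to bound Term 1, the conditional drift on the good event, by combining three ideas: (i) the good event forces the selected arm's true mean $\mu_{l,K_l(u)}$ to be close to its UCB value, (ii) the UCB of the selected arm dominates the UCB of the near-optimal arm $k^*$ guaranteed by Corollary~\ref{corr:main_corr}, and (iii) the good event also lower-bounds $\mu_{l,k^*}$ by its UCB minus the confidence width. First I would note that since $l\ge b$, we have $d_l\ge\gamma/\varepsilon$, so Corollary~\ref{corr:main_corr} applies and yields some $k^*\in\mathcal{K}_l$ with $\mu_{l,k^*}-\lambda\ge\frac{\gamma-1}{\gamma}\varepsilon$. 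Write $K=K_l(u)$ for brevity. On $\mathcal{G}_l(u-1)$, the upper half of the good-event inclusion gives $\mu_{l,K}\ge \text{UCB}_{l,K}(u-1) - 2\sqrt{\frac{(7-2\delta)\log(T_l)}{4(1\vee N_{l,K}(u-1))}}$; the arm-selection rule \eqref{eqn:decision} gives $\text{UCB}_{l,K}(u-1)\ge \text{UCB}_{l,k^*}(u-1)$; and the lower half of the good event applied to $k^*$ gives $\text{UCB}_{l,k^*}(u-1)\ge\mu_{l,k^*}$. Chaining these, $\mu_{l,K}\ge \mu_{l,k^*} - 2\sqrt{\frac{(7-2\delta)\log(T_l)}{4(1\vee N_{l,K}(u-1))}} \ge \lambda + \frac{\gamma-1}{\gamma}\varepsilon - \sqrt{\frac{(7-2\delta)\log(T_l)}{1\vee N_{l,K}(u-1)}}$.

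Next I would substitute this into $\lambda-\mu_{l,K}$, so that on $\mathcal{G}_l(u-1)$,
\[
\lambda - \mu_{l,K} \;\le\; -\frac{(\gamma-1)\varepsilon}{\gamma} \;+\; \sqrt{\frac{(7-2\delta)\log(T_l)}{1\vee N_{l,K}(u-1)}}.
\]
Multiplying by the nonnegative quantity $Q(T^{\text{sum}}_l+u)$, taking conditional expectation given $\mathcal{G}_l(u-1)$, and multiplying by $\mathbb{P}\{\mathcal{G}_l(u-1)\}\le 1$ yields the negative drift term $-\frac{(\gamma-1)\varepsilon}{\gamma}\mathbb{E}\{Q(T^{\text{sum}}_l+u)\,|\,\mathcal{G}_l(u-1)\}\mathbb{P}\{\mathcal{G}_l(u-1)\}$ plus the exploration term $\sqrt{7-2\delta}\,\mathbb{E}\{Q(T^{\text{sum}}_l+u)\sqrt{\log(T_l)/(1\vee N_{l,K}(u-1))}\,|\,\mathcal{G}_l(u-1)\}\mathbb{P}\{\mathcal{G}_l(u-1)\}$. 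The remaining bookkeeping is to pass from the conditional-on-$\mathcal{G}_l(u-1)$ expectations to unconditional ones: for the negative term I would write $\mathbb{E}\{Q\,|\,\mathcal{G}\}\mathbb{P}\{\mathcal{G}\} = \mathbb{E}\{Q\} - \mathbb{E}\{Q\,|\,\mathcal{G}^c\}\mathbb{P}\{\mathcal{G}^c\}$ and bound the subtracted piece using Lemma~\ref{eqn:det_queue_bound} ($Q\le T^{\text{sum}}_l+u-1$) together with $\mathbb{P}\{\mathcal{G}^c\}\le 4/T_l$ from Lemma~\ref{lemma:good_event_lemma_app}, giving the $+\frac{4T^{\text{sum}}_{l+1}}{T_l}$ slack; for the exploration term I would simply upper-bound the conditional expectation by the unconditional one (the integrand is nonnegative and $\mathbb{P}\{\mathcal{G}\}\le1$). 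Here one must be slightly careful about the direction of the sign when moving the negative term between conditional and unconditional forms — dropping $\mathbb{P}\{\mathcal{G}\}\le1$ is only valid because the coefficient is negative, so I would instead add and subtract the complement as above.

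The main obstacle I anticipate is the exploration term: after this lemma, one still has a sum over $u$ of $\mathbb{E}\{Q(T^{\text{sum}}_l+u)\sqrt{\log(T_l)/(1\vee N_{l,K_l(u)}(u-1))}\}$, and controlling it requires a Cauchy–Schwarz (or Hölder, hence the parameters $p,q$) split that trades this against the accumulated negative drift $-\frac{(\gamma-1)\varepsilon}{\gamma}\sum_u \mathbb{E}\{Q\}$, using the standard bandit fact $\sum_{u}1/\sqrt{1\vee N_{l,K_l(u)}(u-1)} \lesssim \sqrt{d_l T_l}$ across arms. That telescoping-plus-Hölder argument is what produces the $\varepsilon^{2q}$, $\log^{q+2}$, and $d_l$-dependent powers of $H$ in Lemma~\ref{lemma:unifying_lemma}, and is presumably carried out in Appendix~\ref{app:temp_1} and the proof of Lemma~\ref{lemma:unifying_lemma} rather than within this lemma itself; the lemma as stated only needs the clean per-slot inequality derived above. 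So within the scope of proving just Lemma~\ref{lemma:temp_1}, the only real subtlety is the conditional/unconditional sign bookkeeping on the negative drift term, which the $4T^{\text{sum}}_{l+1}/T_l$ term on the right-hand side is there to absorb.
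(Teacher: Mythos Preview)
Your proposal is correct and follows essentially the same argument as the paper's proof in Appendix~\ref{app:temp_1}: both chain the good-event lower bound on $\mu_{l,K_l(u)}$, the selection rule $\text{UCB}_{l,K_l(u)}(u-1)\ge\text{UCB}_{l,k^*}(u-1)$, and the good-event upper bound $\text{UCB}_{l,k^*}(u-1)\ge\mu_{l,k^*}$, then invoke Corollary~\ref{corr:main_corr} and handle the conditional-to-unconditional conversion exactly as you describe (the $\frac{(\gamma-1)\varepsilon}{\gamma}\le 1$ factor is what lets the complement piece be absorbed into $4T^{\text{sum}}_{l+1}/T_l$). The only cosmetic difference is that the paper takes $k^*=\arg\max_k\mu_{l,k}$ rather than the $k^*$ from Corollary~\ref{corr:main_corr}, which is harmless since the argmax certainly satisfies the corollary's bound; your ``upper half''/``lower half'' labels are swapped relative to the interval endpoints, but the inequalities themselves are right.
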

Using the above two lemmas in \eqref{eqn:1919191}, we have
\begin{align}\label{eqn:basic_eqn}
     &\frac{1}{2}\mathbb{E}\{Q^2(u+T^{\text{sum}}_l+1)\} - \frac{1}{2}\mathbb{E}\{Q^2(u+T^{\text{sum}}_l)\} 
    \\&\leq 
    \frac{1}{2} -\frac{(\gamma -1)\varepsilon}{\gamma}\mathbb{E}\{Q(u+T^{\text{sum}}_l)\} +  \sqrt{7-2\delta}\mathbb{E}\left\{Q(T^{\text{sum}}_l+u)\sqrt{\frac{\log\left(T_l\right)}{(1 \vee N_{l,K_l(u)}(u-1))}}\right\} +\frac{8T^{\text{sum}}_{l+1}}{T_l} \nonumber
\end{align}
For each $l \in [b:a(I)]$, we define
\begin{align}\label{eqn:til_t}
    \tilde{T}_l = \begin{cases}
        T_l & \text{ if } l \in [b:a(I)-1]\\
        I-T^{\text{sum}}_{a(I)} & l = a(I).
    \end{cases}
\end{align}
Hence, $\tilde{T}_l $ denotes the number of time slots belonging to phase $l$ within the first $I$ time slots.  

The following lemma is a consequence of summing \eqref{eqn:basic_eqn} over time slots and performing simple algerbraic manipulations. We defer the proof to the appendix.
\begin{lemma}\label{lemma:proof_1_to_til_H}
We have that
    \begin{align}
        &\sum_{t=1}^{I}\mathbb{E}\{Q(t)\} \nonumber\\&\leq \frac{\gamma I}{2(\gamma-1)\varepsilon}+ \frac{(T_b^{\text{sum}})^2}{2}  +\frac{\gamma\sqrt{7-2\delta}}{(\gamma-1)\varepsilon} \mathbb{E}\left\{\sum_{l=b}^{a( I)}\sum_{u=1}^{\tilde{T}_l} Q(T^{\text{sum}}_l+u)\sqrt{\frac{\log\left(T_l\right)}{(1 \vee N_{l,K_l(u)}(u-1))}}\right\} \nonumber\\&\ \ \ \  + \frac{8\gamma}{(\gamma-1)\varepsilon}\sum_{l=b}^{a( I)}T^{\text{sum}}_{l+1}  + \frac{\gamma}{2(\gamma-1)\varepsilon}\left[\mathbb{E}\{Q^2(T^{\text{sum}}_b+1)\}-\mathbb{E}\{Q^2( I+1)\}\right]_+ \nonumber
    \end{align}
\begin{proof}
See~Appendix~\ref{app:proof_1_to_til_H}
\end{proof}
\end{lemma}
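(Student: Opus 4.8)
The plan is to obtain the bound by summing the one‑step drift inequality \eqref{eqn:basic_eqn} over every time slot lying in phases $b$ through $a(I)$ and telescoping. Since \eqref{eqn:basic_eqn} holds for all $l\ge b$ and all $u\in[1:T_l]$, it holds in particular for $u\in[1:\tilde T_l]$ with $\tilde T_l$ as in \eqref{eqn:til_t}. I first sum over $u\in[1:\tilde T_l]$ and then over $l\in[b:a(I)]$. The left‑hand side of \eqref{eqn:basic_eqn}, namely $\tfrac12\mathbb{E}\{Q^2(u+T^{\text{sum}}_l+1)\}-\tfrac12\mathbb{E}\{Q^2(u+T^{\text{sum}}_l)\}$, telescopes: within phase $l$ the inner sum collapses to $\tfrac12\mathbb{E}\{Q^2(T^{\text{sum}}_l+\tilde T_l+1)\}-\tfrac12\mathbb{E}\{Q^2(T^{\text{sum}}_l+1)\}$, and because $T^{\text{sum}}_l+T_l=T^{\text{sum}}_{l+1}$ for $l<a(I)$ while $T^{\text{sum}}_{a(I)}+\tilde T_{a(I)}=I$, the successive phase contributions chain into exactly $\tfrac12\mathbb{E}\{Q^2(I+1)\}-\tfrac12\mathbb{E}\{Q^2(T^{\text{sum}}_b+1)\}$.

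Next I handle the four right‑hand terms of \eqref{eqn:basic_eqn} separately. The constant $\tfrac12$, summed over the index set, contributes $\tfrac12\sum_{l=b}^{a(I)}\tilde T_l=\tfrac12\,(I-T^{\text{sum}}_b)\le \tfrac I2$. The drift term $-\tfrac{(\gamma-1)\varepsilon}{\gamma}\mathbb{E}\{Q(u+T^{\text{sum}}_l)\}$ reassembles into $-\tfrac{(\gamma-1)\varepsilon}{\gamma}\sum_{t=T^{\text{sum}}_b+1}^{I}\mathbb{E}\{Q(t)\}$, since the pairs $(l,u)$ with $l\in[b:a(I)]$, $u\in[1:\tilde T_l]$ are in bijection with the slots $t\in[T^{\text{sum}}_b+1:I]$ via $t=T^{\text{sum}}_l+u$. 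The error term $\tfrac{8T^{\text{sum}}_{l+1}}{T_l}$ summed over $u\in[1:\tilde T_l]$ equals $\tfrac{8\tilde T_lT^{\text{sum}}_{l+1}}{T_l}\le 8T^{\text{sum}}_{l+1}$ (using $\tilde T_l\le T_l$), hence sums to at most $8\sum_{l=b}^{a(I)}T^{\text{sum}}_{l+1}$. Finally, the UCB‑type term simply accumulates into the double‑sum expectation $\mathbb{E}\{\sum_{l=b}^{a(I)}\sum_{u=1}^{\tilde T_l}Q(T^{\text{sum}}_l+u)\sqrt{\log(T_l)/(1\vee N_{l,K_l(u)}(u-1))}\}$ appearing in the statement.

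Combining the two sides gives an inequality of the form $\tfrac12\mathbb{E}\{Q^2(I+1)\}-\tfrac12\mathbb{E}\{Q^2(T^{\text{sum}}_b+1)\}\le \tfrac I2-\tfrac{(\gamma-1)\varepsilon}{\gamma}\sum_{t=T^{\text{sum}}_b+1}^{I}\mathbb{E}\{Q(t)\}+(\text{UCB term})+8\sum_{l=b}^{a(I)}T^{\text{sum}}_{l+1}$. I solve for $\sum_{t=T^{\text{sum}}_b+1}^{I}\mathbb{E}\{Q(t)\}$, multiply through by $\tfrac{\gamma}{(\gamma-1)\varepsilon}$, and bound the resulting quantity $\tfrac12\mathbb{E}\{Q^2(T^{\text{sum}}_b+1)\}-\tfrac12\mathbb{E}\{Q^2(I+1)\}$ from above by $\tfrac12\bigl[\mathbb{E}\{Q^2(T^{\text{sum}}_b+1)\}-\mathbb{E}\{Q^2(I+1)\}\bigr]_+$. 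Adding back $\sum_{t=1}^{T^{\text{sum}}_b}\mathbb{E}\{Q(t)\}$, which by Lemma~\ref{eqn:det_queue_bound} is at most $\sum_{t=1}^{T^{\text{sum}}_b}(t-1)\le (T^{\text{sum}}_b)^2/2$, recovers $\sum_{t=1}^{I}\mathbb{E}\{Q(t)\}$ and yields the claimed bound. The degenerate case $I\le T^{\text{sum}}_b$ is immediate, since then Lemma~\ref{eqn:det_queue_bound} gives $\sum_{t=1}^{I}\mathbb{E}\{Q(t)\}\le (T^{\text{sum}}_b)^2/2$ while all remaining right‑hand terms are nonnegative.

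I expect the only real obstacle to be the phase‑boundary bookkeeping: verifying that the within‑phase telescoping endpoint $T^{\text{sum}}_l+\tilde T_l+1$ coincides with the next phase's start index $T^{\text{sum}}_{l+1}+1$ for $l<a(I)$ and with $I+1$ for $l=a(I)$, and that the double index set $\{(l,u): b\le l\le a(I),\ 1\le u\le\tilde T_l\}$ is precisely the window $[T^{\text{sum}}_b+1:I]$ so that the drift terms and UCB terms reindex cleanly. Everything past that is routine algebra, and the only inputs needed beyond \eqref{eqn:basic_eqn} are $\tilde T_l\le T_l$ and Lemma~\ref{eqn:det_queue_bound}.
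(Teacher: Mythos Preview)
Your proposal is correct and follows essentially the same approach as the paper: sum \eqref{eqn:basic_eqn} over $u\in[1:\tilde T_l]$ and $l\in[b:a(I)]$, telescope the left side to $\tfrac12\mathbb{E}\{Q^2(I+1)\}-\tfrac12\mathbb{E}\{Q^2(T^{\text{sum}}_b+1)\}$, bound the four right-hand terms exactly as you describe (including $\tilde T_l\le T_l$ for the $8T^{\text{sum}}_{l+1}/T_l$ term), rearrange, apply $x\le[x]_+$, and add back the initial-segment bound $\sum_{t=1}^{T^{\text{sum}}_b}\mathbb{E}\{Q(t)\}\le (T^{\text{sum}}_b)^2/2$ from Lemma~\ref{eqn:det_queue_bound}. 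Your phase-boundary bookkeeping and the degenerate case $I\le T^{\text{sum}}_b$ are spelled out more explicitly than in the paper, but the argument is the same.
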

To get the bound of Lemma~\ref{lemma:unifying_lemma} from Lemma~\ref{lemma:proof_1_to_til_H}, we require bounding the term
\begin{align}
    \frac{\gamma\sqrt{7-2\delta}}{(\gamma-1)\varepsilon} \mathbb{E}\left\{\sum_{l=b}^{a( I)}\sum_{u=1}^{\tilde{T}_l} Q(T^{\text{sum}}_l+u)\sqrt{\frac{\log\left(T_l\right)}{(1 \vee N_{l,K_l(u)}(u-1))}}\right\}.\nonumber
\end{align}
We begin this process with two lemmas. The following lemma is adapted from~\cite{Huang2024}.
 \begin{lemma}\label{lemma:q_pow_bnd}
Consider nonnegative real numbers $x_1,x_2,\dots,x_n$ such that $x_1 = 0$, $|x_i - x_{i+1}| \leq 1$ for all $i \in [1:n-1]$. Let $S = \sum_{t=1}^n x_t, \text{ and } D^p = \sum_{t=1}^n x_t^p$
for $p \geq 2$. We have $D \leq 2^{\frac{p-1}{2p}}S^{\frac{p+1}{2p}}$.
\begin{proof}
See Appendix~\ref{app:q_pow_bnd}
\end{proof}
\end{lemma}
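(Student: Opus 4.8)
The plan is to reduce the whole statement to a single bound on $\max_{t} x_t$. The driving idea is that, because consecutive terms differ by at most $1$, a large value of $x_t$ forces a long ``ramp'' of values beneath it, and this ramp alone makes $S$ large. Concretely, I would fix an arbitrary index $t$, set $v = x_t$, and walk backwards: the constraint $|x_i - x_{i+1}| \le 1$ gives $x_{t-j} \ge x_{t-j+1} - 1$, so by induction $x_{t-j} \ge v - j$ for every $j \in [0:t-1]$; combined with nonnegativity this is $x_{t-j} \ge (v-j)_+$. The indices never run off the left end in a damaging way, since $x_1 = 0$ together with the same Lipschitz bound forces $v = x_t \le t-1$, hence $\lfloor v\rfloor \le t-1$. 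Summing and using $S \ge \sum_{j=0}^{t-1} x_{t-j}$, I get
\begin{align}
S \;\ge\; \sum_{j=0}^{\lfloor v\rfloor}(v-j)\;=\;(\lfloor v\rfloor+1)\Bigl(v-\tfrac{\lfloor v\rfloor}{2}\Bigr)\;\ge\;\tfrac{v^2}{2},\nonumber
\end{align}
where the last step uses $\lfloor v\rfloor \le v < \lfloor v\rfloor + 1$. Thus $x_t \le \sqrt{2S}$ for every $t$, i.e.\ $\max_t x_t \le \sqrt{2S}$.

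With the maximum controlled, the rest is a one-line computation: since $p \ge 1$ implies $p-1 \ge 0$,
\begin{align}
D^p \;=\; \sum_{t=1}^n x_t^{p-1}x_t \;\le\; \Bigl(\max_t x_t\Bigr)^{p-1}\sum_{t=1}^n x_t \;=\; \Bigl(\max_t x_t\Bigr)^{p-1}S \;\le\; (2S)^{\frac{p-1}{2}}S \;=\; 2^{\frac{p-1}{2}}S^{\frac{p+1}{2}}.\nonumber
\end{align}
Taking $p$-th roots yields $D \le 2^{\frac{p-1}{2p}}S^{\frac{p+1}{2p}}$, which is the claim. The degenerate case $S = 0$ forces every $x_t = 0$, so $D = 0$ and the bound holds trivially; this is worth noting explicitly so that the argument above never divides by zero or takes roots of negative quantities.

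I do not expect a genuine obstacle here — the argument is elementary and is essentially the discrete ``ramp'' estimate adapted from~\cite{Huang2024}. The only places that need care are (i) checking that the backward induction $x_{t-j}\ge (v-j)_+$ is applied only to in-range indices, which is exactly what the bound $v \le t-1$ guarantees, and (ii) pinning down the clean quadratic lower bound $\sum_{j=0}^{\lfloor v\rfloor}(v-j)\ge v^2/2$, since this is where the constant $2$ (and hence the exact exponent $2^{\frac{p-1}{2p}}$ and power $S^{\frac{p+1}{2p}}$) in the statement originates; faithfully tracking that constant is what makes the final bound match the lemma exactly rather than up to an unspecified factor.
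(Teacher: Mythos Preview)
Your proof is correct and follows the same overall strategy as the paper: establish $\max_t x_t \le \sqrt{2S}$ via a ramp argument, then factor $x_t^p = x_t^{p-1}x_t \le (\max_t x_t)^{p-1}x_t$ and sum. The difference is in how the ramp bound is obtained. The paper first sorts the sequence into $y_1 \le \cdots \le y_n$ and spends a page of case analysis proving that the \emph{sorted} sequence still satisfies $|y_t - y_{t+1}| \le 1$; only then does it sum the top $s = \lceil y_n\rceil$ values to get $S \ge y_n^2/2$. You instead walk backward from an arbitrary index in the \emph{original} sequence, using $x_{t-j} \ge (v-j)_+$ directly and the observation $v \le t-1$ to keep indices in range. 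Your route is more elementary and avoids the sorting detour entirely; the paper's version has the minor advantage that once the sorted Lipschitz property is established it could be reused, but for this lemma your argument is the cleaner one.
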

\begin{lemma}\label{lemma:bounding_ucb_slack_holder}
For each $l \in [b:a( I)]$, $q \in (1,2)$, and $\tilde{T}_l$ defined in \eqref{eqn:til_t}, we have that
\begin{align}         
\sum_{u=1}^{\tilde{T}_l} \left(\frac{\log\left(T_l\right)}{(1 \vee N_{l,K_l(u)} (u-1))}\right)^{\frac{q}{2}} \leq  \log^{q/2}(T_l)\frac{d_l^{q/2}T_l^{1-(q/2)}}{1-(q/2)} \nonumber
\end{align}
\begin{proof}
See Appendix~\ref{app:bounding_ucb_slack_holder}
\end{proof}
\end{lemma}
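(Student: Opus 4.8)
The plan is to fix a phase $l\in[b:a(I)]$, pull the $u$-independent factor $(\log T_l)^{q/2}$ out of the sum, and control the remaining quantity
\[
  \Sigma_l \;:=\; \sum_{u=1}^{\tilde{T}_l}\Bigl(1\vee N_{l,K_l(u)}(u-1)\Bigr)^{-q/2}
\]
by reorganizing it arm by arm. The key observation is that, as $u$ ranges over those slots $u\in[1:\tilde{T}_l]$ with $K_l(u)=k$ for a fixed rate level $k$, the count $N_{l,k}(u-1)$ takes each of the values $0,1,\dots,n_k-1$ exactly once, where $n_k:=N_{l,k}(\tilde{T}_l)$ is the total number of pulls of $k$ within the first $\tilde{T}_l$ slots of phase $l$. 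Since every slot pulls exactly one arm, $\sum_{k\in\mathcal{K}_l}n_k=\tilde{T}_l\le T_l$ and $|\mathcal{K}_l|=d_l$, and
\[
  \Sigma_l \;=\; \sum_{k\in\mathcal{K}_l}\ \sum_{m=0}^{n_k-1}(1\vee m)^{-q/2}.
\]

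Next I would establish the per-arm estimate: for every integer $n\ge 0$,
\[
  \sum_{m=0}^{n-1}(1\vee m)^{-q/2}\;\le\;\frac{n^{1-q/2}}{1-q/2}.
\]
For $n=0$ this is trivial and for $n=1$ it reduces to $1\le 1/(1-q/2)$, which holds since $q/2<1$. For $n\ge 2$ I would split off the two unit terms $m=0$ and $m=1$ and bound the remaining tail by an integral, $\sum_{m=2}^{n-1}m^{-q/2}\le\int_1^{n-1}x^{-q/2}\,dx$, using monotonicity of $x\mapsto x^{-q/2}$; this gives
\[
  \sum_{m=0}^{n-1}(1\vee m)^{-q/2}\;\le\;2-\frac{1}{1-q/2}+\frac{(n-1)^{1-q/2}}{1-q/2},
\]
and since $q>1$ forces $1/(1-q/2)>2$, the constant $2-1/(1-q/2)$ is negative and may be dropped, leaving a bound of $(n-1)^{1-q/2}/(1-q/2)\le n^{1-q/2}/(1-q/2)$.

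Summing the per-arm estimate over $k\in\mathcal{K}_l$ and using concavity of $x\mapsto x^{1-q/2}$ (Jensen's inequality, distributing the mass $\sum_k n_k=\tilde{T}_l$ over $d_l$ arms) gives
\[
  \sum_{k\in\mathcal{K}_l}n_k^{1-q/2}\;\le\;d_l\Bigl(\tfrac{\tilde{T}_l}{d_l}\Bigr)^{1-q/2}\;=\;d_l^{q/2}\tilde{T}_l^{1-q/2}\;\le\;d_l^{q/2}T_l^{1-q/2},
\]
where the last step uses $\tilde{T}_l\le T_l$ and $1-q/2>0$. Combining with $\Sigma_l\le\frac{1}{1-q/2}\sum_k n_k^{1-q/2}$ and reinstating the factor $(\log T_l)^{q/2}$ yields exactly the claimed bound. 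I expect the only delicate point to be the per-arm estimate: a crude sum-to-integral comparison loses a multiplicative constant, and it is precisely the hypothesis $q>1$ (equivalently $q/2>1/2$) that lets one absorb that constant and hit the stated form; everything else is routine bookkeeping.
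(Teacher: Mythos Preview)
Your proposal is correct and follows essentially the same approach as the paper's proof: factor out $(\log T_l)^{q/2}$, regroup the sum arm by arm so that the denominators run through $0,1,\dots,n_k-1$, bound the per-arm partial sum $1+\sum_{\tau=1}^{n_k-1}\tau^{-q/2}$ by an integral, use $q>1$ to discard the resulting negative constant, and then apply the concavity of $x\mapsto x^{1-q/2}$ (power-mean/Jensen) together with $\sum_k n_k=\tilde T_l\le T_l$. The only cosmetic differences are that the paper records the integral bound as the identity $\sum_{\tau=1}^{y}\tau^{-x}\le (y^{1-x}-x)/(1-x)$ and carries $(n_k-1)^{1-q/2}$ rather than $n_k^{1-q/2}$ into the Jensen step, but these lead to the identical final inequality.
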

Fix $p,q$ such that $q \in (1,2)$ satisfying $1/p + 1/q = 1$. From the H{\"o}lder inequality, we have that
\begin{align}\label{eqn:to_expect}
    &\sum_{l=b}^{a( I)}\sum_{u=1}^{\tilde{T}_l} Q(T^{\text{sum}}_l+u)\sqrt{\frac{\log\left(T_l\right)}{(1 \vee N_{l,K_l(u)}(u-1))}} \nonumber\\&\leq_{(a)} \left(\sum_{l=b}^{a( I)}\sum_{u=1}^{\tilde{T}_l} \left(\frac{\log\left(T_l\right)}{(1 \vee N_{l,K_l(u)}(u-1))}\right)^{\frac{q}{2}}\right)^{\frac{1}{q}}\left(\sum_{l=b}^{a( I)}\sum_{u=1}^{\tilde{T}_l} Q^p(T^{\text{sum}}_l+u)\right)^{\frac{1}{p}} \nonumber\\& = \left(\sum_{l=b}^{a( I)}\sum_{u=1}^{\tilde{T}_l} \left(\frac{\log\left(T_l\right)}{(1 \vee N_{l,K_l(u)}(u-1))}\right)^{\frac{q}{2}}\right)^{\frac{1}{q}}\left(\sum_{t = T^{\text{sum}}_b+1}^{I} Q^p(t)\right)^{\frac{1}{p}}\nonumber\\&\leq_{(b)} \left(\sum_{l=b}^{a( I)}\left[ \log^{q/2}(T_l)\frac{d_l^{q/2}T_l^{1-(q/2)}}{1-(q/2)} \right] \right)^{\frac{1}{q}}\left(\sum_{t = 1}^{I} Q^p(t)\right)^{\frac{1}{p}} \nonumber\\& \leq_{(c)} \left(a( I) \log^{q/2}(T_{a( I)})\frac{d_{a( I)}^{q/2}T_{a(I)}^{1-(q/2)}}{1-(q/2)}\right)^{\frac{1}{q}} \left(\sum_{t = 1}^{I} Q^p(t)\right)^{\frac{1}{p}}  \nonumber\\&\leq_{(d)} 2^{\frac{p-1}{2p}}\left(a( I) \log^{q/2}(T_{a( I)})\frac{d_{a( I)}^{q/2}T_{a(I)}^{1-(q/2)}}{1-(q/2)}\right)^{\frac{1}{q}} \left(\sum_{t=1}^{ I}Q(t)\right)^{\frac{p+1}{2p}} 
\end{align}
where (a) follows from the H{\"o}lder inequality, (b) follows from Lemma~\ref{lemma:bounding_ucb_slack_holder}, (c) follows since the sequences $T_1,T_2,\dots,$ and $d_1,d_2,\dots,$ are nondecreasing, and (d) follows by applying Lemma~\ref{lemma:q_pow_bnd} to the sequence $Q(1), Q(2),\dots, Q(I)$. Taking expectations of \eqref{eqn:to_expect} and using the Jensen's inequality, we have
\begin{align}
    &\mathbb{E}\left\{\sum_{l=b}^{a( I)}\sum_{u=1}^{\tilde{T}_l} Q(T^{\text{sum}}_l+u)\sqrt{\frac{\log\left(T_l\right)}{(1 \vee N_{l,K_l(u)}(u-1))}} \right\}  \nonumber\\&\leq 2^{\frac{p-1}{2p}}\left(a( I) \log^{q/2}(T_{a( I)})\frac{d_{a( I)}^{q/2}T_l^{1-(q/2)}}{1-(q/2)}\right)^{\frac{1}{q}} \left(\sum_{t=1}^{ I}\mathbb{E}\{Q(t)\}\right)^{\frac{p+1}{2p}}  \nonumber
\end{align}
where the last inequality follows from Jensen's inequality, since $\frac{p+1}{2p} <1 $ (recall that $p>1$).
Combining the above with Lemma~\ref{lemma:proof_1_to_til_H}, we have
\begin{align}\label{eqn:bef_bounding}
\sum_{t=1}^{I}\mathbb{E}\{Q(t)\} &\leq \frac{\gamma I}{2(\gamma-1)\varepsilon}+ \frac{(T_b^{\text{sum}})^2}{2}+ \frac{8\gamma}{(\gamma-1)\varepsilon}\sum_{l=b}^{a( I)}T^{\text{sum}}_{l+1} \nonumber\\&\ \ \ \ +\frac{2^{\frac{p-1}{2p}}\gamma\sqrt{(7-2\delta)}\left(a( I) \log^{q/2}(T_{a( I)})\frac{d_{a( I)}^{q/2}T_l^{1-(q/2)}}{1-(q/2)}\right)^{\frac{1}{q}}}{(\gamma-1)\varepsilon}  \left(\sum_{t=1}^{ I}\mathbb{E}\{Q(t)\}\right)^{\frac{p+1}{2p}} \nonumber\\&\ \ \ \  + \frac{\gamma}{2(\gamma-1)\varepsilon}\left[\mathbb{E}\{Q^2(T^{\text{sum}}_b+1)\}-\mathbb{E}\{Q^2( I+1)\}\right]_+
\end{align}
Next, we have the following lemma adapted from~\cite{Huang2024}.
\begin{lemma}\label{lemma:abX_lemma}
Consider nonnegative real numbers $a,b,X$ and $d \geq 1$. such that $X^d \leq a+ bX^{d-1}$. We have that $X^d \leq \left(a^{\frac{1}{d}} + b\right)^{d} \leq 2^{d-1}a+ 2^{d-1}b^d$.
\begin{proof}
    See Appendix~\ref{app:abX_lemma}.
\end{proof}
\end{lemma}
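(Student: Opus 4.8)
The plan is to prove the two inequalities in the chain separately, both by elementary means. For the first, $X^{d}\le (a^{1/d}+b)^{d}$: since $t\mapsto t^{d}$ is nondecreasing on $[0,\infty)$ and both $X$ and $a^{1/d}+b$ are nonnegative, it is equivalent to show $X\le a^{1/d}+b$, and I would argue this by contradiction. Suppose instead $X> a^{1/d}+b$. Because $b\ge 0$, this forces $X> a^{1/d}\ge 0$ (so in particular $X>0$ and $X^{d-1}\ge a^{(d-1)/d}$, using that $t\mapsto t^{d-1}$ is nondecreasing on $[0,\infty)$) and also $X-b> a^{1/d}$. Multiplying the strict inequality $X-b> a^{1/d}$ by $X^{d-1}>0$ and then using $X^{d-1}\ge a^{(d-1)/d}$ gives $X^{d-1}(X-b)> X^{d-1}a^{1/d}\ge a^{(d-1)/d}a^{1/d}=a$, i.e.\ $X^{d}> a+bX^{d-1}$, contradicting the hypothesis. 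Hence $X\le a^{1/d}+b$, and raising to the power $d$ yields the first inequality.

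For the second inequality, $(a^{1/d}+b)^{d}\le 2^{d-1}a+2^{d-1}b^{d}$, I would use convexity of $t\mapsto t^{d}$ on $[0,\infty)$, which holds since $d\ge 1$: for any $x,y\ge 0$ we have $((x+y)/2)^{d}\le (x^{d}+y^{d})/2$, equivalently $(x+y)^{d}\le 2^{d-1}(x^{d}+y^{d})$. Applying this with $x=a^{1/d}$ and $y=b$ gives $(a^{1/d}+b)^{d}\le 2^{d-1}((a^{1/d})^{d}+b^{d})=2^{d-1}(a+b^{d})$, which is exactly the claimed bound.

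There is no substantial obstacle here; the lemma is elementary, and the only point requiring care is the degenerate regimes in the first step. If $d=1$ then $X^{d-1}=1$ and the hypothesis already reads $X\le a+b=(a^{1/d}+b)^{d}$, so nothing is to be shown. If $a=0$ then $a^{(d-1)/d}=0$, which would weaken the multiplication step, so this case should be treated directly: under $X> a^{1/d}+b=b$ we have $X^{d-1}>0$ and $X-b>0$, whereas the hypothesis gives $X^{d-1}(X-b)\le 0$, an immediate contradiction. With these two checks dispatched, the contradiction argument above applies verbatim for $d>1$ and $a>0$, where every quantity in play is strictly positive.
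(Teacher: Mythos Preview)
Your proof is correct and follows essentially the same route as the paper: a contradiction argument for $X\le a^{1/d}+b$ (from $X>a^{1/d}$ and $X-b>a^{1/d}$ deduce $X^{d-1}(X-b)>a$), and Jensen/convexity of $t\mapsto t^d$ for the second inequality. Your treatment is in fact slightly more careful than the paper's, which does not explicitly discuss the $d=1$ or $a=0$ boundary cases.
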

Using Lemma~\ref{lemma:abX_lemma} in \eqref{eqn:bef_bounding} with $X = \left(\sum_{t=1}^{I} \mathbb{E}\{Q(t)\}\right)^{(p-1)/2p}$ and $d=2p/(p-1)$, we have
\begin{align}\label{eqn:last_holder}
   &\sum_{t=1}^{I} \mathbb{E}\{Q(t)\}  \nonumber\\&\leq 2^{\frac{p+1}{p-1}}\left(\frac{\gamma I}{2(\gamma-1) \varepsilon}+\frac{(T_b^{\text{sum}})^2}{2}+\frac{8\gamma}{(\gamma-1)\varepsilon}\sum_{l=b}^{a(I)}T^{\text{sum}}_{l+1}  +\frac{\gamma\left[\mathbb{E}\{Q^2(T^{\text{sum}}_b+1)\}-\mathbb{E}\{Q^2( I+1)\}\right]_+}{2(\gamma-1)\varepsilon}\right) \nonumber\\&\ \ \ \  +  2^{\frac{p+1}{p-1}}\left[\frac{2^{\frac{p-1}{2p}}\gamma\sqrt{7-2\delta}\left[a(I)\log^{q/2}(T_{a(I)})\left(\frac{d_{a(I)}^{q/2}T_{a(I)}^{1-(q/2)}}{1-(q/2)}\right) \right]^{\frac{1}{q}}}{(\gamma-1)\varepsilon}\right]^{\frac{2p}{p-1}}\nonumber\\&\leq_{(a)} 2^{\frac{p+1}{p-1}}\left(\frac{\gamma I}{2(\gamma-1) \varepsilon}+\frac{(T_b^{\text{sum}})^2}{2}+\frac{8\gamma}{(\gamma-1)\varepsilon}\sum_{l=b}^{a(I)}T^{\text{sum}}_{l+1} +\frac{\gamma\left[\mathbb{E}\{Q^2(T^{\text{sum}}_b+1)\}-\mathbb{E}\{Q^2( I+1)\}\right]_+}{2(\gamma-1)\varepsilon}\right) \nonumber\\&\ \ \ \  +\frac{2^{2q}\gamma^{2q}(7-2\delta)^qa^2(I) \log^{q}(T_{a(I)})d^q_{a(I)}T_{a(I)}^{2-q}}{(\gamma-1)^{2q}\varepsilon^{2q}(1-(q/2))^2} \nonumber\\&\leq_{(b)} 2^{\frac{p+1}{p-1}}\left(\frac{\gamma I}{2(\gamma-1) \varepsilon}+\frac{(T_b^{\text{sum}})^2}{2}+\frac{32\gamma I}{(\gamma-1)\varepsilon} +\frac{\gamma\left[\mathbb{E}\{Q^2(T^{\text{sum}}_b+1)\}-\mathbb{E}\{Q^2( I+1)\}\right]_+}{2(\gamma-1)\varepsilon}\right) \nonumber\\&\ \ \ \  +\frac{2^{2q}\gamma^{2q}(7-2\delta)^qa^2(I) \log^{q}(2I)d^q_{a(I)}(2I)^{2-q}}{(\gamma-1)^{2q}\varepsilon^{2q}(1-(q/2))^2} \nonumber\\&\leq_{(c)} \frac{65\times 2^{\frac{2}{p-1}}\gamma I}{(\gamma-1) \varepsilon}+2^{\frac{2}{p-1}}(T_b^{\text{sum}})^2+\frac{2^{\frac{2}{p-1}}\gamma\left[\mathbb{E}\{Q^2(T^{\text{sum}}_b+1)\}-\mathbb{E}\{Q^2( I+1)\}\right]_+}{(\gamma-1)\varepsilon}\nonumber\\&\ \ \ \  +\frac{2^{q+4}\gamma^{2q}(7-2\delta)^q \log^{q+2}(2I)d^q_{a(I)}I^{2-q}}{(\gamma-1)^{2q}\varepsilon^{2q}(1-(q/2))^2} 
\end{align}
where for (a) we have used $1/p + 1/q = 1$ which gives $p/(p-1) = q$, (b) follows from Lemma~\ref{lemma:init_lemma}-1,3, and (c) follows from Lemma~\ref{lemma:init_lemma}-2 since $a(I) \leq \log_2(I) \leq 2\log(I) \leq 2\log(2I)$.

Now, notice that due to the definition of $d_l$ in \eqref{eqn:K_l}, we have
\begin{align}\label{eqn:da_h__}
    d_{a(I)} \leq  CT_{a(I)}^{\left(\frac{1}{2}-\delta\right)}+1 \leq 2^{\left(\frac{1}{2}-\delta\right)} CI^{\left(\frac{1}{2}-\delta\right)}+1
\end{align}
where the last inequality follows from Lemma~\ref{lemma:init_lemma}-1.
Hence,
\begin{align}\label{eqn:daq}
    d^q_{a(I)} \leq \left(2^{\frac{1}{2}-\delta
    }CI^{\frac{1}{2}-\delta
    }+1\right)^q  \leq 2^{\frac{3q}{2}-\delta q-1}C^qI^{\frac{q}{2}-\delta q}+2^{q-1} 
\end{align}
where the first inequality follows from~\eqref{eqn:da_h__}, and the second inequality follows from $(a+b)^x \leq 2^{x-1}(a^x+b^x)$ for nonnegative real numbers $a,b$ and $x \geq 1$ (recall that $q>1$). Using \eqref{eqn:daq} on \eqref{eqn:last_holder}, we get Lemma~\ref{lemma:unifying_lemma}.
\section{Converse Result}\label{sec:converse}
In this section we focus on proving our converse result. In particular, given $0 < \varepsilon \leq 1/144$,  we construct a finite set of environments satisfying assumptions \textbf{A1, A2}, and prove that there exists $T \in \mathbb{N}$ such that $\frac{1}{T}\sum_{t=1}^T \mathbb{E}\{Q(t)\} \geq \frac{6 \times 10^{-7}}{\varepsilon^2}$ in at least one of the environments.  Before defining the environments, we do some useful constructions.
\subsection{Preliminary Constructions}\label{sec:prel_cons}
Fix $\varepsilon$ such that $0 < \varepsilon \leq 1/144$. Define the sequence of real numbers $x_1,x_2,\dots$ such that
\begin{align}\label{eqn:x_def}
    &x_1 = \frac{7}{12}, \text{ and }x_{k+1} = x_k\left(1+ \frac{2\varepsilon}{\frac{1}{2}-\varepsilon}\right) \text{ for } k \geq 1.
\end{align}
Notice that the above is a strictly increasing sequence. Define the sequence of intervals $\mathcal{I}_1,\mathcal{I}_2,\dots$ by
\begin{align}\label{eqn:int_x_k}
    \mathcal{I}_k = (x_k,x_{k+1}]
\end{align}
We have the following claim

\noindent
\textbf{Claim 1:} For each $k \in \mathbb{N}$, we have $|\mathcal{I}_k| > 2\varepsilon$. So $x_k \rightarrow \infty$.
\begin{proof}
    Notice that,
       \begin{align}
           |\mathcal{I}_k| = \frac{2x_k\varepsilon}{\frac{1}{2}-\varepsilon} > \frac{\varepsilon}{\frac{1}{2}-\varepsilon} > 2\varepsilon, \nonumber
       \end{align}
       where the first inequality follows since $x_k \geq x_1 = 7/12 > 1/2$.
\end{proof}
Define
\begin{align}\label{eqn:K_def}
    K = \min\{k:x_{k+1} \geq 2/3\} 
\end{align}
Notice that such a $K$ exists due to claim 1. Hence, we have $2/3 \in \mathcal{I}_K$. Next, we have the following claim.

\noindent
\textbf{Claim 2:} For each $k \in [1:K]$, we have $|\mathcal{I}_k| < 3\varepsilon $.
\begin{proof}
Notice that
\begin{align}
   |\mathcal{I}_k| = \frac{2x_k\varepsilon}{\frac{1}{2}-\varepsilon} \leq \frac{4}{3}\left(\frac{\varepsilon}{\frac{1}{2}-\varepsilon}\right) < \frac{4}{3}\left(\frac{\varepsilon}{\frac{4}{9}}\right) = 3\varepsilon \nonumber
\end{align}
where the first inequality follows since $x_k \leq 2/3$ for all $k \leq K$, and the last inequality follows since $\varepsilon \leq 1/144 <  1/18$. 
\end{proof}
We now state the following lemma, which follows from Claims~1 and~2 (see Appendix~\ref{app:ref_def_lemma} for the proof).
\begin{lemma}\label{lemma:ref_def_lemma}
    We have the following.
    \begin{enumerate}
        \item For each $k \in [1:K]$, we have $[x_k,x_{k+1}] \subset [7/12,1)$.
        \item $K \geq 1/(36\varepsilon)$ and $K \geq 5$.  
    \end{enumerate}
    \begin{proof}
       See Appendix~\ref{app:ref_def_lemma}
    \end{proof}
\end{lemma}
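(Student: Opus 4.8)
The plan is to obtain both parts directly from the telescoping structure of $\{x_k\}$ together with Claims~1 and~2. For Part~1, the lower bound is immediate: since $\{x_k\}$ is strictly increasing, every $k \in [1:K]$ has $x_k \geq x_1 = 7/12$, so the left endpoint of $[x_k,x_{k+1}]$ lies in $[7/12,\infty)$. For the upper bound it suffices to show $x_{K+1} < 1$, since then $x_{k+1} \leq x_{K+1} < 1$ for every $k \leq K$. By minimality of $K$ in \eqref{eqn:K_def} we have $x_K < 2/3$: for $K \geq 2$ this is precisely the failure of the defining condition at index $K-1$, and for $K = 1$ it is just $x_1 = 7/12 < 2/3$. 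Hence $x_{K+1} = x_K + |\mathcal{I}_K| < 2/3 + 3\varepsilon$ by Claim~2, and since $\varepsilon \leq 1/144$ this is at most $2/3 + 1/48 = 11/16 < 1$. (Equivalently, one could bound the growth factor $1 + 2\varepsilon/(\tfrac12-\varepsilon)$ by $73/71$ when $\varepsilon \le 1/144$ and write $x_{K+1} \le \tfrac{73}{71}\cdot\tfrac23 < 1$.)

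For Part~2, I would telescope the interval lengths. Since $2/3 \in \mathcal{I}_K = (x_K, x_{K+1}]$ we have $x_{K+1} \geq 2/3$, while $x_1 = 7/12$; therefore $\tfrac1{12} = \tfrac23 - \tfrac7{12} \leq x_{K+1} - x_1 = \sum_{k=1}^{K}(x_{k+1}-x_k) = \sum_{k=1}^{K} |\mathcal{I}_k| < 3K\varepsilon$, where the last strict inequality applies Claim~2 to each $k \in [1:K]$. Rearranging gives $K > 1/(36\varepsilon)$, which in particular yields $K \geq 1/(36\varepsilon)$. For the remaining claim, $\varepsilon \leq 1/144$ forces $1/(36\varepsilon) \geq 144/36 = 4$, so $K > 4$; since $K$ is a positive integer, $K \geq 5$.

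I do not expect a genuine obstacle here; the argument is elementary arithmetic. The only points needing a little care are (i) treating $K = 1$ separately when invoking minimality of $K$ to get $x_K < 2/3$, and (ii) tracking the direction of the inequalities — Claim~2 (an upper bound on $|\mathcal{I}_k|$) is what lower-bounds $K$, whereas Claim~1 (a lower bound on $|\mathcal{I}_k|$) would be the tool for an upper bound on $K$. Everything else reduces to manipulations with the standing assumption $\varepsilon \leq 1/144$.
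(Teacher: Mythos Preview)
Your proof is correct and follows essentially the same approach as the paper: both use $x_k \ge x_1 = 7/12$ together with Claim~2 and $\varepsilon \le 1/144$ for Part~1, and the telescoping/covering bound $\sum_{k=1}^K |\mathcal{I}_k| \ge 1/12$ with Claim~2 for the first inequality in Part~2. The only notable difference is in deriving $K \ge 5$: the paper plugs $\varepsilon = 1/144$ into the closed-form $x_{K+1} = \tfrac{7}{12}\bigl(1+\tfrac{2\varepsilon}{1/2-\varepsilon}\bigr)^K \ge 2/3$ and solves numerically, whereas you obtain it more cleanly by noting that the strict inequality in Claim~2 gives $K > 1/(36\varepsilon) \ge 4$, hence $K \ge 5$ --- a small but genuine simplification.
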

\subsection{Environment Construction}
Now, we are ready to define the environments. In particular, we construct $K$ environments satisfying Assumptions~\textbf{A1, A2}, where $K$ is defined in \eqref{eqn:K_def}.  In all the environments, arrivals $A(t)$ are independent samples of a $\text{Bernoulli}(1/2)$ distribution (hence $\lambda = 1/2$). In the $k$-th  environment ($k \in [1:K]$), the capacities $C(t)$ are i.i.d. samples of a random variable $X_k$ with a CDF $F_{X_k}$, where
\begin{align}
    F_{X_k}(x) = \begin{cases}
        0 & \text{ if } x < \frac{1}{2} - \varepsilon\\
        1 - \frac{\frac{1}{2}-\varepsilon}{x} & \text{ if }x \in \left[\frac{1}{2} - \varepsilon ,x_k\right)\cup [x_{k+1},1)\\
         1 - \frac{\frac{1}{2}-\varepsilon}{x_k}& \text{ if }x \in [x_k,x_{k+1})\\
         1 & \text{ if } x \geq 1 \nonumber.
    \end{cases}
\end{align}
Notice that the definition of $F_{X_k}$ above is valid due to Lemma~\ref{lemma:ref_def_lemma}-1 and $x_k < x_{k+1}$. Additionally, we observe that $F_{X_1},F_{X_2},\dots,F_{X_K}$ are nonnegative, nondecreasing, right continuous functions satisfying $F_{X_k}(x) = 0$ for $x \leq 0$ and $F_{X_k}(x) = 1$ for $x \geq 1$, and hence are valid CDFs of random variables supported in $[0,1]$. Let us define the functions $g_k:[0,1] \to [0,1]$ for each $k \in [1:K]$, where $g_k(x) = x\mathbb{P}\{X_k \geq x \}$. A simple calculation shows
\begin{align}\label{eqn:g_k_def}
    g_k(x) = \begin{cases}
        x & \text{ for } x \in \left[0,\frac{1}{2}-\varepsilon\right]\\
        \frac{1}{2} - \varepsilon & \text{ for } x \in \left(\frac{1}{2} - \varepsilon , 1\right]\setminus \mathcal{I}_k\\
         \frac{x\left(\frac{1}{2}- \varepsilon\right)}{x_k} & \text{ for } x \in \mathcal{I}_k.
    \end{cases}
\end{align}
We have the following claim on $g_k$, which follows directly by the definition of $g_k$ in \eqref{eqn:g_k_def}.

\noindent
\textbf{Claim 3}: For each $k \in [1:K]$, we have that the function $g_k$ defined in  \eqref{eqn:g_k_def} is maximized in $[0,1]$ at $x_{k+1}$ and $g_k(x_{k+1}) = \frac{1}{2}+\varepsilon$.

Since $\lambda = 1/2$, Claim 3 ensures for each $k \in [1:K]$ that $\max_{x \in [0,1]} g_k(x) -\lambda = \varepsilon$. Hence, the functions $g_1,g_2,\dots,g_K$ satisfy the conditions of Assumption~\textbf{A2}. 
Figure~\ref{fig_sim_1} denotes the plots of the above CDFs, and the functions $g_k$.
\begin{figure}
\centering
{\includegraphics[width=0.4\linewidth]{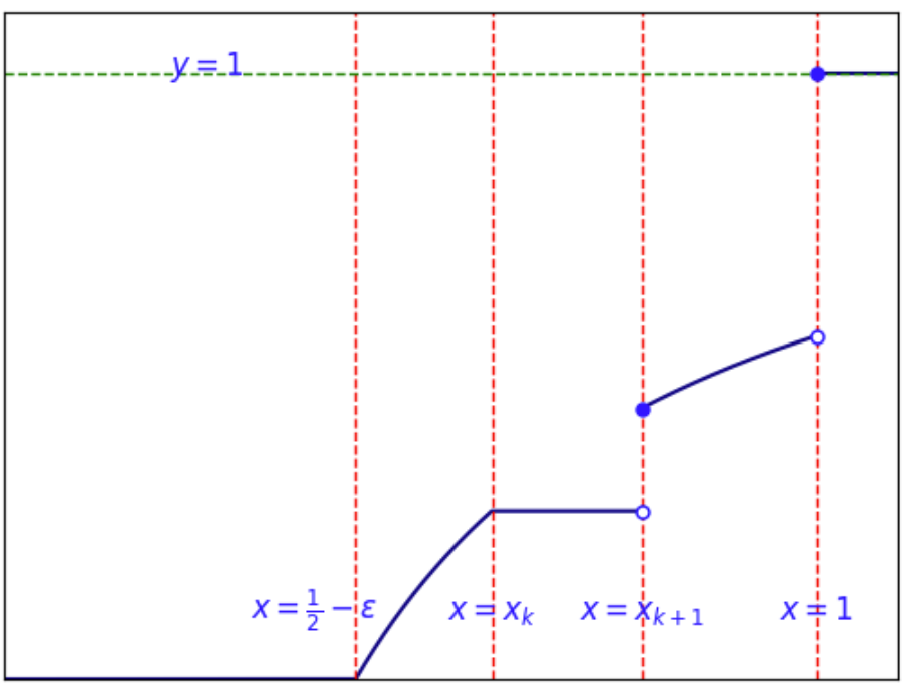}%
\label{Fig:42}}
\hfil
\centering
{\includegraphics[width=0.4\linewidth]{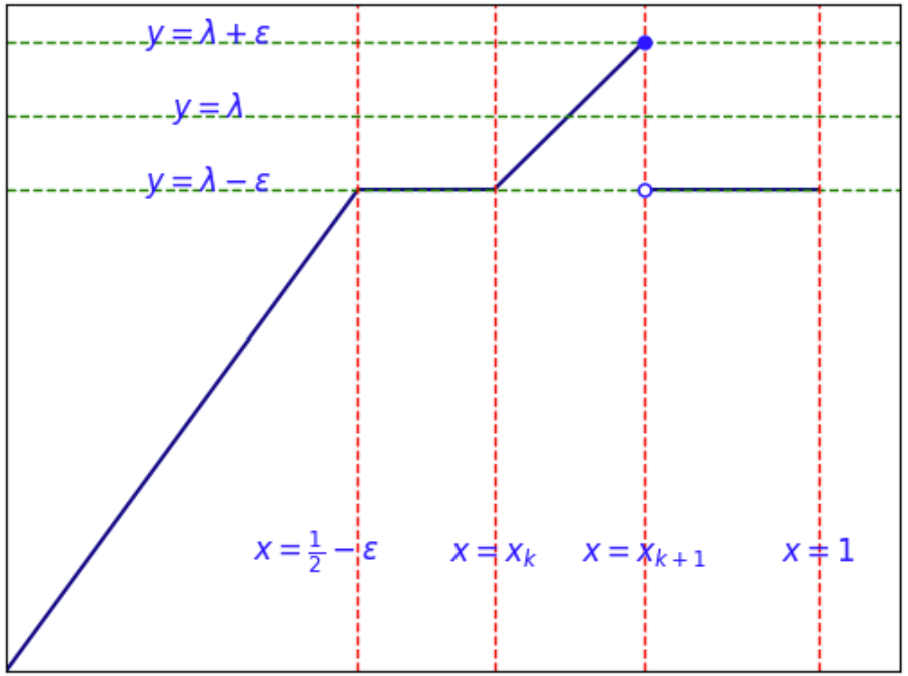}%
\label{Fig:47}}
\caption{Plot of the CDF, $F_{X_k}$ for some $k \in [1:K]$, and the corresponding function $g_k$ \textbf{Left:} Plot of $F_{X_k}$. \textbf{Right:} Plot of $g_k$. }
\label{fig_sim_1}
\end{figure}
\subsection{Converse Bound}
Now, we are ready to introduce the lemma that establishes the converse bound. The proof of the lemma has a similar structure to the proof of the converse result in~\cite{Freund2023}. 
\begin{theorem}\label{thm:lower bound}
    Consider any $0 < \varepsilon \leq 1/144$. Given an algorithm to choose the rates $V(t)$, there exists an environment $k^{'} \in [1:K]$ and $T \in \mathbb{N}$ such that in the Environment $k^{'}$, we have 
    \begin{align}
        \frac{1}{T}\sum_{t=1}^T \mathbb{E}\{Q(t)\} \geq \frac{6 \times 10^{-7}}{\varepsilon^2}. \nonumber
    \end{align}
    \begin{proof}
        See Appendix~\ref{app:lower_bound}.
    \end{proof}
\end{theorem}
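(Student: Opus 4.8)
The plan is to combine a Lyapunov-drift lower bound with a change-of-measure argument against a single \emph{reference} channel, mirroring the structure of the converse in \cite{Freund2023}. The drift bound will say that in the $k$-th environment the time-average queue is $\Omega(\varepsilon T)$ over a horizon $T$ \emph{unless} the algorithm plays rates inside the unique stabilizing interval $\mathcal{I}_k$ (of width $\Theta(\varepsilon)$) for a constant fraction of the $T$ slots; the change-of-measure step will show that, because the $\mathcal{I}_k$ are disjoint and there are $K=\Theta(1/\varepsilon)$ of them, no algorithm can do this in all $K$ environments simultaneously. I will take $T=\Theta(1/\varepsilon^3)$ (with a sufficiently small leading coefficient), so that $\varepsilon T=\Theta(1/\varepsilon^2)$, and write $M_k(\tau)=\sum_{t=1}^{\tau}\mathbbm{1}\{V(t)\in\mathcal{I}_k\}$.

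For the drift bound, the key elementary fact — read off from \eqref{eqn:g_k_def} and Claim~3 — is that $\lambda-g_k(v)\ge\varepsilon$ for every rate $v\notin\mathcal{I}_k$ while $\lambda-g_k(v)\ge-\varepsilon$ for $v\in\mathcal{I}_k$. Since in the $k$-th environment $Q(\tau+1)\ge Q(\tau)+A(\tau)-S(\tau)$ with $\mathbb{E}\{S(\tau)\mid V(\tau)\}=g_k(V(\tau))$, taking expectations gives $\mathbb{E}\{Q(\tau+1)\}-\mathbb{E}\{Q(\tau)\}\ge\varepsilon\bigl(1-2\,\mathbb{P}\{V(\tau)\in\mathcal{I}_k\}\bigr)$. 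Telescoping from $Q(1)=0$ and summing $\tau$ over $[0:T-1]$ then yields, in environment $k$,
\begin{align}
\frac{1}{T}\sum_{t=1}^{T}\mathbb{E}\{Q(t)\}\;\ge\;\frac{\varepsilon(T-1)}{2}-2\varepsilon\,\mathbb{E}_k\{M_k(T)\}.\nonumber
\end{align}
So it suffices to find a single index $k^{*}$ for which $\mathbb{E}_{k^{*}}\{M_{k^{*}}(T)\}\le\tfrac18 T$ (say): then the right-hand side is $\ge\tfrac14\varepsilon(T-1)-\tfrac12\varepsilon=\Omega(1/\varepsilon^2)$, and carrying the constants through gives the claimed $6\times10^{-7}/\varepsilon^2$.

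To locate $k^{*}$, introduce the reference environment $E_0$ whose channel CDF is the part common to all the $F_{X_k}$, namely $F_0(x)=1-(\tfrac12-\varepsilon)/x$ on $[\tfrac12-\varepsilon,1)$ with the remaining mass at $1$. (Note $E_0$ need not satisfy Assumption~\textbf{A2}; it is only an analysis device.) Because $F_0$ agrees with $F_{X_k}$ outside $[x_k,x_{k+1})$, the laws of the observable process (arrivals, chosen rates, ACK/NACK bits) under $E_0$ and under $E_k$ differ only on rounds with $V(t)\in\mathcal{I}_k$, and on such a round the feedback bit is $\mathrm{Bernoulli}\bigl((\tfrac12-\varepsilon)/V(t)\bigr)$ under $E_0$ versus $\mathrm{Bernoulli}\bigl((\tfrac12-\varepsilon)/x_k\bigr)$ under $E_k$; by Lemma~\ref{lemma:ref_def_lemma}-1 and Claim~2 these two parameters lie in a fixed subinterval of $(0,1)$ bounded away from the endpoints and differ by $O(\varepsilon)$, so their KL divergence is at most some $\bar\kappa=O(\varepsilon^2)$. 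Disjointness of $\mathcal{I}_1,\dots,\mathcal{I}_K$ forces $\sum_{k=1}^{K}M_k(T)\le T$ pointwise, hence $\mathbb{E}_0\{M_{k^{*}}(T)\}\le T/K\le 36\varepsilon T$ for some $k^{*}$, using $K\ge 1/(36\varepsilon)$ from Lemma~\ref{lemma:ref_def_lemma}-2. The divergence-decomposition (chain-rule) identity for the observation process then gives $\mathrm{KL}\bigl(\mathbb{P}_{E_0}\,\|\,\mathbb{P}_{E_{k^{*}}}\bigr)\le\bar\kappa\,\mathbb{E}_0\{M_{k^{*}}(T)\}\le 36\,\bar\kappa\,\varepsilon T$, which is a small absolute constant once $T=\Theta(1/\varepsilon^3)$ has a small enough leading coefficient. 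Since $M_{k^{*}}(T)$ is a function of the observation process taking values in $[0,T]$, Pinsker's inequality yields $\bigl|\mathbb{E}_{k^{*}}\{M_{k^{*}}(T)\}-\mathbb{E}_0\{M_{k^{*}}(T)\}\bigr|\le T\sqrt{\mathrm{KL}/2}$, whence $\mathbb{E}_{k^{*}}\{M_{k^{*}}(T)\}\le 36\varepsilon T+T\sqrt{18\,\bar\kappa\,\varepsilon T}$; tuning the horizon's leading constant (and taking $\varepsilon$ small — for $\varepsilon$ close to the upper limit $1/144$ one uses the precise value of $K$ rather than the crude bound $1/(36\varepsilon)$) makes this at most $\tfrac18 T$, closing the argument through the drift bound of the previous paragraph.

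The delicate part is precisely the last step: the numbers are genuinely borderline. In the operative regime one has (pulls in $\mathcal{I}_{k^{*}}$ under $E_0$) $\le T/K=\Theta(1/\varepsilon^2)$ and (information per such pull) $=\bar\kappa=\Theta(\varepsilon^2)$, so their product is $\Theta(1)$ — the construction is deliberately tuned to the threshold at which learning the right interval becomes feasible. Consequently the divergence can be kept below a useful constant only by choosing $T$ (equivalently, the target queue size) with a small leading coefficient, and the residual terms ($36\varepsilon T$ from the pigeonhole and $T\sqrt{\mathrm{KL}/2}$ from the transfer) must each be forced below a fixed fraction of $T$ at the same time. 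Getting all of these — together with the exact value of $K$ at the upper end of the $\varepsilon$-range and the exact Bernoulli-KL constant for rates in $[7/12,1)$ — to cohere is the real content of the proof, and is what pins down the small explicit constant $6\times10^{-7}$ in the statement. A sharper two-point bound (e.g.\ Bretagnolle--Huber) in place of Pinsker, or truncating the transfer at a stopping time of $\Theta(1/\varepsilon^2)$ distinguishing pulls, are the natural levers for improving the constants.
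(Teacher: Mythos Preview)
Your approach is essentially the one in the paper: the drift lower bound $\mathbb{E}^k\{Q(t)\}\ge\varepsilon(t-1)-2\varepsilon\,\mathbb{E}^k\{M_k(t-1)\}$, the reference environment $E_0$, the chain-rule KL computation giving a per-pull divergence of $O(\varepsilon^2)$ on rounds with $V(t)\in\mathcal{I}_k$, the Pinsker transfer, and the horizon $T=\Theta(1/\varepsilon^3)$ all match.

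The one tactical difference is how you exploit $\sum_k M_k\le T$. You pigeonhole to a single $k^*$ with $\mathbb{E}_0\{M_{k^*}(T)\}\le T/K$ and then transfer; the paper instead sums the drift bound over \emph{all} $k$ first (so the $\sum_k\mathbb{E}^0\{N_k(t-1)\}$ term is bounded by $t-1$ exactly), applies Cauchy--Schwarz to $\sum_k\sqrt{\mathbb{E}^0\{N_k(T)\}}\le\sqrt{KT}$, and only at the end selects the environment achieving at least the average. This matters for the constants: with your pigeonhole, the $T/K$ term (via $K\ge5$ at the boundary $\varepsilon=1/144$) already consumes $0.4\varepsilon T$ of the leading $0.5\varepsilon T$, leaving only $0.1\varepsilon T$ before the transfer term---which is why your threshold $\mathbb{E}_{k^*}\{M_{k^*}(T)\}\le T/8$ is not actually attainable (pigeonhole alone gives only $T/5$). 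The paper's aggregation keeps $0.3\varepsilon(T-1)$ at that stage and reaches the stated $6\times10^{-7}$ without further tricks. You correctly diagnosed that the numbers are borderline; the fix is precisely the sum-then-Cauchy--Schwarz step.
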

The main idea of the proof is to construct an Environment 0 with Bernoulli$(1/2)$ arrivals, where the channel capacities $C(t)$ are i.i.d. samples of a random variable $X_0$ whose CDF $F_{X_0}$ satisfies $\max_{ x \in [0,1]} x\mathbb{P}\{X_0 \geq x \} = 1/2-\varepsilon$. In this environment, it is impossible to stabilize the queue. It is possible to construct $X_0$ such that, for any $k \in [1:K]$, the CDF functions $F_{X_0}$ and $F_{X_k}$ are the same outside of the small interval $\mathcal{I}_k$ defined in \eqref{eqn:int_x_k}. This ensures that the Kullback–Leibler divergence between the distributions of $X_0$ and $X_k$ remains small, implying that any algorithm is expected to behave similarly in the two environments. Consequently, the queue backlogs in the Environment~$k$ can be lower bounded using the fact that the queue cannot be stabilized in Environment~0.

\subsection{Empirical Behavior of Algorithm~\ref{algo:UCB_mesh} in Environment 1}
To illustrate the qualitative behavior of our algorithm (Algorithm~\ref{algo:UCB_mesh}), Figure~\ref{Fig:22} shows the simulated time-average queue size in Environment~1 for $\varepsilon = 1/144$. The algorithm exhibited similar performance across all five environments ($K = 5$, when $\varepsilon = 1/144$). Hence, we only plot results for Environment~1. Understanding the algorithm’s behavior in these environments is important, since they correspond to the worst-case instances that form the basis of the converse result.
\begin{figure}
\centering
\includegraphics[width=0.7\linewidth]{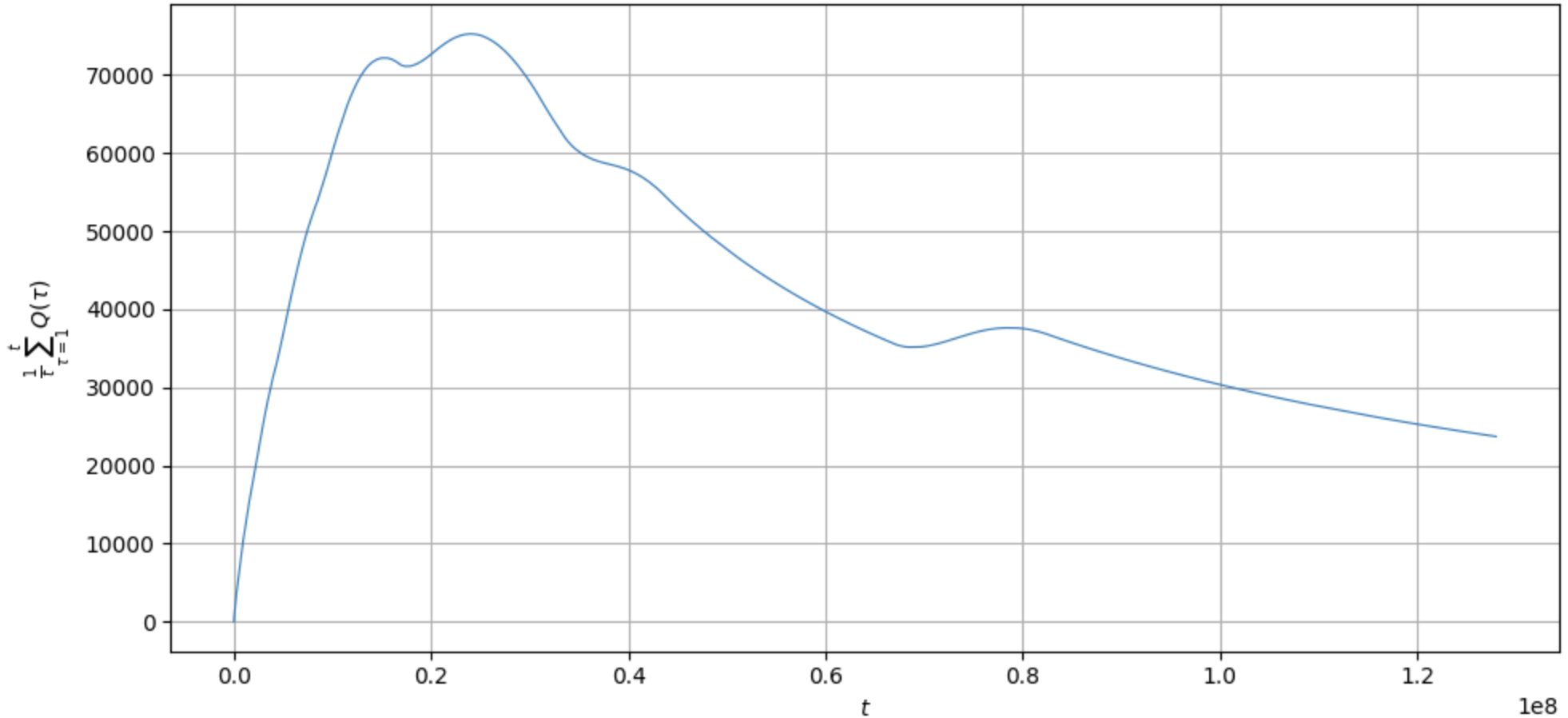}%
\caption{Plot of $\frac{1}{t}\sum_{\tau=1}^t Q(\tau)$ vs. $t$ for Environment 1 when $\varepsilon = 1/144$}\label{Fig:22}
\end{figure}
\section{Known $\varepsilon$}\label{sec:known}

We begin with a simpler discrete model that we then use to address the main (continuous) setup. Consider a queueing system whose service is controlled by a \emph{discrete} multi-armed bandit with arm set \(\mathcal{K}\) (let \({d} \triangleq |\mathcal{K}|\)). In each slot \(t\) and for each arm \(k\in\mathcal{K}\), a service rate \(S_k(t)\in[0,1]\) is realized. 
The vector of service rates at time $t$, denoted by $\{S_k(t)\}_{k\in\mathcal{K}}$, is i.i.d.\ over time with an unknown distribution.
Let \(\mu_k \triangleq \EE[S_k(1)]\) denote the (unknown) mean service rate of arm \(k\). When the controller selects arm \(K(t)\in\mathcal{K}\), the queue evolves as
\[
Q(t+1) \;=\; \bigl[\,Q(t) + A(t) - S_{K(t)}(t)\,\bigr]_+ .
\]
Let \(k^*\in\mathcal{K}\) be an arm with maximal mean, \(\mu^*\triangleq \mu_{k^*}\), and define \(\vedisc \triangleq \mu^* - \lambda>0\) as the discrete analogue of \(\varepsilon\) for this section.

To bound the average queue length under UCB1~\cite{Auer2002FinitetimeAO}, we follow the two-stage approach of~\cite{Freund2023}, which is of independent interest. \emph{Stage I (learning):} Standard regret arguments bound the time needed to identify (up to estimation error) the arm with the largest mean. \emph{Stage II (control):} Once the estimate is sufficiently accurate, a Lyapunov drift analysis~\cite{Neely2010coml} characterizes the regime in which the system operates near the optimal rate.

The next two lemmas state the main results for each stage and are proved in Appendices~\ref{sec:stageI} and~\ref{sec:stageII}.

\begin{lemma}[Stage I]\label{thm:log(H)}
For any integer \(H>{d}\) and any \(\tilde\Delta\in(0,\vedisc)\),
\[
\frac{1}{H}\sum_{t=1}^H \EE\bigl[Q(t)\bigr]
\;\le\;
\frac{2}{\vedisc-\tilde\Delta}
\;+\;
\frac{8\,{d}\,\log H}{\tilde\Delta}.
\]
Optimizing over \(\tilde\Delta\) further gives
\[
\frac{1}{H}\sum_{t=1}^H\EE\bigl[Q(t)\bigr]
\;\le\;
\frac{2}{\vedisc}\Bigl(1+2\sqrt{\,{d}\,\log H}\Bigr)^2
\;\le\;
\frac{4}{\vedisc}\Bigl(1+4 {d}\,\log H\Bigr).
\]
\end{lemma}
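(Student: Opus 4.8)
The plan is a two-stage drift-plus-regret argument, following the outline given just before the lemma. Begin with the quadratic Lyapunov function $Q(t)^2$. Squaring the recursion $Q(t+1)=[Q(t)+A(t)-S_{K(t)}(t)]_+$ and using $A(t)-S_{K(t)}(t)\in[-1,1]$ gives $Q(t+1)^2\le Q(t)^2+1+2Q(t)\big(A(t)-S_{K(t)}(t)\big)$. Taking expectations and conditioning on the history through slot $t$ — so that $A(t)$ is independent of $Q(t)$ with mean $\lambda$, and, since the service vectors are i.i.d.\ in time and $K(t)$ is history-measurable, $S_{K(t)}(t)$ has conditional mean $\mu_{K(t)}$ — yields the one-step drift
\[
\tfrac12\EE[Q(t+1)^2]-\tfrac12\EE[Q(t)^2]\;\le\;\tfrac12-\EE\big[Q(t)(\mu^*-\lambda)\big]+\EE\big[Q(t)\,\Delta_{K(t)}\big],
\]
where $\Delta_k\triangleq\mu^*-\mu_k\ge0$. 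Writing $\Delta_{K(t)}\le\tilde\Delta+(\Delta_{K(t)}-\tilde\Delta)^+$ and recalling $\vedisc=\mu^*-\lambda$, the right-hand side becomes $\tfrac12-(\vedisc-\tilde\Delta)\EE[Q(t)]+\EE\big[Q(t)(\Delta_{K(t)}-\tilde\Delta)^+\big]$: the split trades a portion $\tilde\Delta$ of the stabilizing drift for a ``regret remainder'' that is nonzero only when a $\tilde\Delta$-suboptimal arm is pulled.

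Next, sum this inequality over $t\in[1:H]$, using $Q(1)=0$ and $Q(H+1)^2\ge0$, to get
\[
(\vedisc-\tilde\Delta)\sum_{t=1}^{H}\EE[Q(t)]\;\le\;\frac{H}{2}\;+\;\sum_{t=1}^{H}\EE\big[Q(t)(\Delta_{K(t)}-\tilde\Delta)^+\big].
\]
Dividing by $(\vedisc-\tilde\Delta)H$, the first term on the right produces the $\tfrac{2}{\vedisc-\tilde\Delta}$ contribution (in fact $\tfrac{1}{2(\vedisc-\tilde\Delta)}$, leaving room). All the work is in the remainder. Here I would use the deterministic queue bound $Q(t)\le t-1$ (the analogue of Lemma~\ref{eqn:det_queue_bound}, valid since arrivals lie in $[0,1]$ and $Q(1)=0$) together with the UCB1 analysis of~\cite{Auer2002FinitetimeAO}: a $\Delta_k$-suboptimal arm is selected only $O(\log H/\Delta_k^2)$ times in expectation, so the cumulative pseudo-regret restricted to the $\le d-1$ arms with $\Delta_k>\tilde\Delta$ satisfies $\sum_{k:\Delta_k>\tilde\Delta}\Delta_k\,\EE[N_k(H)]\le\tfrac{8d\log H}{\tilde\Delta}+O(d)$, where $N_k(H)$ is the number of pulls of arm $k$ in the first $H$ slots. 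Feeding these two ingredients into the remainder and dividing by $H$ yields the $\tfrac{8d\log H}{\tilde\Delta}$ term. The delicate point — and the main obstacle — is that simply bounding $Q(t)$ by $H$ inside the remainder costs a spurious factor $1/(\vedisc-\tilde\Delta)$; to land exactly on the stated constant one must instead exploit that UCB1's exploratory pulls are not only few but also concentrated at early rounds, so that the queue values weighting those pulls are themselves small. Controlling this coupling between the random, history-dependent set of exploratory slots and the queue trajectory is the crux of the Stage~I analysis and is why the lemma is proved in Appendix~\ref{sec:stageI}.

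The second displayed bound is a one-variable optimization. Writing the first bound as $f(\tilde\Delta)=\tfrac{a}{\vedisc-\tilde\Delta}+\tfrac{b}{\tilde\Delta}$ with $a=2$, $b=8d\log H$, elementary calculus (equivalently, a weighted AM--GM) gives $\min_{0<\tilde\Delta<\vedisc}f(\tilde\Delta)=\tfrac{(\sqrt a+\sqrt b)^2}{\vedisc}$, attained at $\tilde\Delta=\vedisc\sqrt b/(\sqrt a+\sqrt b)$; substituting $a,b$ gives $\tfrac{2}{\vedisc}\big(1+2\sqrt{d\log H}\,\big)^2$. The final inequality is just $(1+2x)^2\le2(1+4x^2)$ (i.e.\ $(2x-1)^2\ge0$) applied with $x=\sqrt{d\log H}$, giving $\tfrac{4}{\vedisc}(1+4d\log H)$.
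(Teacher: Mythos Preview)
Your drift computation and the optimization step are both correct, but the argument has a genuine gap exactly where you flag it: you never control the remainder $\sum_{t=1}^{H}\EE\bigl[Q(t)(\Delta_{K(t)}-\tilde\Delta)^+\bigr]$. The suggested fix---that UCB1's suboptimal pulls are ``concentrated at early rounds'' so the $Q(t)$ weights they pick up are small---does not hold in the form you need. Under UCB1 the pulls of a $\Delta_k$-suboptimal arm are spread across the whole horizon (the $j$-th such pull typically lands near time $e^{\,cj\Delta_k^2}$), and an Abel-summation estimate combining $Q(t)\le t-1$ with $\EE[N_k(t)]\le 8\log t/\Delta_k^2+O(1)$ still gives only a bound of order $\frac{d\log H}{\tilde\Delta(\vedisc-\tilde\Delta)}$ for $\frac{1}{(\vedisc-\tilde\Delta)H}\sum_t\EE[Q(t)(\Delta_{K(t)}-\tilde\Delta)^+]$; the extra $1/(\vedisc-\tilde\Delta)$ does not disappear.

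The paper avoids forming the product $Q(t)\cdot(\Delta_{K(t)}-\tilde\Delta)^+$ altogether by decomposing the \emph{queue} rather than the drift, following~\cite{Freund2023}. With $\widetilde\mu=\mu^*-\tilde\Delta$, define an auxiliary service $\widetilde S_k(t)=S_k(t)-\mu_k+\widetilde\mu$ (so every arm has mean $\widetilde\mu$) and let $\widetilde Q$ evolve under $\widetilde S$. A pathwise last-empty-time argument yields $Q(t)-\widetilde Q(t)\le\sum_{\tau<t}[\widetilde\mu-\mu_{K(\tau)}]_+\le\sum_{k:\Delta_k\ge\tilde\Delta}\Delta_k N_k(t-1)$, with no $Q$-weight anywhere. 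The $\widetilde Q$ part is handled by standard Lyapunov drift, giving $\tfrac{1}{H}\sum_t\EE[\widetilde Q(t)]\le\tfrac{2}{\vedisc-\tilde\Delta}$ (the constant $2$ coming from $(A-\widetilde S)^2\le4$). For the regret part, averaging over $t$ produces $\tfrac{1}{H}\sum_{k:\Delta_k\ge\tilde\Delta}\Delta_k\sum_{t=1}^{H-1}\EE[N_k(t)]$; plugging in the UCB1 pull bound and summing the inner sum over $t$ gives $\Delta_k\sum_{t<H}\EE[N_k(t)]\le\tfrac{8}{\tilde\Delta}H\log H$ for each such $k$, and the outer $1/H$ cancels exactly one $H$ with no leftover $\vedisc$-dependence. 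This decoupling---not a timing argument for exploratory pulls---is the missing idea.
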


\begin{lemma}[Stage II]\label{thm:stage2}
For any integer \(H\geq{1}\), the time-average expected queue size obeys
\[
\frac{1}{H}\,\EE\Bigl[\sum_{t=1}^H Q(t)\Bigr]
\;\le\;
\frac{4}{\vedisc}
\;+\;
\frac{32}{H\,\vedisc}
\;+\;
16\,\frac{{d}^2}{H\,\vedisc^4}
\Bigl(
\tfrac13
+8\bigl(1+\log H\bigr)\,\log H
\Bigr)^2.
\]
\end{lemma}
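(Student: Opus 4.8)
The plan is to run a Lyapunov drift argument with the quadratic potential $\tfrac12 Q^2(t)$, in the spirit of the unknown-$\varepsilon$ analysis but without any phase bookkeeping, and then to close a self-referential inequality for $\bar Q:=\tfrac1H\EE\{\sum_{t=1}^H Q(t)\}$ using Lemma~\ref{lemma:abX_lemma}. The case $H=1$ is trivial since $Q(1)=0$, so assume $H\ge 2$.

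First I would derive the one-step drift. Squaring the recursion $Q(t+1)=[Q(t)+A(t)-S_{K(t)}(t)]_+$ and using $A(t),S_{K(t)}(t)\in[0,1]$ gives $\tfrac12\EE\{Q^2(t+1)-Q^2(t)\}\le \tfrac12+\EE\{Q(t)(A(t)-S_{K(t)}(t))\}$. Since the arrival and the full service vector at slot $t$ are independent of the history $\mathcal F_{t-1}$, while $Q(t)$ and $K(t)$ are $\mathcal F_{t-1}$-measurable, conditioning yields $\EE\{Q(t)(A(t)-S_{K(t)}(t))\}=\EE\{Q(t)(\lambda-\mu_{K(t)})\}=-\omega\,\EE\{Q(t)\}+\EE\{Q(t)(\mu^*-\mu_{K(t)})\}$. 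Telescoping over $t\in[1:H]$, discarding the nonpositive term $-\tfrac12\EE\{Q^2(H+1)\}$, and using $Q(1)=0$ gives
\[
\omega\sum_{t=1}^{H}\EE\{Q(t)\}\;\le\;\frac{H}{2}\;+\;\sum_{t=1}^{H}\EE\{Q(t)(\mu^*-\mu_{K(t)})\},
\]
so everything reduces to controlling the regret-weighted backlog $\mathcal R_H:=\sum_{t=1}^{H}\EE\{Q(t)(\mu^*-\mu_{K(t)})\}$.

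To bound $\mathcal R_H$ I would split each term on the UCB ``good event'' $\mathcal G_t$ that every arm's confidence interval holds at slot $t$. On $\mathcal G_t^c$ use $Q(t)\le t-1$ (Lemma~\ref{eqn:det_queue_bound}) and $\mu^*-\mu_{K(t)}\le 1$; a Hoeffding union bound over arms and sample counts makes $\mathbb{P}\{\mathcal G_t^c\}$ of order $t^{-3}$ (the confidence width being tuned so the arm count is absorbed, as in the unknown-$\varepsilon$ analysis), so $\sum_t(t-1)\mathbb{P}\{\mathcal G_t^c\}$ is a horizon-free constant, which is the source of the $\tfrac{32}{H\omega}$ term. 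On $\mathcal G_t$, the standard UCB1 comparison $\text{UCB}_{K(t)}(t)\ge\text{UCB}_{k^*}(t)$ gives $\mu^*-\mu_{K(t)}\le 2\sqrt{2\log H/\max\{1,N_{K(t)}(t-1)\}}$, where $N_k(t-1)$ is the play count of arm $k$ before slot $t$. I then apply Cauchy--Schwarz pathwise,
\[
\sum_{t=1}^{H}\frac{Q(t)}{\sqrt{\max\{1,N_{K(t)}(t-1)\}}}\;\le\;\Big(\sum_{t=1}^{H}Q^2(t)\Big)^{1/2}\Big(\sum_{t=1}^{H}\frac{1}{\max\{1,N_{K(t)}(t-1)\}}\Big)^{1/2},
\]
bound the second factor by grouping slots by the chosen arm into harmonic sums (this, together with the $\log H$ from the confidence width, produces the polynomial $\tfrac13+8(1+\log H)\log H$), and --- the crucial move --- bound $\sum_t Q^2(t)\le 2^{1/4}(\sum_t Q(t))^{3/4}$ via Lemma~\ref{lemma:q_pow_bnd} with $p=2$, legitimate because $Q(1)=0$ and $|Q(t+1)-Q(t)|\le 1$. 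Taking expectations and using Jensen (concavity of $x\mapsto x^{3/4}$) bounds $\mathcal R_H$ on the good event by $b\,\omega H\,\bar Q^{3/4}$ with $b=\tfrac{2^{1/4}d^{1/2}}{\omega H^{1/4}}\big(\tfrac13+8(1+\log H)\log H\big)^{1/2}$.

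Assembling the pieces yields $\bar Q\le a+b\,\bar Q^{3/4}$ with $a=\tfrac{1}{2\omega}+\tfrac{c_1}{H\omega}$ ($c_1$ the horizon-free bad-event constant). Setting $X=\bar Q^{1/4}$ turns this into $X^4\le a+bX^3$, and Lemma~\ref{lemma:abX_lemma} with $d=4$ gives $\bar Q\le 8a+8b^4$, which is exactly $\tfrac{4}{\omega}+\tfrac{32}{H\omega}+16\,\tfrac{d^2}{H\omega^4}\big(\tfrac13+8(1+\log H)\log H\big)^2$ once the constant $2^{1/4}$ from Lemma~\ref{lemma:q_pow_bnd} is raised to the fourth power. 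The main obstacle is precisely the coupling between the backlog $Q(t)$ and the exploration term $1/\sqrt{N_{K(t)}(t-1)}$: these cannot be decoupled by a naive product bound, and the remedy is the H{\"o}lder/Cauchy--Schwarz step together with the self-bounding estimate $\sum_t Q^2(t)\le \sqrt2(\sum_t Q(t))^{3/2}$, which converts the estimate into a sublinear-in-$\bar Q$ inequality that Lemma~\ref{lemma:abX_lemma} can close. Pinning down the displayed constants then only requires careful bookkeeping of the UCB1 width and the harmonic sums.
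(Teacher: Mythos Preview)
Your plan is the same as the paper's: quadratic Lyapunov drift, good/bad event split, Cauchy--Schwarz paired with the self-bounding estimate $\sqrt{\sum_t Q^2(t)}\le 2^{1/4}(\sum_t Q(t))^{3/4}$ from Lemma~\ref{lemma:q_pow_bnd} with $p=2$, and closure via Lemma~\ref{lemma:abX_lemma} with exponent $4$. Two points, however, are handled differently in the paper and are needed to land on the \emph{exact} displayed constants.

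First, the paper separates the initialization slots $t\in[1:d]$ (where UCB1 plays arm $t$ deterministically, so no good-event argument applies) from the UCB-selection slots $t\ge d+1$. It defines a single function $h(t)$ equal to $\mu^*-\mu_t$ for $t\le d$ and to $2\sqrt{2\log t/N_{K(t)}(t-1)}$ for $t\ge d+1$, and applies Cauchy--Schwarz to $\sum_t Q(t)h(t)$. The constant $\tfrac13$ does \emph{not} come from a harmonic sum: it comes from $\sum_{t=1}^{d}(\mu^*-\mu_t)^2\le d/3$, which the paper proves using the one-sided Lipschitz property of $g$ (Lemma~\ref{lem:lip}) and the fact that the arms are the grid points $k/d$. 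Your uniform treatment (bounding $\mu^*-\mu_{K(t)}$ by the confidence radius for all $t$) yields $8d\log H$ for these slots rather than $d/3$, so the polynomial you obtain is strictly larger than $\tfrac13+8(1+\log H)\log H$.

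Second, the $d$-factor from the union bound in $\mathbb{P}\{\mathcal G(t-1)^c\}\le 2d\,t^{-3}$ is \emph{not} absorbed by re-tuning the confidence width (the paper uses vanilla UCB1 with radius $\sqrt{2\log t/N_k}$). It is absorbed because the bad-event sum only runs over $t\ge d+1$, and the paper uses $\sum_{t\ge d+1}t^{-2}\le 1/d$ to cancel the $d$; this is what makes $4d\sum_{t\ge d+1}t^{-2}\le 4$ and hence produces $32/(H\omega)$ after the factor $8$ from Lemma~\ref{lemma:abX_lemma}. Both corrections are minor but required if you want the constants as stated.
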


Next, we discretize the continuous domain \([0,1]\) into a finite grid and treat it as a discrete-armed bandit. Running UCB1 and applying the same analysis yields an upper bound that matches the converse in Section~\ref{sec:converse} up to polylogarithmic factors.

\subsection{Discretization and Combining the Two Bounds}
Set the mesh parameter \(d \triangleq \lceil 3/\velb \rceil\). We divide the interval \([0,1]\) into \({d}\) equally spaced nonzero points.\footnote{Unlike \cite{Kleinberg2003}, which chooses \({d}\) as a function of a time horizon, here \({d}\) depends solely on the known capacity slack \(\velb\).} Define
\begin{align}\label{eqn:defs__}
\mathcal{K} \;=\; \{1,2,\dots,{d}\}, 
\qquad
r_k \;=\; \frac{k}{{d}}\quad(k\in\mathcal{K}),
\qquad
\mathcal{R} \;=\; \{r_k : k\in\mathcal{K}\}.
\end{align}
We treat each discretized rate as an arm in a stochastic multi-armed bandit. The service of arm \(k\) at time \(t\) is $S_k(t) \;=\; r_k \,\mathbbm{1}\{\,r_k \le C(t)\,\}$.
Let \(\mu_k \triangleq \EE[S_k(t)]\) and \(\mu^* \triangleq \max_{k\in\mathcal{K}} \mu_k\). With the discrete capacity gap $\vedisc \;\triangleq\; \mu^* - \lambda$, it follows from Corollary~\ref{corr:main_corr} that $\vedisc \;\ge\; \frac{2}{3}\,\velb$. The following theorem states the main result of this section and is proved in Appendix~\ref{app:Proof-thm:main_known}.

\begin{theorem}\label{thm:main_known}
Running the UCB1 algorithm~\cite{Auer2002FinitetimeAO} on the arm set \(\mathcal{K}\) defined in~\eqref{eqn:defs__}, with \(d = \lceil 3/\velb \rceil\), yields the following bound for any horizon \(H \in \mathbb{N}\):
\[
    \frac{1}{H}\sum_{t=1}^H \EE\bigl[Q(t)\bigr]
    \;\le\;
    \begin{cases}
        \dfrac{1767\,\log\!\bigl(1/\velb\bigr)}{\velb^{2}}, & \text{if } \varepsilon \le e^{-3},\\[1.25ex]
        \dfrac{12378}{\velb^{2}}, & \text{if } \varepsilon \ge e^{-3}.
    \end{cases}
\]
\end{theorem}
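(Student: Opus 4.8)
The plan is to glue together three estimates, each sharp in a different range of the horizon $H$: the deterministic bound $Q(t)\le t-1$ (Lemma~\ref{eqn:det_queue_bound}) for small $H$, the Stage~I bound (Lemma~\ref{thm:log(H)}) for intermediate $H$, and the Stage~II bound (Lemma~\ref{thm:stage2}) for large $H$. Two bookkeeping facts recorded above are used throughout: $\omega=\mu^*-\lambda\ge\tfrac{2}{3}\varepsilon$ (Corollary~\ref{corr:main_corr} with $\gamma=3$), and $d=\lceil 3/\varepsilon\rceil$, so $3/\varepsilon\le d\le 3/\varepsilon+1\le 4/\varepsilon$ (here $\varepsilon\le 1$ since $\varepsilon\le g(r^*)\le 1$). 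I first fix a threshold $H_0=H_0(\varepsilon)$, a specific function of $1/\varepsilon$ of order $1/\varepsilon^{4}$ up to polylogarithmic factors; it satisfies $H_0>d$ for every admissible $\varepsilon$, and its exact value is pinned down at the end to optimize constants.

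I then split into three regimes. In the regime $1\le H\le d$, Lemma~\ref{eqn:det_queue_bound} gives $\frac{1}{H}\sum_{t=1}^H\EE[Q(t)]\le\frac{1}{H}\sum_{t=1}^H(t-1)=\frac{H-1}{2}\le\frac{d-1}{2}<\frac{3}{2\varepsilon}$, which is $O(1/\varepsilon^{2})$ and hence dominated. In the regime $d<H\le H_0$, Lemma~\ref{thm:log(H)} applies since $H>d$; using its bound $\frac{4}{\omega}(1+4d\log H)$ (or, for tighter constants, the form $\frac{2}{\omega}(1+2\sqrt{d\log H})^{2}$ or the raw bound with $\tilde\Delta$ tuned at $H=H_0$) and monotonicity of $\log$, we get $\frac{1}{H}\sum_{t=1}^H\EE[Q(t)]\le\frac{4}{\omega}(1+4d\log H_0)$, which after substituting $\omega\ge\tfrac{2}{3}\varepsilon$, $d\le 4/\varepsilon$, and $\log H_0=O(\log(1/\varepsilon))$ (valid because $H_0$ is polynomial in $1/\varepsilon$) is $O\!\bigl(\log(1/\varepsilon)/\varepsilon^{2}\bigr)$. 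In the regime $H>H_0$, Lemma~\ref{thm:stage2} gives $\frac{1}{H}\sum_{t=1}^H\EE[Q(t)]\le\frac{4}{\omega}+\frac{32}{H\omega}+\frac{16 d^{2}}{\omega^{4}}\,\phi(H)$ with $\phi(H)=\frac{1}{H}\bigl(\tfrac{1}{3}+8(1+\log H)\log H\bigr)^{2}$; the first term is $O(1/\varepsilon)$, and a short computation shows $\phi$ and $H\mapsto 1/(H\omega)$ are nonincreasing once $H$ exceeds a small absolute constant (for $\phi$, $\phi'<0$ for $H\gtrsim 35$, and $H_0\ge 1000$ for all $\varepsilon\le 1$), so the bound is maximized at $H=H_0$. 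Evaluating at $H_0$ with $\omega\ge\tfrac{2}{3}\varepsilon$, $d\le 4/\varepsilon$, and the choice $H_0\asymp\log^{3}(1/\varepsilon)/\varepsilon^{4}$, the last term is of order $\log^{4}(1/\varepsilon)/\varepsilon^{2}$ at worst, and $\log^{4}(1/\varepsilon)=O\!\bigl(\log(1/\varepsilon)/\varepsilon^{2}\bigr)$ because $\sup_{0<\varepsilon\le 1}\varepsilon^{2}\log^{3}(1/\varepsilon)<\infty$; hence this regime is also $O\!\bigl(\log(1/\varepsilon)/\varepsilon^{2}\bigr)$.

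Taking the maximum of the three regime bounds gives the claimed order. For the explicit constants: when $\varepsilon\le e^{-3}$ we have $\log(1/\varepsilon)\ge 3\ge 1$, so each regime bound can be written as $C\log(1/\varepsilon)/\varepsilon^{2}$, and tracking the constants through Lemmas~\ref{thm:log(H)} and~\ref{thm:stage2} together with the optimal choices of $\tilde\Delta$ and $H_0$ yields $C=1767$. When $\varepsilon\ge e^{-3}$, $1/\varepsilon$ and hence every logarithm appearing ($\log H_0$, $\phi(H_0)$, and the Stage~I logarithm) is bounded by an absolute constant, so each regime bound is at most $C'/\varepsilon^{2}$; here one cannot factor out a $\log(1/\varepsilon)$ since it may lie below $1$, and the same bookkeeping produces $C'=12378$.

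The crux is the joint choice of $H_0$ and the constant chasing. $H_0$ must be large enough that the Stage~II term $\frac{16 d^{2}}{\omega^{4}}\phi(H)\asymp\log^{4}(1/\varepsilon)/(\varepsilon^{6}H)$, evaluated at $H_0$, is pushed down to $O(\log(1/\varepsilon)/\varepsilon^{2})$ — forcing $H_0\gtrsim 1/\varepsilon^{4}$ up to polylogs — yet small enough that Stage~I at $H_0$, which costs $\asymp d\log H_0/\omega\asymp\log H_0/\varepsilon^{2}$, remains $O(\log(1/\varepsilon)/\varepsilon^{2})$, forcing $H_0$ to be at most polynomial in $1/\varepsilon$. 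Threading this needle, verifying the monotonicity of the Stage~II horizon term so that its supremum over $H>H_0$ is attained at $H_0$, and carrying the numerical constants through to $1767$ and $12378$ is where essentially all the work lies; the argument is computational rather than conceptual.
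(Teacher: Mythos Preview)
Your plan is essentially the paper's: fix a threshold horizon, apply the Stage~I bound (Lemma~\ref{thm:log(H)}) below it and the Stage~II bound (Lemma~\ref{thm:stage2}) above it via monotonicity of the Stage~II horizon term, then split into $\varepsilon\le e^{-3}$ and $\varepsilon\ge e^{-3}$ to extract the final constants. The paper takes $n_{\mathrm{stage}}=e^{15.5}\varepsilon^{-4.5}$ rather than your $H_0\asymp\log^{3}(1/\varepsilon)/\varepsilon^{4}$; both work. Your extra regime $H\le d$ via Lemma~\ref{eqn:det_queue_bound} cleanly enforces the hypothesis $H>d$ of Lemma~\ref{thm:log(H)}, a point the paper glosses over.

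One step in your Stage~II estimate is garbled and, as written, would not close. You say the term $\tfrac{16d^{2}}{\omega^{4}}\phi(H_0)$ is ``of order $\log^{4}(1/\varepsilon)/\varepsilon^{2}$ at worst'' and then invoke ``$\log^{4}(1/\varepsilon)=O\bigl(\log(1/\varepsilon)/\varepsilon^{2}\bigr)$'' to conclude the regime is $O\bigl(\log(1/\varepsilon)/\varepsilon^{2}\bigr)$. That deduction gives only $O\bigl(\log(1/\varepsilon)/\varepsilon^{4}\bigr)$, not what you want. The fix is to redo the arithmetic with your stated $H_0$: since $\log H_0\asymp\log(1/\varepsilon)$ and the $\log^{3}$ in the denominator of $1/H_0$ cancels three of the four logarithms in the numerator of $\phi(H_0)\asymp\log^{4}(H_0)/H_0$, one gets $\phi(H_0)\asymp\varepsilon^{4}\log(1/\varepsilon)$ directly, and hence $\tfrac{16d^{2}}{\omega^{4}}\phi(H_0)=O\bigl(\log(1/\varepsilon)/\varepsilon^{2}\bigr)$ with no further inequality needed. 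The paper sidesteps this by choosing $n_{\mathrm{stage}}$ without polylogs, getting a Stage~II contribution of order $\log^{4}(1/\varepsilon)/\varepsilon^{1.5}$, and then applying the elementary bound $(\log(1/\varepsilon))^{4}/\varepsilon^{1.5}\le 11\log(1/\varepsilon)/\varepsilon^{2}$.
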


\section{Conclusion}
In this paper, we studied the problem of achieving a bounded time-average queue size in a single-queue, single-server problem with a special partial feedback structure and a continuous rate space. When the arrival rate has a distance bounded above by $\varepsilon > 0 $ to the capacity region, and when $\varepsilon$ is known, we achieved $\mathcal{O}(\log(1/\varepsilon)/\varepsilon^2)$ worst-case time-average expected queue size with a simple UCB-based algorithm. The simple UCB algorithm was extended to an algorithm that runs in phases to handle the case when $\varepsilon$ is not known. This algorithm yields $\mathcal{O}(\log^{3.5}(1/\varepsilon)/\varepsilon^3)$ worst-case time-average expected queue size.  We also established a converse result that states, for any algorithm, regardless of whether the algorithm knows $\varepsilon$, there exists an environment that yields a worst-case time-average expected queue size of the order $\Omega(1/\varepsilon^2)$. We conjecture that when $\varepsilon$ is unknown, an algorithm achieving a time-average expected queue size of order $\mathcal{O}(\log^{\alpha}(1/\varepsilon)/\varepsilon^2)$ for some $\alpha > 0$ is possible. Designing such an algorithm would close the gap between the lower and upper bounds in this setting and is left as future work. Another interesting future direction is to extend the continuum-armed queueing framework to multi-queue or networked settings.
\bibliographystyle{IEEEtran}
\bibliography{main}

\appendix
\section{Sequence of Independent Random variables}
\begin{lemma}\label{lemma:hoeff_with_dep}
    Consider a sequence of independent, zero mean $c$-sub Gaussian random variables $X_1,X_2,\dots$. Also consider a positive integer valued random variable $G$ which is possible dependent on the sequence $X_1,X_2,\dots$. For any $\delta \in (0,1)$, we have that
    \begin{align}
        \mathbb{P}\left\{\frac{1}{G}\sum_{g=1}^G X_g \geq \sqrt{\frac{2c^2\log\left( \frac{G(G+1)}{\delta}\right)}{G}}\right\} \leq \delta \nonumber
    \end{align}
    \begin{proof}
    Let us define $\bar{X} = \frac{1}{G}\sum_{g=1}^G X_g$. Notice that
        \begin{align}
            &\mathbb{P}\left\{\frac{1}{G}\sum_{g=1}^G X_g\geq \sqrt{\frac{2c^2\log{\frac{G(G+1)}{\delta}}}{G}}\right\} = \sum_{g=1}^{\infty}\mathbb{P}\left\{\frac{1}{G}\sum_{g=1}^G X_g \geq \sqrt{\frac{2c^2\log{\frac{G(G+1)}{\delta}}}{G}},G= g\right\}  \nonumber\\& =  \sum_{g=1}^{\infty}\mathbb{P}\left\{\frac{1}{g}\sum_{t=1}^g X_t \geq \sqrt{\frac{2c^2\log{\frac{g(g+1)}{\delta}}}{g}},G= g\right\} \leq_{(a)}   \sum_{g=1}^{\infty}\mathbb{P}\left\{\frac{1}{g}\sum_{t=1}^g X_t \geq \sqrt{\frac{2c^2\log{\frac{g(g+1)}{\delta}}}{g}}\right\} \nonumber\\& \leq_{(b)} \sum_{g=1}^{\infty}e^{-\frac{\left(\sqrt{2g\log{\frac{g(g+1)}{\delta}}}\right)^2}{2g}}= \sum_{g=1}^{\infty}\frac{\delta}{g(g+1)} = \sum_{g=1}^{\infty}\left(\frac{\delta}{g} - \frac{\delta}{g+1}\right) = \delta \nonumber
        \end{align}
         where (a) follows since for any two events $A,B$, $P(A,B) \leq P(A)$, and (b) follows from the standard Hoeffding inequality.
        \end{proof}
\end{lemma}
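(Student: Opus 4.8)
The plan is to reduce this random-index statement to a countable family of \emph{fixed}-index sub-Gaussian tail bounds via a union bound over the possible values of $G$, paying a mild $\log(g(g+1))$ inflation inside the square root that is exactly what makes the resulting series summable. Write $\bar X = \frac{1}{G}\sum_{g=1}^G X_g$ and let $\tau_G = \sqrt{2c^2\log(G(G+1)/\delta)/G}$ denote the (random) threshold. First I would decompose $\mathbb{P}\{\bar X \ge \tau_G\} = \sum_{g=1}^{\infty}\mathbb{P}\{\bar X \ge \tau_G,\, G = g\}$. On the event $\{G=g\}$ the partial sum is $\sum_{t=1}^g X_t$ and the threshold $\tau_g = \sqrt{2c^2\log(g(g+1)/\delta)/g}$ is deterministic, so the $g$-th term equals $\mathbb{P}\{\tfrac{1}{g}\sum_{t=1}^g X_t \ge \tau_g,\, G = g\}$.

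The key observation is that we may simply discard the constraint $\{G = g\}$: since $\mathbb{P}(A\cap B)\le \mathbb{P}(A)$, each term is at most $\mathbb{P}\{\tfrac{1}{g}\sum_{t=1}^g X_t \ge \tau_g\}$, an event that no longer references $G$ at all. This is where the (arbitrary) dependence of $G$ on the sequence $(X_i)$ is handled — we never need to control that dependence, only to union over the countably many values $G$ can take.

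Next I would apply a standard sub-Gaussian tail bound to each fixed-$g$ term. A sum of $g$ independent zero-mean $c$-sub-Gaussian variables is $\sqrt{g}\,c$-sub-Gaussian, hence $\mathbb{P}\{\sum_{t=1}^g X_t \ge s\}\le \exp\!\big(-s^2/(2gc^2)\big)$ for $s\ge 0$. Substituting $s = g\tau_g = \sqrt{2gc^2\log(g(g+1)/\delta)}$ yields a bound of $\exp\!\big(-\log(g(g+1)/\delta)\big) = \delta/(g(g+1))$ for each $g$. Finally I would sum the telescoping series $\sum_{g=1}^{\infty}\delta/(g(g+1)) = \delta\sum_{g=1}^{\infty}\big(1/g - 1/(g+1)\big) = \delta$, which completes the argument.

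The only real design choice — and the reason the inequality takes this precise form — is the weighting in the union bound: one needs a summable sequence $\{w_g\}$ with $\sum_g w_g \le 1$, and then each threshold must be inflated to absorb a $\sqrt{\log(1/w_g)}$ term; taking $w_g = \delta/(g(g+1))$ (equivalently the $\log(g(g+1)/\delta)$ appearing in the statement) is convenient since the series telescopes exactly. I do not expect a genuine obstacle beyond two bookkeeping points: applying the sub-Gaussian tail with the correct variance proxy $g c^2$ for the $g$-term sum, and verifying the identity $\exp(-\log(g(g+1)/\delta)) = \delta/(g(g+1))$ cleanly so that the telescoping sum returns $\delta$ with no slack.
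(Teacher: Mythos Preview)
Your proposal is correct and follows essentially the same argument as the paper: decompose over the countably many values $\{G=g\}$, drop the $\{G=g\}$ constraint via $\mathbb{P}(A\cap B)\le \mathbb{P}(A)$, apply the fixed-$g$ sub-Gaussian tail bound with variance proxy $gc^2$, and telescope $\sum_{g\ge 1}\delta/(g(g+1))=\delta$. Your added remark on the choice of weights $w_g=\delta/(g(g+1))$ is accurate and explains precisely why the $\log(G(G+1)/\delta)$ factor appears in the threshold.
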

\section{KL Divergence Between Bernoulli Random Variables}
For $x,y \in [0,1]$, we use the notation $D_{\text{KL}}(x\lVert y)$ to denote the KL divergence between two Bernoulli$(x)$, and Bernoulli$(y)$ random variables. We have the following lemma.
\begin{lemma}\label{lemma:bern_KL}
     We have the following.
     \begin{enumerate}
         \item Fix $c \in (0,1)$. Then $D_{\text{KL}}(x\lVert c)$ is nonincreasing in $x$ in the interval $[0,c]$, and nondecreasing in $x$ in the interval $[c,1]$.
         \item Given $a,b \in (0,1)$, we have that
        \begin{align}
            D_{\text{KL}}(a\lVert b) \leq \frac{(a-b)^2}{b(1-b)} \nonumber
        \end{align}
        \begin{proof}
            We only prove 2, since 1 is a simple calculus exercise. Note that
            \begin{align}
                &D_{\text{KL}}(a\lVert b)\nonumber\\ &= a\ln(a/b) + (1-a)\ln((1-a)/(1-b)) \leq_{(a)} a\left(\frac{a}{b}-1\right)+(1-a)\left(\frac{1-a}{1-b}-1\right) \nonumber\\&=\frac{(a-b)^2}{b(1-b)}  \nonumber
            \end{align}
            where for (a) we have used $\ln(x) \leq x - 1$ for all $x>0$.
        \end{proof}
     \end{enumerate}
\end{lemma}
\section{Proofs and Lemmas for Section~\ref{sec:unknown}}
\subsection{Proof of Corollary~\ref{thm:main_thm}}\label{app:main_thm}
We use the following lemma.
\begin{lemma}\label{lemma:max_expabc}
    Consider positive real numbers $a,b,c$ and the function $f(x) = \log^a(bx)/x^c$. The maximum value of $f$ in $[1,\infty)$ is $\max\left\{\log^a(b)/c,b^c \left(\frac{a}{c\exp(1)}\right)^a\right\}$. 
\end{lemma}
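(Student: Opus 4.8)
The plan is to convert this constrained one‑variable optimization into a standard maximization of the form $y \mapsto y^{a}e^{-cy}$ by a logarithmic change of variables, and then dispose of the constraint $x \ge 1$ by a short case split between an interior critical point and the left endpoint. Concretely, I would substitute $y = \log(bx)$, i.e.\ $x = e^{y}/b$. A direct computation gives $f(x) = \log^{a}(bx)\,x^{-c} = b^{c}\,y^{a}e^{-cy} =: h(y)$, the constraint $x \ge 1$ becomes $y \ge \log b$, and $f$ is continuous with $f(x)\to 0$ as $x\to\infty$, so the maximum is attained. It therefore suffices to maximize $h$ over the half-line $y \ge \log b$ (throughout restricting to the natural domain on which $f$ is real-valued, i.e.\ $bx \ge 1$ when $a$ is not a positive integer).

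Next I would establish unimodality and split into cases. Differentiating, $h'(y) = b^{c}e^{-cy}\,y^{a-1}\,(a - cy)$, so for $y>0$ the sign of $h'$ equals the sign of $a - cy$; hence $h$ is strictly increasing on $(0, a/c)$ and strictly decreasing on $(a/c,\infty)$, with unique interior maximizer $y^{\star} = a/c$. Consequently the maximum of $h$ on $[\log b,\infty)$ is attained at $y^{\star}$ when $y^{\star} \ge \log b$, and at $y = \log b$ otherwise. In the first case ($\log b \le a/c$, i.e.\ $b \le e^{a/c}$) we get $\max f = h(a/c) = b^{c}(a/c)^{a}e^{-a} = b^{c}\bigl(a/(ce)\bigr)^{a}$. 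In the second case ($\log b > a/c$) we have $\log(bx) \ge \log b > a/c$ on the whole feasible range, so $h$ is decreasing there and $\max f = h(\log b) = b^{c}(\log b)^{a}e^{-c\log b} = \log^{a}(b)$, attained at $x = 1$. Combining, $\max_{x\ge 1}f(x) = \max\bigl\{\log^{a}(b),\; b^{c}(a/(ce))^{a}\bigr\}$, which yields the stated bound once we note that $\log^{a}(b) \le \log^{a}(b)/c$ in the regime $c \le 1$ that is all we need here.

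I do not expect a genuine obstacle: this is a routine calculus optimization. The only minor points that need care are (i) verifying the sign of $h'$ on the feasible range in the decreasing case, so that the endpoint-versus-interior comparison is carried out correctly, and (ii) the well-definedness of $\log^{a}(bx)$ for non-integer $a$ when $bx<1$, which is handled precisely by working on the domain where the substitution $y = \log(bx)$ is legitimate.
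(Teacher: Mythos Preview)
The paper states this lemma without proof, so there is nothing to compare against; your substitution $y=\log(bx)$, which reduces the problem to maximizing the unimodal function $h(y)=b^{c}y^{a}e^{-cy}$, followed by the two-case split on whether the critical point $y^{\star}=a/c$ lies in $[\log b,\infty)$, is the standard route and is correct.

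One small presentational point: in your ``Combining'' step you write $\max_{x\ge1}f(x)=\max\{\log^{a}(b),\,b^{c}(a/(ce))^{a}\}$, but this equality is not literally true. Since $h$ is globally maximized at $a/c$, one always has $\log^{a}(b)=h(\log b)\le h(a/c)=b^{c}(a/(ce))^{a}$, so the right-hand max simply equals $b^{c}(a/(ce))^{a}$; in your case~2 the true maximum is the strictly smaller value $\log^{a}(b)$. What your case analysis actually establishes is that the maximum is \emph{one of} the two quantities (hence bounded above by their maximum), which is all that is needed. Your closing observation that the extra factor $1/c$ in the stated bound requires $c\le1$ is also correct: the lemma as literally phrased (with ``is'' rather than ``is at most'') is not an exact identity in general, and indeed the paper only invokes it with $b=2$ and $c<1$ as an upper bound.
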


Substituting $\delta = 1/6$, \eqref{eqn:llate_stage_2} translates to
\begin{align}\label{eqn:inequ_vital_simp}
    &\frac{1}{H}\sum_{t = 1}^H \mathbb{E}\{Q(t)\} \nonumber\\&\leq  \frac{65\times2^{\frac{2}{p-1}}\gamma }{(\gamma-1) \varepsilon}+\frac{\left(2^{\frac{p+1}{p-1}}+2\right)\gamma^{\frac{2}{1-2\delta}}}{\varepsilon^{\frac{2}{1-2\delta}}C^{^{\frac{2}{1-2\delta}}}}+1 +\frac{2^{\frac{5q}{2}-\delta q+3}C^q\gamma^{2q}(7-2\delta)^q\log^{q+2}(2H)H^{1-\frac{q}{2}-\delta q}}{(\gamma-1)^{2q}\varepsilon^{2q}(1-(q/2))^2}  \nonumber\\&\ \ \ \ + \frac{2^{2q +3}\gamma^{2q}(7-2\delta)^q\log^{q+2}(2H)H^{1-q}}{(\gamma-1)^{2q}\varepsilon^{2q}(1-(q/2))^2} \nonumber\\& \leq \frac{65\times2^{\frac{2}{p-1}}\gamma }{(\gamma-1) \varepsilon}+\frac{\left(2^{\frac{p+1}{p-1}}+2\right)\gamma^{3}}{\varepsilon^{3}C^{3}}+1 +\frac{2^{\frac{7q}{3}+3}C^q\gamma^{2q}(20/3)^q\log^{q+2}(2H)}{(\gamma-1)^{2q}\varepsilon^{2q}(1-(q/2))^2H^{\frac{2q-3}{3}}}  \nonumber\\&\ \ \ \ + \frac{2^{2q +3}\gamma^{2q}(20/3)^q\log^{q+2}(2H)}{(\gamma-1)^{2q}\varepsilon^{2q}(1-(q/2))^2H^{q-1}}
\end{align}
holds for all $C \in (0,1),\gamma>1 , q \in (1,2)$.

Using $q = 3/2$, \eqref{eqn:inequ_vital_simp} translates to
\begin{align}\label{eqn:bound_2}
 \frac{1}{H}\sum_{t = 1}^H \mathbb{E}\{Q(t)\} &\leq \frac{130\gamma }{(\gamma-1)\varepsilon}+\frac{6\gamma^{3}}{\varepsilon^{3}C^{3}}+1 + \frac{24928C^{1.5}\gamma^{3}}{(\gamma-1)^{3}\varepsilon^{3}}\log^{3.5}(2H) +\frac{17627\gamma^{3}}{(\gamma-1)^{3}\varepsilon^{3}}\frac{\log^{3.5}(2H)}{H^{0.5}}.
\end{align}
holds for all $C \in (0,1),\gamma>1$.

\begin{align}\label{eqn:T_def}
n_{\text{stage}} = \frac{e^{\alpha}}{2\varepsilon^{6}}
\end{align}
where $\alpha$ is a constant to be determined later.

\noindent{\textbf{Case 1} $H \leq n_{\text{stage}}$: We have from \eqref{eqn:bound_2}, for $H \leq n_{\text{stage}}$,
\begin{align}\label{eqn:bound_1}
     &\frac{1}{H}\sum_{t = 1}^H \mathbb{E}\{Q(t)\nonumber\\&\leq \frac{130\gamma }{(\gamma-1)\varepsilon}+\frac{6\gamma^{3}}{\varepsilon^{3}C^{3}}+1 + \frac{24928C^{1.5}\gamma^{3}}{(\gamma-1)^{3}\varepsilon^{3}}\log^{3.5}(2H) +\frac{17627\gamma^{3}}{(\gamma-1)^{3}\varepsilon^{3}}\frac{\log^{3.5}(2H)}{H^{0.5}} \nonumber \\&\leq_{(a)} \frac{130\gamma }{(\gamma-1)\varepsilon}+\frac{6\gamma^{3}}{\varepsilon^{3}C^{3}}+1 + \frac{24928C^{1.5}\gamma^{3}}{(\gamma-1)^{3}\varepsilon^{3}}\log^{3.5}(2H) +\frac{687453\gamma^{3}}{(\gamma-1)^{3}\varepsilon^{3}}\nonumber \\&\leq_{(b)} \frac{130\gamma }{(\gamma-1)\varepsilon}+\frac{6\gamma^{3}}{\varepsilon^{3}C^{3}}+1 + \frac{24928C^{1.5}\gamma^{3}}{(\gamma-1)^{3}\varepsilon^{3}}\log^{3.5}\left(\frac{e^{\alpha}}{\varepsilon^{6}}\right) +\frac{687453\gamma^{3}}{(\gamma-1)^{3}\varepsilon^{3}} \\&\leq \frac{130\gamma }{(\gamma-1)\varepsilon}+\frac{6\gamma^{3}}{\varepsilon^{3}C^{3}}+1 + \frac{24928C^{1.5}\gamma^{3}}{(\gamma-1)^{3}\varepsilon^{3}}\left(\alpha+\log\left(\frac{1}{\varepsilon}\right)\right)^{3.5} +\frac{687453\gamma^{3}}{(\gamma-1)^{3}\varepsilon^{3}}\nonumber \\&\leq 1+\frac{130\gamma }{(\gamma-1)\varepsilon}+\frac{1}{\varepsilon^3}\left[\frac{6\gamma^{3}}{C^{3}} + \frac{141015 \alpha^{2.5}C^{1.5}\gamma^{3}}{(\gamma-1)^{3}}+\frac{687453\gamma^{3}}{(\gamma-1)^{3}}\right]+\frac{\log^{3.5}\left(\frac{1}{\varepsilon}\right)}{\varepsilon^3}\frac{141015 C^{1.5}\gamma^{3}}{(\gamma-1)^{3}}\nonumber
\end{align}
where (a) follows using Lemma~\ref{lemma:max_expabc} with $(a,b,c)= (3.5,2,0.5)$, (b) follows from $H \leq n_{\text{stage}}$ and the definition of $n_{\text{stage}}$ in \eqref{eqn:T_def}, and (c) follows from $(a+b)^d \leq 2^{d-1}(a^d+b^d)$ for $d \geq 1$.

\noindent
\textbf{Case 2} $H \geq n_{\text{stage}}$: Notice that for $q \in (3/2,1)$, \eqref{eqn:inequ_vital_simp} reduces to
\begin{align}\label{eqn:bound_3}
    &\frac{1}{H}\sum_{t = 1}^H \mathbb{E}\{Q(t)\} \nonumber\\&\leq \frac{65\times2^{\frac{2}{p-1}}\gamma }{(\gamma-1) \varepsilon}+\frac{\left(2^{\frac{p+1}{p-1}}+2\right)\gamma^{3}}{\varepsilon^{3}C^{3}}+1 +\frac{2^{\frac{7q}{3}+3}C^q\gamma^{2q}(20/3)^q\log^{q+2}(2H)}{(\gamma-1)^{2q}\varepsilon^{2q}(1-(q/2))^2H^{\frac{2q-3}{3}}}  \nonumber\\&\ \ \ \ + \frac{2^{2q +3}\gamma^{2q}(20/3)^q\log^{q+2}(2H)}{(\gamma-1)^{2q}\varepsilon^{2q}(1-(q/2))^2H^{q-1}} \nonumber\\& \leq  \frac{65\times2^{\frac{2}{p-1}}\gamma }{(\gamma-1) \varepsilon}+\frac{\left(2^{\frac{p+1}{p-1}}+2\right)\gamma^{3}}{\varepsilon^{3}C^{3}}+1 +\frac{2^{\frac{7q}{3}+3}C^q\gamma^{2q}(20/3)^q}{(\gamma-1)^{2q}\varepsilon^{2q}(1-(q/2))^2H^{\frac{2q-3}{6}}}\frac{\log^{q+2}(2H)}{H^{\frac{2q-3}{6}}}  \nonumber\\&\ \ \ \ + \frac{2^{2q +3}\gamma^{2q}(20/3)^q}{(\gamma-1)^{2q}\varepsilon^{2q}(1-(q/2))^2H^{\frac{2q-3}{6}}}\frac{\log^{q+2}(2H)}{H^{\frac{4q-3}{6}}}\nonumber\\& \leq_{(a)}  \frac{65\times2^{\frac{2}{p-1}}\gamma }{(\gamma-1) \varepsilon}+\frac{\left(2^{\frac{p+1}{p-1}}+2\right)\gamma^{3}}{\varepsilon^{3}C^{3}}+1 +\frac{2^{\frac{7q}{3}+3+\frac{2q-3}{6}}C^q\gamma^{2q}(20/3)^q}{e^{\frac{\alpha(2q-3)}{6}}(\gamma-1)^{2q}\varepsilon^{3}(1-(q/2))^2}\frac{\log^{q+2}(2H)}{H^{\frac{2q-3}{6}}}  \nonumber\\&\ \ \ \ + \frac{2^{2q +3+\frac{2q-3}{6}}\gamma^{2q}(20/3)^q}{e^{\frac{\alpha(2q-3)}{6}}(\gamma-1)^{2q}\varepsilon^{3}(1-(q/2))^2}\frac{\log^{q+2}(2H)}{H^{\frac{4q-3}{6}}}\nonumber\\&\leq_{(b)}  \frac{65\times2^{\frac{2}{p-1}}\gamma }{(\gamma-1) \varepsilon}+\frac{\left(2^{\frac{p+1}{p-1}}+2\right)\gamma^{3}}{\varepsilon^{3}C^{3}}+1 +\frac{2^{\frac{7q}{3}+3+\frac{2q-3}{6}+\frac{2q-3}{6}}C^q\gamma^{2q}(20/3)^q}{e^{\frac{\alpha(2q-3)}{6}}(\gamma-1)^{2q}\varepsilon^{3}(1-(q/2))^2}\left(\frac{6(q+2)}{2q-3}\right)^{q+2}  \nonumber\\&\ \ \ \ + \frac{2^{2q +3+\frac{2q-3}{6}+\frac{4q-3}{6}}\gamma^{2q}(20/3)^q}{e^{\frac{\alpha(2q-3)}{6}}(\gamma-1)^{2q}\varepsilon^{3}(1-(q/2))^2}\left(\frac{q+2}{e (4q-3)}\right)^{q+2}\nonumber\\&= 1+\frac{65\times2^{\frac{2}{p-1}}\gamma }{(\gamma-1) \varepsilon}
  + \frac{1}{\varepsilon^3}\left[
    \begin{aligned}[t]
      &\frac{\left(2^{\frac{p+1}{p-1}}+2\right)\gamma^{3}}{C^{3}}
      + \frac{2^{3q+2}C^q\gamma^{2q}(20/3)^q}{e^{\frac{\alpha(2q-3)}{6}}(\gamma-1)^{2q}\!\left(1-\frac{q}{2}\right)^{\!2}}
        \left(\frac{6(q+2)}{2q-3}\right)^{\!q+2}\\
      &\quad
      + \frac{2^{3q+2}\gamma^{2q}(20/3)^q}{e^{\frac{\alpha(2q-3)}{6}}(\gamma-1)^{2q}\!\left(1-\frac{q}{2}\right)^{\!2}}
        \left(\frac{6(q+2)}{e(4q-3)}\right)^{\!q+2}
    \end{aligned}
  \right]
\end{align}
where (a) follows since $H \geq n_{\text{stage}}$, (b) follows by applying Lemma~\ref{lemma:max_expabc} with $(a,b,c)= (q+2,2,(2q-3)/6)$, and $(a,b,c)= (q+2,2,(4q-3)/6)$

Now, let $C = 0.04$, $\gamma = 4$, $q = 1.81$, $\alpha = 26$,  \eqref{eqn:bound_1} reduces to
\begin{align}
    \frac{1}{H}\sum_{t = 1}^H \mathbb{E}\{Q(t)\} \leq 1 + \frac{174}{\varepsilon} + \frac{16846843}{\varepsilon^3} + \frac{2675\log^{3.5}\left(\frac{1}{\varepsilon}\right)}{\varepsilon^3}
\end{align}
for all $H \leq n_{\text{stage}}$, and \eqref{eqn:bound_3} reduces to
\begin{align}
    \frac{1}{H}\sum_{t = 1}^H \mathbb{E}\{Q(t)\} \leq 1 + \frac{267}{\varepsilon} + \frac{16651943}{\varepsilon^3} 
\end{align}
for all $H\geq n_{\text{stage}}$. Combining the two bounds, we get
\begin{align}
    \frac{1}{H}\sum_{t = 1}^H \mathbb{E}\{Q(t)\} \leq 1 + \frac{267}{\varepsilon} +\frac{16846843}{\varepsilon^3} + \frac{2675\log^{3.5}\left(\frac{1}{\varepsilon}\right)}{\varepsilon^3}
\end{align}
for all $H \in \mathbb{N}$ as desired.

\subsection{Proof of Lemma~\ref{lemma:init_lemma}}\label{app:init_lemma}
 
We have for $l \geq 2$, $T^{\text{sum}}_{l} = \sum_{\tau = 1}^{l-1}T_{\tau} = \sum_{\tau = 1}^{l-1}2^{\tau+2} \geq 2^{l+1}$. Also, from the definition of $a(t)$ in \eqref{eqn:a_t_def} and the definition of $T^{\text{sum}}_l$ in \eqref{eqn:t_l_sum}, we have $t \geq T^{\text{sum}}_{a(t)}$ for all $t \in \mathbb{N}$. Furthermore, since $I \geq 9$, we have $a(I) \geq 2$. Hence,
\begin{align}
    I \geq T^{\text{sum}}_{a(I)} \geq 2^{a(I)+1} = T_{a(I)-1}  = \frac{T_{a(I)}}{2} \nonumber
\end{align}
This establishes 1. The above also establishes 2 since $2^{a(I)+1} \leq I$. 

To prove 3, notice that for all $l \geq 1$
\begin{align}
    T_l^{\text{sum}} = \sum_{\tau = 1}^{l-1} T_{\tau} = \sum_{\tau = 1}^{l-1}2^{\tau+2} \leq  2^{l+2} = T_l
\end{align}
Hence,
\begin{align}
    \sum_{n=1}^{a(I)} T^{\text{sum}}_n \leq \sum_{n=1}^{a(I)} T_n = \sum_{n=1}^{a(I)} 2^{n+2} \leq 2^{a(I)+3}\leq 4I\nonumber
\end{align}
where the last inequality follows from $I \geq 2^{a(I)+1}$.

\subsection{Proof of Lemma~\ref{lemma:good_event_lemma_app}}\label{app:good_event}
The first part follows trivially, since the decisions in phase $l$ do not depend on the queue length or the feedback received in past phases. For the probability bound, notice that if $u = 0$, we have
\begin{align}
    \mathcal{G}_l(0) = \left\{ \mu_{l,k} \in \left[-\sqrt{\frac{(7-2\delta)\log(T_l)}{4}},\sqrt{\frac{(7-2\delta)\log(T_l)}{4}}\right]\right\}.
\end{align}
We have $\mathbb{P}\{\mathcal{G}_l(0)\} = 1$, since $\mu_{l,k} \in [0,1]$ and
\begin{align}
    \sqrt{\frac{(7-2\delta)\log(T_l)}{4}} \geq_{(a)} \sqrt{\frac{3}{2}\log(T_l)} \geq_{(b)} \sqrt{\frac{3}{2}\log(8)} \geq 1
\end{align}
where (a) follows since $\delta \in (0,1/2)$, and (b) follows since $T_l = 2^{l+2} \geq 8$. Hence, we assume $u > 1$. To prove the bound for $u>1$, we begin with the following lemma.
\begin{lemma}\label{lemma:good_event_lemma}
    Consider a phase $l \geq b$, $u \in [1:T_l]$, and a \emph{rate level} $k \in \mathcal{K}_l$. For any $d \geq 2$ and $M \geq 1$, define \begin{align}
        U_{l,k}(u) = \bar{\mu}_{l,k}(u) + \sqrt{\frac{d\log(M(u+1))}{2(1 \vee N_{l,k}(u))}},  \nonumber
    \end{align}
    We have that
    \begin{align}
        &\mathbb{P}\{ \mu_{l,k} > U_{l,k}(u)\} \leq \frac{1}{M^d(u+1)^{d-2}}  \nonumber
    \end{align}
    and
    \begin{align}
        \mathbb{P}\left\{ \mu_{l,k} < U_{l,k}(u)-2\sqrt{\frac{d\log(M(u+1))}{2(1 \vee N_{l,k}(u))}}\right\} \leq \frac{1}{M^d(u+1)^{d-2}}. \nonumber
    \end{align}
    \begin{proof}
  Define the random variable $G = (1 \vee N_{l,k}(u))$. Also, let $\tilde{\mu}_{l,k}(s)$ denote the empirical mean of the $k$-th arm in the $l$-th episode when it is chosen for the $s$-th time (If it is chosen less than $s$ times, for the remainder consider random variables independently sampled from the same distribution). Let us denote $p_0 = \mathbb{P}\{N_{l,k}(u) = 0\}$, and $p_1 = 1-p_0$

     From Lemma~\ref{lemma:hoeff_with_dep} using $\delta = \frac{1}{M^d(u+1)^{d-2}}$, we have that
     \begin{align}\label{eqn:1111}
         \frac{1}{M^d(u+1)^{d-2}} &\geq \mathbb{P}\left\{  \mu_{l,k}-\tilde{\mu}_{l,k}(G) \geq \sqrt{\frac{\log(M^dG(G+1)(u+1)^{d-2})}{2G}}\right\} \nonumber\\& \geq_{(a)} \mathbb{P}\left\{  \mu_{l,k}-\tilde{\mu}_{l,k}(G) \geq \sqrt{\frac{d\log(M(u+1))}{2G}}\right\} \nonumber\\& =_{(b)}  \mathbb{P}\left\{  \mu_{l,k}-\tilde{\mu}_{l,k}(G) \geq \sqrt{\frac{d\log(M(u+1))}{2G}} \Bigg{|} N_{l,k}(u) > 0\right\}p_1 \nonumber\\&\ \ \ \  + \mathbb{P}\left\{  \mu_{l,k}-\tilde{\mu}_{l,k}(G) \geq \sqrt{\frac{d\log(M(u+1))}{2G}} \Bigg{|} N_{l,k}(u) = 0\right\}p_0\nonumber\\& =  \mathbb{P}\left\{  \mu_{l,k}-\bar{\mu}_{l,k}(u)  \geq\sqrt{\frac{d\log(M(u+1))}{2G}} \Bigg{|} N_{l,k}(u) > 0\right\}p_1  \nonumber\\&\ \ \ \ + \mathbb{P}\left\{  \mu_{l,k}-\tilde{\mu}_{l,k}(G) \geq \sqrt{\frac{d\log(M(u+1))}{2G}} \Bigg{|} N_{l,k}(u) = 0\right\}p_0\nonumber\\& \geq  \mathbb{P}\left\{  \mu_{l,k}-\bar{\mu}_{l,k}(u)  \geq\sqrt{\frac{d\log(M(u+1))}{2G}} \Bigg{|} N_{l,k}(u) > 0\right\}p_1
     \end{align}
     where (a) follows since $G \leq u$, (b) follows since given $N_{l,k}(u) > 0$, we have $\tilde{\mu}_{l,k}(G)= \tilde{\mu}_{l,k}(N_{l,k}(u)) = \bar{\mu}_{l,k}(u)$. Now, notice that
     \begin{align}
         &\mathbb{P}\left\{  \mu_{l,k}-\bar{\mu}_{l,k}(u)  \geq\sqrt{\frac{d\log(M(u+1))}{2G}} \Bigg{|} N_{l,k}(u) = 0\right\}  \nonumber\\& = \mathbb{P}\left\{  \mu_{l,k}  \geq\sqrt{d\log(M(u+1))} \Bigg{|} N_{l,k}(u) = 0\right\} \nonumber\\&= 0, \nonumber
     \end{align}
     where the last equality is true since
     \begin{align}
         \mu_{l,k} \leq 1 < \sqrt{2\log(2)} \leq \sqrt{d\log(M(u+1))}. \nonumber
     \end{align}
     Using the above in \eqref{eqn:1111}, we have,
     \begin{align}
         \frac{1}{M^d(u+1)^{d-2}}&\geq\mathbb{P}\left\{  \mu_{l,k}-\bar{\mu}_{l,k}(u)  \geq\sqrt{\frac{d\log(M(u+1))}{2G}} \Bigg{|} N_{l,k}(u) > 0\right\}p_1\nonumber\\&\ \ \ \  + \mathbb{P}\left\{  \mu_{l,k}-\bar{\mu}_{l,k}(u)  \geq\sqrt{\frac{d\log(M(u+1))}{2G}} \Bigg{|} N_{l,k}(u) = 0\right\}p_0  \nonumber\\& = \mathbb{P}\left\{  \mu_{l,k}-\bar{\mu}_{l,k}(u)  \geq\sqrt{\frac{d\log(M(u+1))}{2G}}\right\} \nonumber
     \end{align}
     as desired. The second inequality follows by repeating a similar argument.
    \end{proof}
\end{lemma}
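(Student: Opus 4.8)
The plan is to reduce this anytime, data‑dependent‑width bound to the one‑shot Hoeffding‑type statement already recorded as Lemma~\ref{lemma:hoeff_with_dep}. The one obstruction to invoking Hoeffding's inequality directly is that $N_{l,k}(u)$, the number of pulls of rate level $k$ up to time $u$ of phase $l$, is not a stopping time for the per‑arm sample stream: since the pulls $K_l(\cdot)$ are chosen adaptively from the $\mathrm{UCB}$ indices, the event $\{N_{l,k}(u)=s\}$ is correlated with the first $s$ service realizations of arm $k$, so one cannot condition on $\{N_{l,k}(u)=s\}$ and treat $\bar\mu_{l,k}(u)$ as an average of $s$ independent draws. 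I would resolve this with the standard ghost‑sample construction: for rate level $k$ in phase $l$, let $\tilde\mu_{l,k}(s)$ be the empirical mean of the first $s$ i.i.d.\ service realizations of that arm, appending fresh independent draws from arm $k$'s service distribution if arm $k$ is pulled fewer than $s$ times in the phase. Then $\tilde\mu_{l,k}\!\bigl(N_{l,k}(u)\bigr)=\bar\mu_{l,k}(u)$ on $\{N_{l,k}(u)\ge1\}$, and writing $X_g=\mu_{l,k}-(\text{the }g\text{-th such realization})$ gives a sequence of i.i.d., zero‑mean random variables, each supported on an interval of length $1$ and therefore $\tfrac12$‑sub‑Gaussian by Hoeffding's lemma.

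Next, set $G\triangleq 1\vee N_{l,k}(u)$; this is a positive‑integer‑valued random variable, possibly dependent on $(X_g)_{g\ge1}$, with the crude but crucial bound $G\le u$. Applying Lemma~\ref{lemma:hoeff_with_dep} with $c=\tfrac12$ and $\delta=\bigl(M^{d}(u+1)^{d-2}\bigr)^{-1}$ yields
\begin{align}
    \mathbb{P}\left\{\mu_{l,k}-\tilde\mu_{l,k}(G)\ge\sqrt{\frac{\log\bigl(M^{d}G(G+1)(u+1)^{d-2}\bigr)}{2G}}\right\}\le\frac{1}{M^{d}(u+1)^{d-2}}.\nonumber
\end{align}
Because $G\le u$ forces $G(G+1)\le(u+1)^{2}$, the logarithm's argument is at most $M^{d}(u+1)^{d}$, so the confidence width above is dominated by $\sqrt{d\log(M(u+1))/(2G)}$; enlarging the event accordingly, the same probability bound holds with this larger width in place of the original one.

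It then only remains to pass back to $\bar\mu_{l,k}(u)$ and to clear the degenerate case. On $\{N_{l,k}(u)\ge1\}$ we have $G=N_{l,k}(u)$ and $\tilde\mu_{l,k}(G)=\bar\mu_{l,k}(u)$, so $\{\mu_{l,k}>U_{l,k}(u)\}$ is contained in the event just bounded; on $\{N_{l,k}(u)=0\}$ we have $\bar\mu_{l,k}(u)=0$ while the deterministic confidence term dominates $\mu_{l,k}\le1$ (using $d\ge2$), so that event is vacuous there. Combining the two cases gives the first inequality. For the second, observe that $U_{l,k}(u)-2\sqrt{d\log(M(u+1))/\bigl(2(1\vee N_{l,k}(u))\bigr)}=\bar\mu_{l,k}(u)-\sqrt{d\log(M(u+1))/\bigl(2(1\vee N_{l,k}(u))\bigr)}$ is a \emph{lower} confidence bound, so the mirror‑image argument — applying Lemma~\ref{lemma:hoeff_with_dep} to $-X_g$ and again noting that on $\{N_{l,k}(u)=0\}$ the right‑hand side is negative, hence at most $\mu_{l,k}$ — gives the claim. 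The genuinely substantive step is the first one: recognizing that adaptivity of the pull counts precludes a direct conditional Hoeffding bound and channeling everything through the ghost‑sample union bound of Lemma~\ref{lemma:hoeff_with_dep}; once that is in place, the rest is monotonicity bookkeeping on the $\log$ term and the trivial $N_{l,k}(u)=0$ corner.
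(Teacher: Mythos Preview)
Your proposal is correct and follows essentially the same approach as the paper: both introduce the ghost-sample empirical mean $\tilde\mu_{l,k}(s)$ and $G=1\vee N_{l,k}(u)$, invoke Lemma~\ref{lemma:hoeff_with_dep} with $c=\tfrac12$ and $\delta=\bigl(M^d(u+1)^{d-2}\bigr)^{-1}$, simplify the logarithm via $G\le u$ so that $G(G+1)\le(u+1)^2$, and then dispose of the $N_{l,k}(u)=0$ corner by noting that $\mu_{l,k}\le1<\sqrt{d\log(M(u+1))}$. The presentation differs only cosmetically: you frontload the motivation for the ghost-sample device (adaptivity of the pull counts blocks a direct conditional Hoeffding bound), whereas the paper proceeds straight to the computation.
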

Now we move onto the main proof. Using $d = 2$ and $M = T_l^{3/4-\delta/2}$ in Lemma~\ref{lemma:good_event_lemma}, we have that
\begin{align}
     \mathbb{P}\left\{ \bar{\mu}_{l,k}(u)-\sqrt{\frac{\log(T_l^{3/4-\delta/2}(u+1))}{(1 \vee N_{l,k}(u))}} < \mu_{l,k} < \bar{\mu}_{l,k}(u)+\sqrt{\frac{\log(T_l^{3/4-\delta/2}(u+1))}{(1 \vee N_{l,k}(u))}}\right\} \leq \frac{2}{T_l^{3/2-\delta}} \nonumber
\end{align}
for each $u \in [1:T_l-1]$ and $k \in \mathcal{K}_l$. Using $u \in [1:T_l-1]$
\begin{align}
     \mathbb{P}\left\{ \bar{\mu}_{l,k}(u)-\sqrt{\frac{\log(T_l^{7/4-\delta/2})}{(1 \vee N_{l,k}(u))}} < \mu_{l,k} < \bar{\mu}_{l,k}(u)+\sqrt{\frac{\log(T_l^{7/4-\delta/2})}{(1 \vee N_{l,k}(u))}}\right\} \leq \frac{2}{T_l^{3/2-\delta}}  \nonumber
\end{align}
Using a union bound over all $k \in \mathcal{K}_l$, and noticing that $d_l \leq T_l^{\frac{1}{2}-\delta}+1 \leq 2T_l^{\frac{1}{2}-\delta}$, we have the result,
\subsection{Proof of Lemma~\ref{lemma:temp_1}}\label{app:temp_1}

Define $k^* = \arg\max_{k \in \mathcal{K}_l} \mu_{l,k}$. Notice that
\begin{align}
  &\mathbb{E}\{Q(u+T^{\text{sum}}_l)[\lambda - \mu_{l,K_l(u)}]|\mathcal{G}_l(u-1)\}\mathbb{P}\{\mathcal{G}_l(u-1)\} \nonumber\\
    & \leq_{(a)} \mathbb{E}\left\{Q(u+T^{\text{sum}}_l)\left[\lambda - \text{UCB}_{l,K_l(u)}(u-1) +2\sqrt{\frac{(7-2\delta)\log\left(T_l\right)}{4(1\vee N_{l,K_l(u)}(u-1))}}\right]\Bigg{|}\mathcal{G}_l(u-1) \right\}\nonumber\\&\quad\quad\cdot\mathbb{P}\{\mathcal{G}_l(u-1)\} \nonumber
    \\&\leq_{(b)} 
    \mathbb{E}\left\{ Q(u+T^{\text{sum}}_l)\left[\lambda - \text{UCB}_{l,k^*}(u-1) +2\sqrt{\frac{(7-2\delta)\log\left(T_l\right)}{4(1 \vee N_{l,K_l(u)}(u-1))}}\right]\Bigg{|}\mathcal{G}_l(u-1)\right\}\nonumber\\&\quad\quad\cdot\mathbb{P}\{\mathcal{G}_l(u-1)\}\nonumber\\&\leq_{(c)} \mathbb{E}\left\{Q(T^{\text{sum}}_l+u)\left[\lambda - \mu_{l,k^*}+2\sqrt{\frac{(7-2\delta)\log\left(T_l\right)}{4(1 \vee N_{l,K_l(u)}(u-1))}}\right]\Bigg{|}\mathcal{G}_l(u-1)\right\} \mathbb{P}\{\mathcal{G}_l(u-1)\} \nonumber
    \\&\leq_{(d)} 
    -\frac{(\gamma-1)\varepsilon}{\gamma}\mathbb{E}\{Q(T^{\text{sum}}_l+u)|\mathcal{G}_l(u-1)\}\mathbb{P}\{\mathcal{G}_l(u-1)\} \nonumber\\&\ \ \ \ + \sqrt{7-2\delta}\mathbb{E}\left\{Q(T^{\text{sum}}_l+u)\sqrt{\frac{\log\left(T_l\right)}{(1 \vee N_{l,K_l(u)}(u-1))}}\Bigg{|}\mathcal{G}_l(u-1)\right\} \mathbb{P}\{\mathcal{G}_l(u-1)\}\nonumber
    \\&\leq  
    -\frac{(\gamma-1)\varepsilon}{\gamma}\mathbb{E}\{Q(u+T^{\text{sum}}_l)\}+\frac{(\gamma-1)\varepsilon}{\gamma}\mathbb{E}\{Q(T^{\text{sum}}_l+u)|\mathcal{G}_l^c(u-1)\}\mathbb{P}\{\mathcal{G}_l^c(u-1)\} \nonumber\\&\ \ \ \ +  \sqrt{7-2\delta}\mathbb{E}\left\{Q(T^{\text{sum}}_l+u)\sqrt{\frac{\log\left(T_l\right)}{(1 \vee N_{l,K_l(u)}(u-1))}}\right\}\nonumber
    \\&\leq_{(e)}  
    -\frac{(\gamma-1)\varepsilon}{\gamma}\mathbb{E}\{Q(u+T^{\text{sum}}_l)\}+\frac{4T^{\text{sum}}_{l+1}}{T_l} +  \sqrt{7-2\delta}\mathbb{E}\left\{Q(T^{\text{sum}}_l+u)\sqrt{\frac{\log\left(T_l\right)}{(1 \vee N_{l,K_l(u)}(u-1))}}\right\} \nonumber
\end{align}  
where (a) and (c) follow from the definition of the \emph{good event} $\mathcal{G}_l(u-1)$ in \eqref{eqn:good_event}, (b) follows from $\text{UCB}_{l,K_l(u)}(u-1) \leq \text{UCB}_{l,k}(u-1)$
for any $k \in \mathcal{K}_l$ due to the decision in \eqref{eqn:decision}, (d) follows from Corollary~\ref{corr:main_corr} since $l \geq b$, and (e) follows by Lemma~\ref{lemma:term_2} and $\varepsilon \leq 1$.
\subsection{Proof of Lemma~\ref{lemma:proof_1_to_til_H}}\label{app:proof_1_to_til_H}
First, notice that
\begin{align}\label{eqn:1_to_tbsum}
    \sum_{t=1}^{ T^{\text{sum}}_b}\mathbb{E}\{Q(t)\} \leq\sum_{t=1}^{ T^{\text{sum}}_b}(t-1) =\frac{T^{\text{sum}}_b(T^{\text{sum}}_b-1)}{2} \leq \frac{(T^{\text{sum}}_b)^2}{2},
\end{align}
where the first inequality follows from Lemma~\ref{eqn:det_queue_bound}.

Summing \eqref{eqn:basic_eqn} within a phase and then again over the phases $b,b+1,\dots,a(I)$, we have
\begin{align}
&\sum_{l=b}^{a( I)}\sum_{u=1}^{\tilde{T}_l}\left[\mathbb{E}\{\frac{1}{2}Q^2(u+T^{\text{sum}}_l+1)\} - \frac{1}{2}\mathbb{E}\{Q^2(u+T^{\text{sum}}_l)\}\right]\nonumber\\& \leq \sum_{l=b}^{a( I)}\sum_{u=1}^{\tilde{T}_l} \Bigg{(}\frac{1}{2} -\frac{(\gamma -1)\varepsilon}{\gamma}\mathbb{E}\{Q(u+T^{\text{sum}}_l)\}\nonumber\\&\ \ \ \ \quad\quad\quad\quad +  \sqrt{7-2\delta}\mathbb{E}\left\{Q(T^{\text{sum}}_l+u)\sqrt{\frac{\log\left(T_l\right)}{(1 \vee N_{l,K_l(u)}(u-1))}}\right\} +\frac{8T^{\text{sum}}_{l+1}}{T_l}\Bigg{)} \nonumber\\& \leq \left(\sum_{l=b}^{a( I)}\sum_{u=1}^{\tilde{T}_l} \frac{1}{2}\right) - \frac{(\gamma -1)\varepsilon}{\gamma}\left( \sum_{l=b}^{a( I)}\sum_{u=1}^{\tilde{T}_l} \mathbb{E}\{Q(u+T^{\text{sum}}_l)\}\right)\nonumber\\&\ \ \ \   + \sqrt{7-2\delta} \mathbb{E}\left\{\sum_{l=b}^{a( I)}\sum_{u=1}^{\tilde{T}_l} Q(T^{\text{sum}}_l+u)\sqrt{\frac{\log\left(T_l\right)}{(1 \vee N_{l,K_l(u)}(u-1))}}\right\}  + \sum_{l=b}^{a( I)}8T^{\text{sum}}_{l+1}\nonumber\\& =_{(a)} \frac{( I-T^{\text{sum}}_b)}{2} - \frac{(\gamma -1)\varepsilon}{\gamma}\left( \sum_{t=T^{\text{sum}}_b+1}^{ I} \mathbb{E}\{Q(t)\}\right) \nonumber\\&\ \ \ \ + \sqrt{7-2\delta} \mathbb{E}\left\{\sum_{l=b}^{a( I)}\sum_{u=1}^{\tilde{T}_l} Q(T^{\text{sum}}_l+u)\sqrt{\frac{\log\left(T_l\right)}{(1 \vee N_{l,K_l(u)}(u-1))}}\right\} +8\sum_{l=b}^{a( I)}T^{\text{sum}}_{l+1} \nonumber
\end{align}
where (a) follows since $\sum_{l=b}^{a( I)}\sum_{u=1}^{\tilde{T}_l} f(u+T^{\text{sum}}_l) = \sum_{t= T^{\text{sum}}_b+1}^{I} f(t)$ (first summing inside a phase and then summing over phases vs. summing over the horizon).
In addition, the first line of the above inequality is equal to $\frac{1}{2}\mathbb{E}\{Q^2( I+1)\} - \frac{1}{2}\mathbb{E}\{Q^2(T^{\text{sum}}_b+1)\}$ due to the same reason. Hence, we have
\begin{align}
     &\frac{1}{2}\mathbb{E}\{Q^2( I+1)\} - \frac{1}{2}\mathbb{E}\{Q^2(T^{\text{sum}}_b+1)\}  \nonumber\\&\leq  \frac{( I-T^{\text{sum}}_b)}{2} - \frac{(\gamma -1)\varepsilon}{\gamma}\left( \sum_{t=T^{\text{sum}}_b+1}^{I} \mathbb{E}\{Q(t)\}\right)  \nonumber\\& \ \ \ \ + \sqrt{7-2\delta} \mathbb{E}\left\{\sum_{l=b}^{a( I)}\sum_{u=1}^{\tilde{T}_l} Q(T^{\text{sum}}_l+u)\sqrt{\frac{\log\left(T_l\right)}{(1 \vee N_{l,K_l(u)}(u-1))}}\right\}  + 8\sum_{l=b}^{a( I)}T^{\text{sum}}_{l+1} \nonumber
\end{align}
Rearranging, we have
\begin{align}
&\sum_{t=T^{\text{sum}}_b+1}^{I}\mathbb{E}\{Q(t)\} \nonumber\\&\leq \frac{\gamma (I-T^{\text{sum}}_b)}{2(\gamma-1)\varepsilon}+ \frac{\gamma\sqrt{7-2\delta}}{(\gamma-1)\varepsilon} \mathbb{E}\left\{\sum_{l=b}^{a( I)}\sum_{u=1}^{\tilde{T}_l} Q(T^{\text{sum}}_l+u)\sqrt{\frac{\log\left(T_l\right)}{(1 \vee N_{l,K_l(u)}(u-1))}}\right\} \nonumber\\&\ \ \ \ + \frac{8\gamma}{(\gamma-1)\varepsilon}\sum_{l=b}^{a( I)}T^{\text{sum}}_{l+1}  + \frac{\gamma}{2(\gamma-1)\varepsilon}\mathbb{E}\{Q^2(T^{\text{sum}}_b+1)\}-\frac{\gamma}{2(\gamma-1)\varepsilon}\mathbb{E}\{Q^2( I+1)\}\nonumber\\& \leq \frac{\gamma I}{2(\gamma-1)\varepsilon}+  \frac{\gamma\sqrt{7-2\delta}}{(\gamma-1)\varepsilon} \mathbb{E}\left\{\sum_{l=b}^{a( I)}\sum_{u=1}^{\tilde{T}_l} Q(T^{\text{sum}}_l+u)\sqrt{\frac{\log\left(T_l\right)}{(1 \vee N_{l,K_l(u)}(u-1))}}\right\} \nonumber\\&\ \ \ \ + \frac{8\gamma}{(\gamma-1)\varepsilon}\sum_{l=b}^{a( I)}T^{\text{sum}}_{l+1}   + \frac{\gamma}{2(\gamma-1)\varepsilon}\left[\mathbb{E}\{Q^2(T^{\text{sum}}_b+1)\}-\mathbb{E}\{Q^2( I+1)\}\right]_+
\end{align}
where the last inequality follows since $x \leq [x]_+$.
Adding the above inequality with \eqref{eqn:1_to_tbsum}, we have the result.
\subsection{Proof of Lemma~\ref{lemma:q_pow_bnd}}\label{app:q_pow_bnd}
Define a permutation $\phi: [1:n] \to [1:n]$ such that $x_{\phi(t)} \leq x_{\phi(t+1)}$ for all $t \in [1:n-1]$. Let $y_t = x_{\phi(t)}$. Notice that $\phi$ can be chosen in such a way that for all $t \in [1:n-1]$,
\begin{align}\label{eqn:cond}
    x_{\phi(t)} = x_{\phi(t+1)} \implies \phi(t) < \phi(t+1).
\end{align}
We first establish $|y_t - y_{t+1}| \leq 1$ for all $t \in [1:n-1]$. 

\noindent
\textbf{Case 1} $\phi(t) < \phi(t+1)$: Define $k = \min\{i : \phi(t) < i \leq \phi(t+1), x_i \geq y_t\}$. Notice that the definition is valid since $ \phi(t+1) \in \{i : \phi(t) < i \leq \phi(t+1), x_i \geq y_t\}$. 

\textbf{Claim 1} $x_{k-1}  \leq  x_{\phi(t)}$:
To prove this, notice that from the definition of $k$, we have $k -1 \geq \phi(t)$. If $k-1 = \phi(t)$, we are done. If $k-1 > \phi(t)$ we have $\phi(t+1) \geq k> k-1 > \phi(t)$, which gives $x_{k-1} < x_{\phi(t)}$ from the definition of $k$. Hence, we are done with the proof of claim 1.

\textbf{Claim 2} $x_{\phi(t+1)}  \leq x_k$: Notice that there exists $t^{'} \in [1:n]$ such that $k = \phi(t^{'})$. We are done if we establish $t^{'} >t$. To prove this, notice that from the definition of $k$, we have $x_{\phi(t^{'})} \geq  x_{\phi(t)}$. If $x_{\phi(t^{'})} >  x_{\phi(t)}$, we directly have $t^{'} > t$. Hence, it remains to consider $x_{\phi(t^{'})} =  x_{\phi(t)}$. Notice that from the definition of $k$, we have $\phi(t^{'}) > \phi(t)$. Combining $x_{\phi(t^{'})} =  x_{\phi(t)}$ with $\phi(t^{'}) > \phi(t)$, the result follows from \eqref{eqn:cond}.

Now, combining the two claims, we have $x_{k-1}  \leq  x_{\phi(t)} \leq x_{\phi(t+1)}  \leq x_k$. Hence, $|y_t - y_{t+1}| = |x_{\phi(t)}-x_{\phi(t+1)}| \leq |x_k - x_{k-1}|\leq  1$, proving case 1.

\noindent
\textbf{Case 2}  $\phi(t) > \phi(t+1)$: Notice that due to \eqref{eqn:cond}, we have $x_{\phi(t+1)} > x_{\phi(t)}$. Define $k = \max\{i : \phi(t) > i \geq  \phi(t+1), x_i > y_{t}\}$. The definition is valid since $\phi(t+1) \in \{i : \phi(t) > i \geq  \phi(t+1), x_i > y_{t}\}$.

\textbf{Claim 1} $x_{k+1}  \leq  x_{\phi(t)}$:
Notice that from the definition of $k$, we have $k+1 \leq \phi(t)$. If $k+1 = \phi(t)$, we are done. If $k+1 < \phi(t)$ we have $\phi(t) > k+1> k \geq \phi(t+1)$, which gives $x_{k+1} \leq x_{\phi(t)}$ from the definition of $k$. Hence, we are done with the proof of claim 1.

\textbf{Claim 2} $x_k \geq x_{\phi(t+1)}$:  Take $t^{'}$ such that $k = \phi(t^{'})$. We prove that $t^{'} >t$ which establishes the result. To prove this, notice that from the definition of $k$, we have $x_{\phi(t^{'})} >  x_{\phi(t)}$. Hence, we have $t^{'} > t$ as desired.

Now, combining the two claims, we have $x_{k}  \geq  x_{\phi(t+1)} \geq x_{\phi(t)}  \geq x_{k+1}$. Hence, $|y_t - y_{t+1}| = |x_{\phi(t)}-x_{\phi(t+1)}| \leq |x_k - x_{k-1}|\leq  1$, proving case 2.

Now, we prove the lemma. Let $s = \lceil y_n - y_1 \rceil$. Notice that $s \in [0:n-1]$. We have
\begin{align}
S &= \sum_{t=1}^n y_t \geq \sum_{t=0}^{s-1} y_{n-t} \geq \sum_{t=0}^{s-1} \left(y_{n}-t\right)  = sy_n - \frac{s(s-1)}{2} = \frac{s(2y_n-s+1)}{2} \geq \frac{(y_n-y_1)(y_n+y_1)}{2} \nonumber\\&= \frac{y_n^2 - y_1^2}{2} = \frac{y_n^2}{2} \nonumber
\end{align}
where the last inequality follows from $y_n -y_1 \leq s \leq y_n -y_1 +1$. Hence, 
\begin{align}
    D^p = \sum_{t=1}^n x_t^p \leq y_n^{p-1} S  \leq (2S)^{\frac{p-1}{2}}S \leq 2^{\frac{p-1}{2}}S^{\frac{p+1}{2}} \nonumber
\end{align}
\subsection{Proof of Lemma~\ref{lemma:bounding_ucb_slack_holder}}\label{app:bounding_ucb_slack_holder}
Notice that,
\begin{align}
&\sum_{u=1}^{\tilde{T}_l}\left(\frac{\log\left(T_l\right)}{(1 \vee N_{l,K(u)}(u-1))}\right)^{\frac{q}{2}} = \log^{\frac{q}{2}}(T_l)\sum_{k \in \mathcal{K}_l} \sum_{\substack{u=1\\K_l(u) = k}}^{\tilde{T}_l}\frac{1}{(1 \vee N_{l,K(u)}(u-1))^{q/2}}  \nonumber\\&\leq \log^{\frac{q}{2}}(T_l)\sum_{k \in \mathcal{K}_l} \left(1+\sum_{\tau = 1}^{N_{l,k}(\tilde{T}_l)-1}\frac{1}{\tau^{q/2}}\right) \leq_{(a)} \log^{\frac{q}{2}}(T_l)\sum_{k \in \mathcal{K}_l} \left(1+\frac{[N_{l,k}(\tilde{T}_l)-1]^{1-\frac{q}{2}}-\frac{q}{2})}{1-(q/2)}\right) \nonumber\\&\leq \frac{(1-q)\log^{\frac{q}{2}}(T_l)d_l}{1-\frac{q}{2}} +\log^{\frac{q}{2}}(T_l)\sum_{k \in \mathcal{K}_l}\frac{[N_{l,k}(\tilde{T}_l)-1]^{1-\frac{q}{2}}}{1-(q/2)}
\nonumber\\& \leq_{(b)} \log^{\frac{q}{2}}(T_l)d_l^{\frac{q}{2}}\frac{\left[\sum_{k \in \mathcal{K}_l}(N_{l,k}(\tilde{T}_l)-1)\right]^{1-\frac{q}{2}}}{1-(q/2)} \leq \log^{\frac{q}{2}}(T_l)\frac{d_l^{\frac{q}{2}}T_l^{1-(q/2)}}{1-\frac{q}{2}} \nonumber
\end{align}
where (a) follows from $\sum_{\tau = 1}^y \frac{1}{\tau^x} \leq \frac{y^{1-x}-x}{1-x}$ for any $x \leq 1$, and (b) follows from $q \in (1,2)$, and $\left(\sum_{i=1}^n a_i^x\right) \leq \left(\sum_{i=1}^n a_i\right)^xn^{1-x}$
for $a_i \geq 0$ and $x \in (0,1]$.
\subsection{Proof of Lemma~\ref{lemma:abX_lemma}}\label{app:abX_lemma}
We prove the first inequality. The second inequality follows from a simple application of Jensen's inequality to the convex function $f(x) = x^d$.

Assume the contrary that $X >  a^{\frac{1}{d}} + b$. Hence we have $X > a^{1/d}$ and $X-b > a^{1/d}$. Hence, 
\begin{align}
  X^{d-1}(X-b) > a^{\frac{d-1}{d}} a^{\frac{1}{d}} = a, \nonumber
\end{align}
which contradicts the original condition. 

\section{Proofs for Section~\ref{sec:converse}}
\subsection{Proof of Lemma~\ref{lemma:ref_def_lemma}}\label{app:ref_def_lemma}
Recall Claim 1 and Claim 2 of Section~\ref{sec:prel_cons}.
The first part follows from Claim 2 since
       \begin{align}
          [x_k,x_{k+1}] = [x_k,x_k+|\mathcal{I}_k|]\subseteq_{(a)} [7/12, 2/3 +3\varepsilon] \subset (7/12,1), \nonumber
       \end{align}
       where (a) follows combining Claim 2 with $7/12 = x_1 \leq x_k \leq 2/3$ for all $k \in [1:K]$, and the last inclusion follows since we assumed $\varepsilon \leq 1/144$. 
       
       For the first inequality of the second part, note that the intervals $\mathcal{I}_1,\mathcal{I}_2,\dots,\mathcal{I}_K$, are disjoint, and cover $(7/12,2/3]$. Claim 2 ensures each of these intervals have size at most $3\varepsilon$, so $3\varepsilon K \geq \frac{2}{3} - \frac{7}{12}$, which yields $K \geq 1/(36\varepsilon)$. For the second inequality of the second part, notice that
       \begin{align}
           \frac{2}{3} \leq x_{K+1} = \frac{7}{12} \left(1+ \frac{2\varepsilon}{\frac{1}{2}-\varepsilon}\right)^{K} \leq \frac{7}{12}\left(1+ \frac{2(1/144)}{\frac{1}{2}-(1/144)}\right)^{K} \nonumber 
       \end{align}
       where the last inequality follows since $x/(0.5-x)$ is nondecreasing in $[0,0.5)$. This gives $K \geq 5$.
\subsection{Proof of Theorem~\ref{thm:lower bound}   }\label{app:lower_bound}

Fix $T \in \mathbb{N}$. For $t \in [1:T]$, recall that $V(t) \in[ 0,1]$ is the rate chosen in time slot $t$, and $C(t) \in [0,1]$ is the channel capacity in time slot $t$. Define 
\begin{align}
    B(t) = \mathbbm{1}\{V(t) \leq C(t) \}. \nonumber
\end{align}

Also, let us denote by $\mathcal{H}(t)$, the history up to time $t$, that is,
\begin{align}
    \mathcal{H}(t) = \{A(1), \dots, A(t),B(1),\dots, B(t)\} \nonumber
\end{align} 
Notice that $\mathcal{H}(t) \in \mathcal{B}_t$ where $\mathcal{B}_t = \{0,1\}^{2t}$. A deterministic policy for selecting rates can be denoted by a sequence of functions $f^{1}, f^2,\dots,f^T$, where $f^{\tau}:\mathcal{B}_{\tau-1} \times \{0,1\} \to [0,1]$ and given $ A(\tau) = a,\mathcal{H}(\tau-1) = \vec{h}$, we have $V(\tau) = f^{\tau}(\vec{h},a)$. We prove the theorem for deterministic policies. From Fubini's theorem, the result extends to randomized algorithms. Recall the definition of the interval $\mathcal{I}_k$ in \eqref{eqn:int_x_k}. In particular,
\begin{align}
    \mathcal{I}_k = (x_k,x_{k+1}],
\end{align}
where $x_k$ is defined by
\begin{align}
    &x_1 = \frac{7}{12}, \text{ and }x_{k+1} = x_k\left(1+ \frac{2\varepsilon}{\frac{1}{2}-\varepsilon}\right).
\end{align}
For $t \in [0:T]$, and $k \in [1:K]$, let $N_k(t)$ denote the number of times the policy chooses an action in $\mathcal{I}_k$ in the first $t$ time slots. In particular,
\begin{align}
    N_k(t) = \sum_{\tau = 1}^t \mathbbm{1}\{V(\tau) \in \mathcal{I}_k\}. \nonumber
\end{align}

Let us also define an additional environment (Environment 0) with Bernoulli$(1/2)$ arrivals and $C(t)$ sampled from $X_0$ with
\begin{align}
    F_{X_0}(x) = \begin{cases}
        0 & \text{ if } x \leq \frac{1}{2} - \varepsilon\\
        1 - \frac{\frac{1}{2}-\varepsilon}{x} & \text{ if }\frac{1}{2} - \varepsilon < x < 1\\
         1 & \text{ if }x \geq 1.\\
    \end{cases} \nonumber
\end{align}
It is easy to see that the function $g_0 :[0,1] \to [0,1]$ given by $g_0(x) = x\mathbb{P}\{X_0 \geq x\}$ satisfies
\begin{align}
    g_0(x) = \begin{cases}
        x & \text{ for } x \in \left[0,\frac{1}{2}-\varepsilon\right]\\
        \frac{1}{2} - \varepsilon & \text{ for } x \in \left[\frac{1}{2}-\varepsilon,1\right]
    \end{cases} \nonumber
\end{align}
Note that we cannot stabilize the queue in the environment since $\max_{x \in [0,1]} g_0(x) = \frac{1}{2} - \varepsilon < \frac{1}{2} = \lambda$. 

Figure~\ref{fig_sim_11} denotes the plots of the above CDF and the function $g_0$.
\begin{figure}
\centering
{\includegraphics[width=0.48\linewidth]{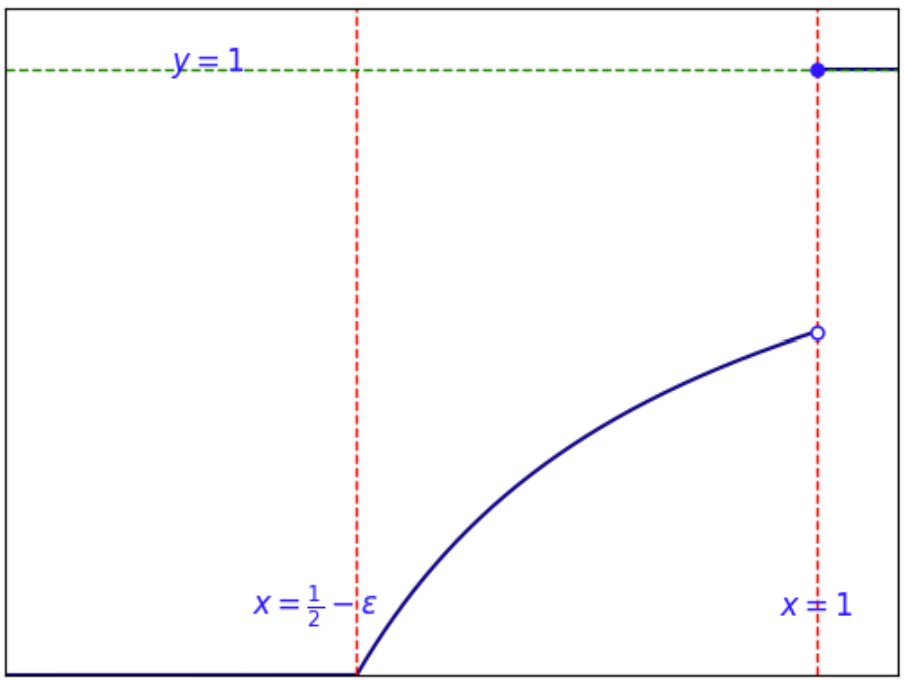}%
\label{Fig:51}}
\hfil
\centering
{\includegraphics[width=0.48\linewidth]{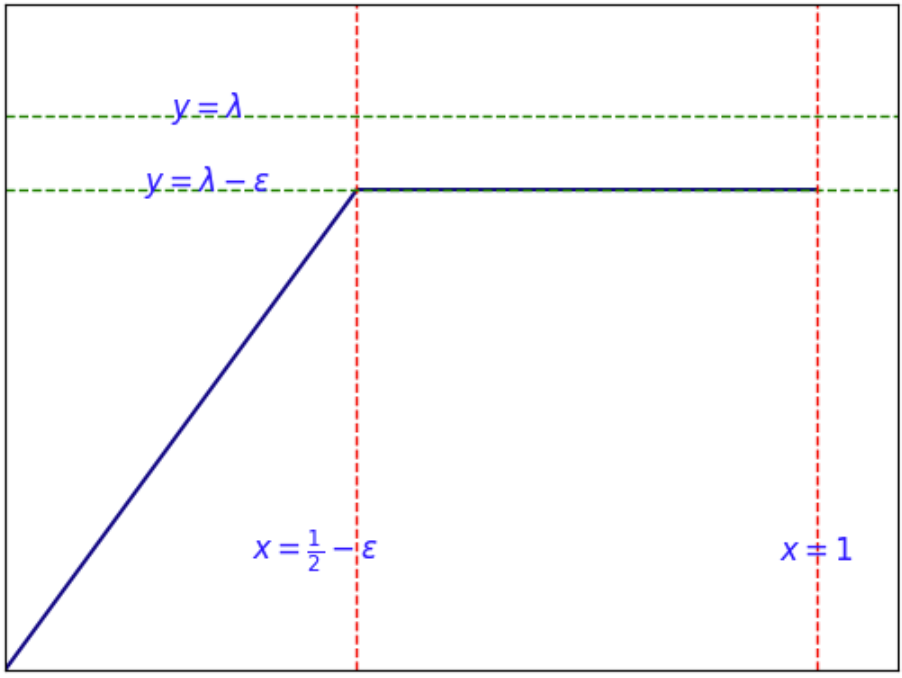}%
\label{Fig:57}}
\caption{Plot of the CDF, $F_{X_0}$, and the corresponding function $g_0$ \textbf{Left:} Plot of $F_{X_0}$. \textbf{Right:} Plot of $g_0$. }
\label{fig_sim_11}
\end{figure}

For $i \in [0:K]$, the Environment $i$ interacts with the rate selection policy and gives rise to a probability measure $\mathbb{P}^i$ over $\mathcal{H}(T)$. Let $\mathbb{E}^i$ denote the corresponding expectation. We have the following lemma. 

\begin{lemma}\label{lemma:l_b_lem1}
For each $k \in [1:K]$ and $t \in [0:T]$, we have that
\begin{align}
    \mathbb{E}^k\{N_k(t) \} \leq \mathbb{E}^0\{N_k(t) \}  + 4\sqrt{7}\varepsilon t  \sqrt{\mathbb{E}^0\{N_k(T) \}} \nonumber
\end{align}
\begin{proof}
 For two distributions in  $G_1$ and $G_2$ supported in $\mathcal{B}_T$, let $D_{\text{TV}}(G_1 \lVert G_2)$ denote their total variation distance. The result is trivial for $t = 0$, since $N_i(0) = 0$ for all $i \in [0:K]$. Hence, let us assume $t >0$. Since we assumed that the policy for selecting rates is deterministic,  $N_k(t)$ is $\mathcal{H}(T)$ measurable. Hence,
\begin{align}
     \mathbb{E}^k\{N_k(t)\}-\mathbb{E}^0\{N_k(t)\} & \leq \sum_{\vec{h} \in \mathcal{B}_T} N_k(t)(\vec{h})\left[\mathbb{P}^k(\vec{h})-\mathbb{P}^0(\vec{h})\right] \leq \sum_{\vec{h} \in \mathcal{B}_T}  N_k(t)(\vec{h})\left|\mathbb{P}^k(\vec{h})-\mathbb{P}^0(\vec{h})\right| \nonumber\\&\leq t\sum_{\vec{h} \in \mathcal{B}_T} \left|\mathbb{P}^k(\vec{h})-\mathbb{P}^0(\vec{h})\right| = 2tD_{\text{TV}}(\mathbb{P}^0\lVert \mathbb{P}^k)\leq t \sqrt{2D_{\text{KL}}(\mathbb{P}^0\lVert \mathbb{P}^k)} \nonumber
\end{align}
where the last inequality follows by Pinsker's inequality. 

Now, we prove that
\begin{align}\label{eqn:to_est}
    D_{\text{KL}}(\mathbb{P}^0\lVert \mathbb{P}^k) \leq 56\varepsilon^2\mathbb{E}^0\{N_k(T)\}
\end{align}
which establishes the result. 
Notice that
\begin{align}\label{eqn:orig_eq}
    D_{\text{KL}}(\mathbb{P}^0\lVert \mathbb{P}^k) &= \sum_{\tau =1 }^T D_{\text{KL}}(\mathbb{P}^0(\mathcal{H}(\tau)|\mathcal{H}(\tau-1))\lVert \mathbb{P}^k(\mathcal{H}(\tau)|\mathcal{H}(\tau-1)))\nonumber\\& = \sum_{\tau = 1}^T  D_{\text{KL}}(\mathbb{P}^0(B(\tau), A(\tau)|\mathcal{H}(\tau-1))\lVert \mathbb{P}^k(B(\tau),A(\tau)|\mathcal{H}(\tau-1)))
\end{align}
where the first equality follows by applying chain rule of KL divergence. For each $i \in [0:K]$, we  define the function $\tilde{F}^i(x) = \mathbb{P}\{X_k \geq x\}$. Hence, we have that
\begin{align}
    \tilde{F}^i(x) = \begin{cases}
        1 & \text{ if } x \leq \frac{1}{2} - \varepsilon\\
        \frac{\frac{1}{2}-\varepsilon}{x} & \text{ if }x \in \left(\frac{1}{2} - \varepsilon ,1\right]\setminus \mathcal{I}_i\\
         \frac{\frac{1}{2}-\varepsilon}{x_i}& \text{ if }x \in \mathcal{I}_i\\
         0 & \text{ if } x > 1
    \end{cases} \nonumber
\end{align}
for $i \in [1:K]$, and
\begin{align}
    \tilde{F}^0(x) = \begin{cases}
        1 & \text{ if } x \leq \frac{1}{2} - \varepsilon\\
        \frac{\frac{1}{2}-\varepsilon}{x} & \text{ if }x \in \left(\frac{1}{2} - \varepsilon ,1\right]\\
         0 & \text{ if } x > 1.
    \end{cases} \nonumber
\end{align}
Also, for $\vec{h} \in \mathcal{B}_{\tau-1}$, $a \in \{0,1\}$, $P^{i,\tau}_{a,\vec{h}}$ denotes the PMF of a Bernoulli$(\tilde{F}^i(f^{\tau}(\vec{h},a)))$ distribution (recall that $f^{1},f^2,\dots, f^{T}$ denotes the rate selection policy). First, notice that for each $\tau \in [1:T]$, $i \in [0:K]$, $a,b \in \{0,1\}$, and $\vec{h} \in \mathcal{B}_{\tau-1}$,
\begin{align}\label{eqn:imp_equation}
    \mathbb{P}^i(B(\tau) = b,A(\tau)& = a|\mathcal{H}(\tau-1) = \vec{h}) \nonumber\\& =\mathbb{P}^i(B(\tau) = b|A(\tau) = a,\mathcal{H}(\tau-1)= \vec{h})\mathbb{P}^i(A(\tau) = a|\mathcal{H}(\tau-1) = \vec{h})\nonumber\\& = P^{i,\tau}_{a,\vec{h}}(b)\mathbb{P}^0(A(\tau) = a), 
\end{align}
where the last equality follows since $A(\tau)$ is independent of $\mathcal{H}(\tau-1)$, and given $A(\tau) = a,\mathcal{H}(\tau-1)= \vec{h}$, the chosen rate is $f^{\tau}(\vec{h},a)$.

For $x,y \in [0,1]$, we use the notation $D_{\text{KL}}(x\lVert y)$ to denote the KL divergence between two Bernoulli$(x)$, and Bernoulli$(y)$ random variables. Hence, for each $\tau \in [1:T]$, we have
\begin{align}
     &D_{\text{KL}}(\mathbb{P}^0(B(\tau), A(\tau)|\mathcal{H}(\tau-1))\lVert \mathbb{P}^k(B(\tau), A(\tau)|\mathcal{H}(\tau-1))) \nonumber\\&  =  \sum_{\vec{h} \in \mathcal{B}_{\tau-1}}\mathbb{P}^0(\vec{h})\sum_{(a,b) \in \{0,1\}^2}\mathbb{P}^0(B(\tau) = b, A(\tau) = a| \mathcal{H}(\tau-1) = \vec{h})  \nonumber\\&\ \ \ \ \cdot\ln\left(\frac{\mathbb{P}^0(B(\tau) = b, A(\tau) = a| \mathcal{H}(\tau-1) = \vec{h})}{\mathbb{P}^k(B(\tau) = b, A(\tau) = a| \mathcal{H}(\tau-1) = \vec{h})}\right)\nonumber\\& =_{(a)}  
     \sum_{\vec{h} \in \mathcal{B}_{\tau-1}} \mathbb{P}^0(\vec{h})\sum_{(a,b) \in \{0,1\}^2} P^{0,\tau}_{a,\vec{h}}(b)\mathbb{P}^0(A(\tau) = a) \ln\left(\frac{P^{0,\tau}_{a,\vec{h}}(b)}{P^{k,\tau}_{a,\vec{h}}(b)}\right)\nonumber\\&=\sum_{\vec{h} \in \mathcal{B}_{\tau-1}} \mathbb{P}^0(\vec{h})\sum_{(a,b) \in \{0,1\}^2}\mathbbm{1}\{f^{\tau}(\vec{h},a) \in \mathcal{I}_k\}P^{0,\tau}_{a,\vec{h}}(b)\mathbb{P}^0(A(\tau) = a) \ln\left(\frac{P^{0,\tau}_{a,\vec{h}}(b)}{P^{k,\tau}_{a,\vec{h}}(b)}\right)\nonumber\\&\ \ \ \   + \sum_{\vec{h} \in \mathcal{B}_{\tau-1}} \mathbb{P}^0(\vec{h})\sum_{(a,b) \in \{0,1\}^2}\mathbbm{1}\{f^{\tau}(\vec{h},a) \not\in \mathcal{I}_k\}P^{0,\tau}_{a,\vec{h}}(b)\mathbb{P}^0(A(\tau) = a) \ln\left(\frac{P^{0,\tau}_{a,\vec{h}}(b)}{P^{k,\tau}_{a,\vec{h}}(b)}\right)\nonumber\\&=_{(b)}\sum_{\vec{h} \in \mathcal{B}_{\tau-1}} \mathbb{P}^0(\vec{h})\sum_{(a,b) \in \{0,1\}^2} \mathbbm{1}\{f^{\tau}(\vec{h},a) \in \mathcal{I}_k\}P^{0,\tau}_{a,\vec{h}}(b)\mathbb{P}^0(A(\tau) = a) \ln\left(\frac{P^{0,\tau}_{a,\vec{h}}(b)}{P^{k,\tau}_{a,\vec{h}}(b)}\right)\nonumber\\&=\sum_{\vec{h} \in \mathcal{B}_{\tau-1}} \mathbb{P}^0(\vec{h})\sum_{a \in \{0,1\}} \mathbbm{1}\{f^{\tau}(\vec{h},a) \in \mathcal{I}_k\}\mathbb{P}^0(A(\tau) = a)\sum_{b \in \{0,1\}}P^{0,\tau}_{a,\vec{h}}(b) \ln\left(\frac{P^{0,\tau}_{a,\vec{h}}(b)}{P^{k,\tau}_{a,\vec{h}}(b)}\right)\nonumber\\&=\sum_{\vec{h} \in \mathcal{B}_{\tau-1}} \mathbb{P}^0(\vec{h})\sum_{a \in \{0,1\}} \mathbbm{1}\{f^{\tau}(\vec{h},a) \in \mathcal{I}_k\}\mathbb{P}^0(A(\tau) = a)D_{\text{KL}}(P^{0,\tau}_{a,\vec{h}}\lVert P^{k,\tau}_{a,\vec{h}}) 
     \nonumber\\& \leq_{(c)} \sum_{\vec{h} \in \mathcal{B}_{\tau-1}}\mathbb{P}^0(\vec{h})\sum_{a \in \{0,1\}}\mathbbm{1}\{f^{\tau}(\vec{h},a) \in \mathcal{I}_k\}\mathbb{P}^0(A(\tau) = a)D_{\text{KL}}\left(\frac{1/2 - \varepsilon}{x_{k+1}}\Bigg{\lVert} \frac{1/2 -\varepsilon}{x_k}\right)  \nonumber\\&= \mathbb{P}^0(V(\tau) \in  \mathcal{I}_k)D_{\text{KL}}\left(\frac{1/2 - \varepsilon}{x_{k+1}}\Bigg{\lVert} \frac{1/2 -\varepsilon}{x_k}\right)  \nonumber
\end{align}
where (a) follows from \eqref{eqn:imp_equation}, (b) follows since for each $b \in \{0,1\}$,  $P^{i,\tau}_{a,\vec{h}}(b) = P^{0,\tau}_{a,\vec{h}}(b)$ whenever $f^{\tau}(\vec{h},a) \not\in \mathcal{I}_k$ (since $\tilde{F}^0$ and $\tilde{F}^k$ are the same outside of $\mathcal{I}_k$), and (c) follows from Lemma~\ref{lemma:bern_KL}-1, since given $f^{\tau}(\vec{h},a) \in \mathcal{I}_k$, we have
\begin{align}
     \tilde{F}^0(f^{\tau}(\vec{h},a)) \geq \frac{1/2 - \varepsilon}{x_{k+1}} \text{ and } \tilde{F}^k(f^{\tau}(\vec{h},a)) = \frac{1/2 - \varepsilon}{x_{k}}. \nonumber
\end{align}

Plugging the above back in \eqref{eqn:orig_eq}, we have that
\begin{align}
    &D_{\text{KL}}(\mathbb{P}^0\lVert \mathbb{P}^k) \leq \sum_{\tau = 1}^T \mathbb{P}^0(V(\tau) \in  \mathcal{I}_k)D_{\text{KL}}\left(\frac{1/2 - \varepsilon}{x_{k+1}}\Bigg{\lVert} \frac{1/2 -\varepsilon}{x_k}\right)  \nonumber\\&= \mathbb{E}^0\{N_k(T)\}D_{\text{KL}}\left(\frac{1/2 - \varepsilon}{x_{k+1}}\Bigg{\lVert} \frac{1/2 -\varepsilon}{x_{k}}\right) =_{(a)} \mathbb{E}^0\{N_k(T)\}D_{\text{KL}}\left(\frac{1/2 - \varepsilon}{x_{k+1}}\Bigg{\lVert} \frac{1/2 +\varepsilon}{x_{k+1}}\right)\nonumber\\&\leq _{(b)} \frac{4\varepsilon^2}{ \frac{1/2 +\varepsilon}{x_{k+1}}\left(1- \frac{1/2 +\varepsilon}{x_{k+1}}\right)}\mathbb{E}^0\{N_k(T)\}=\frac{4\varepsilon^2}{ \frac{1/2 +\varepsilon}{x_{k+1}}\left(1- \frac{1/2 -\varepsilon}{x_{k}}\right)}\mathbb{E}^0\{N_k(T)\} \nonumber\\&\leq_{(c)} \frac{4\varepsilon^2}{ (1/2 +\varepsilon)\left(1- \frac{1/2- \varepsilon}{7/12}\right)}\mathbb{E}^0\{N_k(T)\} \leq \frac{4\varepsilon^2}{ \frac{1}{2}\left(1- \frac{1/2 }{7/12}\right)}\mathbb{E}^0\{N_k(T)\} \nonumber\\&= 56 \varepsilon^2\mathbb{E}^0\{N_k(T)\} \nonumber
\end{align}
where (a) follows from the definition of $x_{k+1}$ in \eqref{eqn:x_def},  (b) follows from Lemma~\ref{lemma:bern_KL}-2, and (c) follows since $x_k,x_{k+1} \in (7/12,1)$ (Lemma~\ref{lemma:ref_def_lemma}-1) which establishes \eqref{eqn:to_est} as desired. Hence, we are done with the proof of the lemma.
\end{proof}
\end{lemma}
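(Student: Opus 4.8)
The plan is to reduce the comparison between the two environments to a single information-theoretic inequality on the binary history $\mathcal{H}(T)\in\mathcal{B}_T=\{0,1\}^{2T}$, and then control the count $N_k(t)$ by the distance between the induced laws. First I would dispatch the trivial case $t=0$, where $N_k(0)=0$ in every environment. For $t\ge 1$, since the rate-selection policy is deterministic, $N_k(t)$ is a fixed function of $\mathcal{H}(T)$ bounded by $t$, so
\[
\mathbb{E}^k\{N_k(t)\}-\mathbb{E}^0\{N_k(t)\}\;\le\;\sum_{\vec h\in\mathcal{B}_T}N_k(t)(\vec h)\bigl|\mathbb{P}^k(\vec h)-\mathbb{P}^0(\vec h)\bigr|\;\le\;t\sum_{\vec h\in\mathcal{B}_T}\bigl|\mathbb{P}^k(\vec h)-\mathbb{P}^0(\vec h)\bigr|\;=\;2t\,D_{\text{TV}}(\mathbb{P}^0\lVert\mathbb{P}^k),
\]
and Pinsker's inequality gives $2D_{\text{TV}}(\mathbb{P}^0\lVert\mathbb{P}^k)\le\sqrt{2D_{\text{KL}}(\mathbb{P}^0\lVert\mathbb{P}^k)}$. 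The whole claim then follows once I establish $D_{\text{KL}}(\mathbb{P}^0\lVert\mathbb{P}^k)\le 56\varepsilon^2\,\mathbb{E}^0\{N_k(T)\}$, because $\sqrt{2\cdot 56}=4\sqrt7$.

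To bound the KL divergence I would split it along the time axis via the chain rule, writing $D_{\text{KL}}(\mathbb{P}^0\lVert\mathbb{P}^k)=\sum_{\tau=1}^T D_{\text{KL}}\bigl(\mathbb{P}^0(\mathcal{H}(\tau)\mid\mathcal{H}(\tau-1))\,\big\|\,\mathbb{P}^k(\mathcal{H}(\tau)\mid\mathcal{H}(\tau-1))\bigr)$. In both environments $A(\tau)$ is $\text{Bernoulli}(1/2)$ independent of the past, so the only contribution comes from $B(\tau)$; conditioned on $A(\tau)=a$ and $\mathcal{H}(\tau-1)=\vec h$ the policy plays the fixed rate $f^\tau(\vec h,a)$, making $B(\tau)$ a $\text{Bernoulli}(\tilde F^i(f^\tau(\vec h,a)))$ variable. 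Since $\tilde F^0$ and $\tilde F^k$ coincide off the interval $\mathcal{I}_k$, I would split each inner sum over $(a,b)\in\{0,1\}^2$ according to whether $f^\tau(\vec h,a)\in\mathcal{I}_k$: the part with $f^\tau(\vec h,a)\notin\mathcal{I}_k$ vanishes term-by-term, and the part with $f^\tau(\vec h,a)\in\mathcal{I}_k$ contributes $\mathbb{P}^0(V(\tau)\in\mathcal{I}_k)$ times a per-step Bernoulli KL divergence. Summing over $\tau$, the factor $\sum_\tau\mathbb{P}^0(V(\tau)\in\mathcal{I}_k)$ is exactly $\mathbb{E}^0\{N_k(T)\}$.

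It remains to bound that per-step divergence when $f^\tau(\vec h,a)\in\mathcal{I}_k$. There $\tilde F^k\equiv(1/2-\varepsilon)/x_k$ and $\tilde F^0(x)=(1/2-\varepsilon)/x\ge(1/2-\varepsilon)/x_{k+1}$, so by the monotonicity of $D_{\text{KL}}(\cdot\lVert c)$ (Lemma~\ref{lemma:bern_KL}-1) the divergence is at most $D_{\text{KL}}\bigl(\tfrac{1/2-\varepsilon}{x_{k+1}}\,\big\|\,\tfrac{1/2-\varepsilon}{x_k}\bigr)$. Using the recursion $x_{k+1}=x_k\bigl(1+\tfrac{2\varepsilon}{1/2-\varepsilon}\bigr)$, i.e.\ $\tfrac{1/2-\varepsilon}{x_k}=\tfrac{1/2+\varepsilon}{x_{k+1}}$, this equals $D_{\text{KL}}\bigl(\tfrac{1/2-\varepsilon}{x_{k+1}}\,\big\|\,\tfrac{1/2+\varepsilon}{x_{k+1}}\bigr)$, which Lemma~\ref{lemma:bern_KL}-2 bounds by $4\varepsilon^2/\bigl(\tfrac{1/2+\varepsilon}{x_{k+1}}(1-\tfrac{1/2-\varepsilon}{x_k})\bigr)$; plugging in $x_k,x_{k+1}\in(7/12,1)$ (Lemma~\ref{lemma:ref_def_lemma}-1) yields the clean bound $56\varepsilon^2$. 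Combining the three displays gives the lemma. I expect the main obstacle to be the chain-rule bookkeeping — isolating the indicator $\mathbbm{1}\{f^\tau(\vec h,a)\in\mathcal{I}_k\}$ inside each conditional divergence and keeping the conditional laws $P^{i,\tau}_{a,\vec h}$ straight — rather than any single inequality, each of which is a short calculation once the structure is in place.
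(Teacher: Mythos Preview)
Your proposal is correct and follows the paper's proof essentially line by line: the same reduction via total variation and Pinsker, the same chain-rule decomposition of $D_{\text{KL}}(\mathbb{P}^0\lVert\mathbb{P}^k)$, the same splitting on $\mathbbm{1}\{f^\tau(\vec h,a)\in\mathcal{I}_k\}$, and the same chain of Bernoulli-KL bounds (monotonicity from Lemma~\ref{lemma:bern_KL}-1, the recursion $(1/2-\varepsilon)/x_k=(1/2+\varepsilon)/x_{k+1}$, the quadratic bound from Lemma~\ref{lemma:bern_KL}-2, and the range $x_k,x_{k+1}\in(7/12,1)$) to reach $56\varepsilon^2$. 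There is nothing to add.
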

Next, we have the following lemma.
\begin{lemma}\label{lemma:service_bound}
Fix $k \in [1:K]$ and $t \in [1:T]$. We have
\begin{align}
    \sum_{\tau = 1}^t \mathbb{E}^k\{g_k(V(\tau))\} \leq \left(\frac{1}{2}-\varepsilon \right)t + 2\varepsilon \mathbb{E}^k\{N_k(t)\} \nonumber
\end{align}
where function $g_k$ is defined in \eqref{eqn:g_k_def}.
\begin{proof}
Notice that
\begin{align}\label{eqn:0101010}
    &\sum_{\tau = 1}^t \mathbb{E}^k\{g_k(V(\tau))\} \nonumber\\&= \sum_{\tau = 1}^t \left[\mathbb{E}^k\{g_k(V(\tau))|V(\tau) \in \mathcal{I}_k\}\mathbb{P}^k\{V(\tau) \in \mathcal{I}_k\}+\mathbb{E}^k\{g_k(V(\tau))|V(\tau) \not\in \mathcal{I}_k\}\mathbb{P}^k\{V(\tau) \not\in \mathcal{I}_k\} \right]\nonumber\\& \leq \sum_{\tau = 1}^t\left[ \left(\frac{1}{2}+\varepsilon \right)\mathbb{P}^k\{V(\tau) \in \mathcal{I}_k\}+\left(\frac{1}{2}-\varepsilon \right)\mathbb{P}^k\{V(\tau) \not\in \mathcal{I}_k\} \right]\nonumber\\& = \sum_{\tau = 1}^t\left[ \left(\frac{1}{2}+\varepsilon \right)\mathbb{P}^k\{V(\tau) \in \mathcal{I}_k\}+\left(\frac{1}{2}-\varepsilon \right)\mathbb{P}^k\{V(\tau) \not\in \mathcal{I}_k\} \right] \nonumber\\& = \left(\frac{1}{2}+\varepsilon \right)\mathbb{E}\{N_k(t)\}+\left(\frac{1}{2}-\varepsilon \right)\mathbb{E}\{t-N_k(t)\} = \left(\frac{1}{2}-\varepsilon \right)t + 2\varepsilon \mathbb{E}^k\{N_k(t)\}
\end{align}
Hence, we are done.
\end{proof}
\end{lemma}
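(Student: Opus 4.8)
\emph{Proof proposal.} The plan is to reduce the claim to a deterministic, pointwise upper bound on $g_k$ followed by linearity of expectation. First I would establish that for every $x\in\mathcal{I}_k$ one has $g_k(x)\le \frac{1}{2}+\varepsilon$, and for every $x\in[0,1]\setminus\mathcal{I}_k$ one has $g_k(x)\le \frac{1}{2}-\varepsilon$. The first bound follows from the explicit form in \eqref{eqn:g_k_def}: on $\mathcal{I}_k$ we have $g_k(x)=x\left(\frac{1}{2}-\varepsilon\right)/x_k$, which is increasing in $x$, hence maximized at the right endpoint $x_{k+1}$; substituting the recursion $x_{k+1}=x_k\!\left(1+\frac{2\varepsilon}{\frac{1}{2}-\varepsilon}\right)$ from \eqref{eqn:x_def} gives $g_k(x_{k+1})=\left(\frac{1}{2}-\varepsilon\right)\!\left(1+\frac{2\varepsilon}{\frac{1}{2}-\varepsilon}\right)=\frac{1}{2}+\varepsilon$, which is exactly Claim~3. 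The second bound is immediate from \eqref{eqn:g_k_def}: on $[0,\frac{1}{2}-\varepsilon]$ we have $g_k(x)=x\le\frac{1}{2}-\varepsilon$, and on $(\frac{1}{2}-\varepsilon,1]\setminus\mathcal{I}_k$ we have $g_k(x)=\frac{1}{2}-\varepsilon$ exactly. I would note that Lemma~\ref{lemma:ref_def_lemma}-1 guarantees $\mathcal{I}_k\subset(\frac{1}{2}-\varepsilon,1]$, so these three regions genuinely partition $[0,1]$ and the case analysis is exhaustive.

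Combining the two bounds yields, for every realization of the chosen rate, the single pointwise inequality
$g_k(V(\tau))\le\left(\frac{1}{2}+\varepsilon\right)\mathbbm{1}\{V(\tau)\in\mathcal{I}_k\}+\left(\frac{1}{2}-\varepsilon\right)\mathbbm{1}\{V(\tau)\notin\mathcal{I}_k\}$.
Taking expectation under $\mathbb{P}^k$, summing over $\tau\in[1:t]$, and using $\sum_{\tau=1}^t\mathbb{P}^k\{V(\tau)\in\mathcal{I}_k\}=\mathbb{E}^k\{N_k(t)\}$ together with $\sum_{\tau=1}^t\mathbb{P}^k\{V(\tau)\notin\mathcal{I}_k\}=t-\mathbb{E}^k\{N_k(t)\}$, I obtain $\sum_{\tau=1}^t\mathbb{E}^k\{g_k(V(\tau))\}\le\left(\frac{1}{2}+\varepsilon\right)\mathbb{E}^k\{N_k(t)\}+\left(\frac{1}{2}-\varepsilon\right)\!\left(t-\mathbb{E}^k\{N_k(t)\}\right)$. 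Collecting terms gives $\left(\frac{1}{2}-\varepsilon\right)t+2\varepsilon\,\mathbb{E}^k\{N_k(t)\}$, which is the stated bound.

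There is essentially no obstacle: the only point requiring care is the boundary behaviour of $\mathcal{I}_k=(x_k,x_{k+1}]$ — in particular that the maximum of $g_k$ restricted to $\mathcal{I}_k$ is attained at $x_{k+1}$ (which lies in $\mathcal{I}_k$ by the half-open convention) and equals $\frac{1}{2}+\varepsilon$, and that $\mathcal{I}_k$ sits strictly above $\frac{1}{2}-\varepsilon$ so the piecewise definition of $g_k$ is applied consistently. Both of these are already recorded in Claim~3 and Lemma~\ref{lemma:ref_def_lemma}-1. No concentration or information-theoretic input is needed here; the argument is a deterministic bound on $g_k$ plus linearity of expectation.
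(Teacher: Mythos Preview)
Your proposal is correct and follows essentially the same approach as the paper: both arguments split according to whether $V(\tau)\in\mathcal{I}_k$, bound $g_k$ by $\tfrac{1}{2}+\varepsilon$ on $\mathcal{I}_k$ and by $\tfrac{1}{2}-\varepsilon$ off $\mathcal{I}_k$, then sum and collect terms. The only cosmetic difference is that the paper writes the split via conditional expectations while you write it as a pointwise indicator inequality before taking expectation; you are also more explicit in justifying the two bounds on $g_k$, which the paper takes for granted from Claim~3 and the definition \eqref{eqn:g_k_def}.
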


Fix $t \in [1:T]$. Consider $\tau \in [2:t]$. From the queueing equation,
\begin{align}
    Q(\tau) \geq Q(\tau-1) + A(\tau-1) - V(\tau - 1)B(\tau-1). \nonumber
\end{align}
Consider $k \in [1:K]$. Taking expectations in Environment $k$ and summing the above for $\tau \in [2:t]$, we have
\begin{align}
   & \mathbb{E}^k\{Q(t) \} \nonumber\\&\geq \frac{t-1}{2} - \sum_{\tau = 1}^{t-1}\mathbb{E}^k\{g_k(V(\tau))\}  \geq_{(a)} \frac{t-1}{2} - \left(\frac{1}{2}-\varepsilon \right)(t-1) - 2\varepsilon \mathbb{E}^k\{N_k(t-1)\} \nonumber\\&= \varepsilon (t-1)-2\varepsilon \mathbb{E}^k\{N_k(t-1)\}\geq_{(b)} \varepsilon (t-1)-2\varepsilon \mathbb{E}^0\{N_k(t-1) \}  - 8\sqrt{7}\varepsilon^2 (t-1)  \sqrt{\mathbb{E}^0\{N_k(T) \}} \nonumber
\end{align}
where for (a) we have used Lemma~\ref{lemma:service_bound}, for (b) we have used Lemma~\ref{lemma:l_b_lem1}. Now, we sum the above over $[1:K]$ to get,
\begin{align}
   &\sum_{k \in [1:K]} \mathbb{E}^k\{Q(t) \} \geq  \varepsilon (t-1)K-2\varepsilon  \sum_{k \in [1:K]}\mathbb{E}^0\{N_k(t-1) \}  - 8\sqrt{7}\varepsilon^2 (t-1)  \sum_{k \in [1:K]} \sqrt{\mathbb{E}^0\{N_k(t-1) \}} \nonumber\\&\geq_{(a)}  \varepsilon (t-1)K-2\varepsilon (t-1) - 8\sqrt{7}\varepsilon^2 (t-1)  \sum_{k \in [1:K]} \sqrt{\mathbb{E}^0\{N_k(T) \}} \nonumber\\&\geq_{(b)}  \varepsilon (t-1)K-2\varepsilon (t-1) - 8\sqrt{7}\varepsilon^2 (t-1)   \sqrt{K\sum_{k \in [1:K]}\mathbb{E}^0\{N_k(T) \}} \nonumber\\&\geq  \varepsilon (t-1)K-2\varepsilon (t-1) - 8\sqrt{7}\varepsilon^2 (t-1)   \sqrt{KT} \nonumber
\end{align}
where (a) follows since $\sum_{k \in [1:K]} N_k(t-1) \leq t-1$, (b)  follows from Cauchy-Schwarz inequality, and the last inequality follows since  $\sum_{k \in [1:K]} N_k(T) \leq T$. Summing the above for $t\in [1:T]$, 
\begin{align}
  &\sum_{t = 1}^T \sum_{k \in [1:K]} \mathbb{E}^k\{Q(t) \} \nonumber\\&\geq   \frac{\varepsilon (T-1)TK}{2}-\varepsilon T(T-1) - 4\sqrt{7}\varepsilon^2 (T-1)   \sqrt{KT^3}  \nonumber\\&\geq_{(a)}  \frac{\varepsilon (T-1)TK}{2}-\frac{K\varepsilon T(T-1)}{5} - 4\sqrt{7}\varepsilon^2 T  \sqrt{KT^3}\geq  \frac{3\varepsilon KT(T-1) }{10}- 4\sqrt{7}\varepsilon^2 T  \sqrt{KT^3}. \nonumber
\end{align}
where (a) follows by $K \geq 5$ (Lemma~\ref{lemma:ref_def_lemma}-2).
Hence, notice that
\begin{align}\label{eqn:one_bef_last}
\frac{1}{K} \sum_{k \in [1:K]} \frac{1}{T} \sum_{t = 1}^T  \mathbb{E}^k\{Q(t) \} & \geq  \frac{3\varepsilon (T-1)}{10} - 4\varepsilon^2\sqrt{\frac{7T^3}{K}}  \geq_{(a)} \frac{3\varepsilon (T-1)}{10} - 24\sqrt{7}\varepsilon^{2.5}T^{1.5} 
\end{align}
where (a) follows since $K \geq 1/(36\varepsilon)$ (Lemma~\ref{lemma:ref_def_lemma}-2). 

Now we set $T$. In particular, let
\begin{align}\label{eqn:keyyy}
    T = \left\lceil \left(\frac{1}{160\sqrt{7}\varepsilon^{1.5}} \right)^2 + 1\right\rceil
\end{align}
Notice that
\begin{align}
    T \leq  \left(\frac{1}{160\sqrt{7}\varepsilon^{1.5}} \right)^2 + 2\leq  \left(1+\frac{1}{8}\right)\left(\frac{1}{160\sqrt{7}\varepsilon^{1.5}} \right)^2 \nonumber
\end{align}
where the second inequality follows since
\begin{align}
  \frac{1}{8}\left(\frac{1}{160\sqrt{7}\varepsilon^{1.5}} \right)^2 \geq \frac{1}{8}\left(\frac{144^{1.5}}{160\sqrt{7}} \right)^2 \geq 2 \nonumber
\end{align}
Hence,
\begin{align}
     24\sqrt{7}\varepsilon^{2.5}T^{1.5}  \leq 24\sqrt{7}\varepsilon^{2.5}\left( 1+\frac{1}{8}\right)^{1.5}\left(\frac{1}{160\sqrt{7}\varepsilon^{1.5}} \right)^3 = \frac{24}{160}\left( 1+\frac{1}{8}\right)^{1.5}\left(\frac{1}{160\sqrt{7} \varepsilon}\right)^2 \nonumber
\end{align}
Similarly, from \eqref{eqn:keyyy}, we have
\begin{align}
   \frac{3\varepsilon (T-1)}{10} \geq \frac{3\varepsilon}{10}\left(\frac{1}{160\sqrt{7} \varepsilon^{1.5}}\right)^2 = \frac{3}{10}\left(\frac{1}{160\sqrt{7} \varepsilon}\right)^2 \nonumber
\end{align}
Using the above in \eqref{eqn:one_bef_last}, we have
\begin{align}
\frac{1}{K} \sum_{k \in [1:K]} \frac{1}{T} \sum_{t = 1}^T  \mathbb{E}^k\{Q(t) \} & \geq\left( \frac{3}{10} -  \frac{24}{160}\left(1+ \frac{1}{8}\right)^{1.5}\right)\left(\frac{1}{160\sqrt{7} \varepsilon}\right)^2 \geq \frac{6 \times 10^{-7}}{\varepsilon^2} \nonumber
\end{align}
Hence, for at least one of the environments $k^{'}$ in $[1:K]$, we have
\begin{align}
 \frac{1}{T} \sum_{t = 1}^T  \mathbb{E}^{k^{'}}\{Q(t) \} &  \geq \frac{6 \times 10^{-7}}{\varepsilon^2} \nonumber
\end{align}
as desired.

\section{Proofs and Lemmas for Section~\ref{sec:known}}

\subsection{Proof of Lemma~\ref{thm:log(H)}: Stage I}
\label{sec:stageI}
Consider the auxiliary queue \(\widetilde{Q}(t)\) with \(\widetilde{Q}(1)=0\) evolving as
\[
\widetilde{Q}(t+1)
= \bigl[\,\widetilde{Q}(t) + A(t) - \widetilde{S}_{K(t)}(t)\,\bigr]_+.
\]
We define the auxiliary service as
\[
\widetilde{S}_k(t) \;\triangleq\; S_k(t) - \mu_k + \widetilde{\mu},
\]
where \(\widetilde{\mu}\in[\lambda,\mu^*]\) will be specified later. Note that \(\mathbb{E}[\widetilde{S}_k(t)] = \widetilde{\mu}\) for all \(k\), and \(\widetilde{S}_k(t)\in[-1,2]\) in general.

\begin{definition}
Define for each arm \(k\in\mathcal{K}\) the sub-optimality gap
\[
\Delta_k = \mu^* - \mu_k,
\]
and
\[
\tilde{\Delta} = \mu^* - \tilde{\mu}.
\]
\end{definition}

\begin{lemma}\label{lem:Q2sum}
For any \(t\ge2\),
\[
Q(t) - \widetilde{Q}(t)
\;\le\;
\sum_{\tau=1}^{t-1} \bigl[\widetilde{\mu} - \mu_{K(\tau)}\bigr]_+.
\]
\end{lemma}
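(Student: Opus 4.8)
The plan is to compare the two Lindley recursions driven by the \emph{same} arrival sequence $\{A(t)\}$ and the \emph{same} arm sequence $\{K(t)\}$, exploiting that at each step the two service values differ by exactly $\widetilde{S}_{K(\tau)}(\tau) - S_{K(\tau)}(\tau) = \widetilde{\mu} - \mu_{K(\tau)}$, a quantity that is deterministic once $K(\tau)$ is fixed. The only behavioural difference between the queues is this per-slot correction, so the gap $Q(t) - \widetilde{Q}(t)$ should be controlled by the accumulated positive part of these corrections.

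Concretely, I would first unroll $Q(t+1) = [Q(t) + A(t) - S_{K(t)}(t)]_+$ with $Q(1)=0$ into its pathwise running-maximum form
$$
Q(t) \;=\; \max_{1\le s\le t}\ \sum_{\tau=s}^{t-1}\bigl(A(\tau) - S_{K(\tau)}(\tau)\bigr),
$$
with the convention that the $s=t$ summand is the empty sum $0$; the identical representation holds for $\widetilde{Q}(t)$ with $S$ replaced by $\widetilde{S}$. For any fixed $s$, splitting $A(\tau) - S_{K(\tau)}(\tau) = \bigl(A(\tau) - \widetilde{S}_{K(\tau)}(\tau)\bigr) + \bigl(\widetilde{\mu} - \mu_{K(\tau)}\bigr)$ and bounding the correction telescope by $\sum_{\tau=s}^{t-1}(\widetilde{\mu} - \mu_{K(\tau)}) \le \sum_{\tau=1}^{t-1}[\widetilde{\mu} - \mu_{K(\tau)}]_+$ gives
$$
\sum_{\tau=s}^{t-1}\bigl(A(\tau) - S_{K(\tau)}(\tau)\bigr) \;\le\; \widetilde{Q}(t) + \sum_{\tau=1}^{t-1}[\widetilde{\mu} - \mu_{K(\tau)}]_+
$$
for every $s$, and maximizing over $s$ yields the claim.

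An alternative route, which avoids invoking the running-max representation, is a direct induction on $t$. The base case is immediate since $Q(1)=\widetilde{Q}(1)=0$ and the right-hand side is nonnegative. For the inductive step, write $S_{K(t)}(t) = \widetilde{S}_{K(t)}(t) + \mu_{K(t)} - \widetilde{\mu}$, so that $Q(t+1) = \bigl[Q(t) + A(t) - \widetilde{S}_{K(t)}(t) + (\widetilde{\mu} - \mu_{K(t)})\bigr]_+$, and combine the inductive hypothesis with two elementary facts about $x\mapsto[x]_+$: it is nondecreasing, and $[x+d]_+ \le [x]_+ + d$ whenever $d\ge 0$. If $\widetilde{\mu} - \mu_{K(t)}\ge 0$ one absorbs it (together with the accumulated term from the hypothesis) using $[x+d]_+\le[x]_+ + d$; if $\widetilde{\mu} - \mu_{K(t)} < 0$ one simply drops it by monotonicity, and then $[\widetilde{\mu}-\mu_{K(t)}]_+ = 0$ so nothing is lost. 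Either way one arrives at $Q(t+1) \le \widetilde{Q}(t+1) + \sum_{\tau=1}^{t}[\widetilde{\mu}-\mu_{K(\tau)}]_+$.

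There is no deep obstacle here; the single point requiring care is precisely the sign of $\widetilde{\mu} - \mu_{K(t)}$ — equivalently, the fact that $\widetilde{S}_k(t)$ need not lie in $[0,1]$ — and making sure each monotonicity/subadditivity inequality for $[\,\cdot\,]_+$ is used in the correct direction so that the positive-part truncation on the right-hand side is never violated.
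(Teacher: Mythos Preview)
Your proposal is correct. Your first route—the Lindley running-max representation $Q(t)=\max_{1\le s\le t}\sum_{\tau=s}^{t-1}(A(\tau)-S_{K(\tau)}(\tau))$—is essentially the paper's argument in explicit form: the paper defines $\mathrm{Emp}(t)=\max\{\tau\le t:Q(\tau)=0\}$, notes that $Q$ evolves without the $[\cdot]_+$ truncation on $\{\mathrm{Emp}(t),\dots,t-1\}$, and telescopes the difference $Q(\tau+1)-\widetilde Q(\tau+1)\le (Q(\tau)-\widetilde Q(\tau))+(\widetilde\mu-\mu_{K(\tau)})$ from $\mathrm{Emp}(t)$ to $t$. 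The running-max formula simply packages this last-empty-time decomposition, with the optimizing $s$ playing the role of $\mathrm{Emp}(t)$.

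Your second route—direct induction using monotonicity of $[\cdot]_+$ and $[x+d]_+\le[x]_+ +d$ for $d\ge0$—is a genuinely different and more elementary argument. It avoids both the running-max formula and the last-empty-time construct, and handles the sign of $\widetilde\mu-\mu_{K(t)}$ cleanly via the two cases you describe. The trade-off is that the paper's argument (and your first route) makes transparent \emph{why} the positive-part sum suffices—namely, one only accumulates drift since the last regeneration—whereas the induction is shorter but hides this structural insight.
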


\begin{proof}[Proof of Lemma~\ref{lem:Q2sum}]
For any \(t \in \{1,2,\ldots\}\), define
\[
\mathrm{Emp}(t) = \max\{\tau \le t : Q(\tau) = 0\}.
\]
Since \(Q(1)=0\), the definition of \(\mathrm{Emp}(t)\) is valid for all \(t\ge1\). We consider two cases.

\textbf{Case 1:} If \(Q(t)=0\), then \(\widetilde{Q}(t)\ge 0\) implies
\[
Q(t)-\widetilde{Q}(t)\le 0 \le \sum_{\tau=1}^{t-1} \left[\widetilde{\mu} - \mu_{K(\tau)}\right]_+.
\]

\textbf{Case 2:} If \(Q(t)>0\), then by definition \(\mathrm{Emp}(t) < t\) and
\[
Q(\tau+1) > 0 \quad \text{for every } \tau\in\{\mathrm{Emp}(t),\ldots,t-1\}.
\]
Under this condition, the queue evolution
simplifies to
\[
Q(\tau+1) = Q(\tau) - S_{K(\tau)}(\tau) + A(\tau).
\]
Subtracting the auxiliary queue evolution gives
\begin{align*}
Q(\tau+1)-\widetilde{Q}(\tau+1)
&=\left(Q(\tau)+A(\tau)-S_{K(\tau)}(\tau)\right) - \left[\widetilde{Q}(\tau)+A(\tau)-\widetilde{S}_{K(\tau)}(\tau)\right]_+\\[1mm]
&\le Q(\tau)+A(\tau)-S_{K(\tau)}(\tau) - \left(\widetilde{Q}(\tau)+A(\tau)-\widetilde{S}_{K(\tau)}(\tau)\right)\\[1mm]
&=\left(Q(\tau)-\widetilde{Q}(\tau)\right) + \left(\widetilde{S}_{K(\tau)}(\tau)-S_{K(\tau)}(\tau)\right)\\[1mm]
&=\left(Q(\tau)-\widetilde{Q}(\tau)\right) + \left(\widetilde{\mu}-\mu_{K(\tau)}\right).
\end{align*}
Summing from \(\tau=\mathrm{Emp}(t)\) to \(\tau=t-1\) yields
\[
\left(Q(t)-Q(\mathrm{Emp}(t))\right)-\left(\widetilde{Q}(t)-\widetilde{Q}(\mathrm{Emp}(t))\right)
\le \sum_{\tau=\mathrm{Emp}(t)}^{t-1}\left(\widetilde{\mu}-\mu_{K(\tau)}\right).
\]
Since \(Q(\mathrm{Emp}(t))=0\) and \(\widetilde{Q}(\mathrm{Emp}(t))\ge0\), we conclude
\[
Q(t)-\widetilde{Q}(t)
\le \sum_{\tau=\mathrm{Emp}(t)}^{t-1}\left[\widetilde{\mu}-\mu_{K(\tau)}\right]_+
\le \sum_{\tau=1}^{t-1}\left[\widetilde{\mu}-\mu_{K(\tau)}\right]_+.
\]
\end{proof}
Define $N_k(t)$ as the number of times arm $k$ is chosen up to (and including) time step $t$. In particular,
\begin{align}\label{eqn:n_k_t_known}
    N_k(t) = \sum_{\tau = 1}^t \mathbbm{1}\{K(\tau) = k\}.
\end{align}

\begin{lemma}\label{lem:sum2Delta}
For any  \(t\ge 1\), we have
\[
\sum_{\tau=1}^t \left[ \tilde{\mu} - \mu_{K(\tau)}  \right]_+ 
\le
\sum_{\substack{k\in\mathcal{K}:\\ \Delta_k \ge \tilde{\Delta}}} \Delta_k\, N_k(t).
\]
\end{lemma}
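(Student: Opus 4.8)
The plan is to recognize that, after substituting the definitions of $\Delta_k$ and $\tilde\Delta$, the positive-part expression in the statement is just a relabeling of the per-arm suboptimality gaps, so the claim reduces to a purely deterministic bookkeeping argument over the slots $\tau=1,\dots,t$ — no probability, no queue dynamics, no confidence bounds are needed.

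First I would rewrite the summand. Since $\Delta_k = \mu^* - \mu_k$ and $\tilde\Delta = \mu^* - \tilde\mu$, the identity $\tilde\mu - \mu_k = \Delta_k - \tilde\Delta$ holds for every $k\in\mathcal{K}$, so $\bigl[\tilde\mu - \mu_{K(\tau)}\bigr]_+ = \bigl[\Delta_{K(\tau)} - \tilde\Delta\bigr]_+$. Because $\tilde\mu\in[\lambda,\mu^*]$ we have $\tilde\Delta \ge 0$, which licenses the elementary bound $\bigl[\Delta_k - \tilde\Delta\bigr]_+ \le \Delta_k\,\mathbbm{1}\{\Delta_k \ge \tilde\Delta\}$ valid for every arm $k$: both sides vanish when $\Delta_k < \tilde\Delta$, and when $\Delta_k \ge \tilde\Delta$ the left side equals $\Delta_k - \tilde\Delta \le \Delta_k$.

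Next I would sum this bound over $\tau = 1,\dots,t$ and interchange the order of summation, grouping slots by the arm that was played:
\[
\sum_{\tau=1}^t \bigl[\tilde\mu - \mu_{K(\tau)}\bigr]_+
\;\le\; \sum_{\tau=1}^t \Delta_{K(\tau)}\,\mathbbm{1}\{\Delta_{K(\tau)} \ge \tilde\Delta\}
\;=\; \sum_{k\in\mathcal{K}} \Delta_k\,\mathbbm{1}\{\Delta_k \ge \tilde\Delta\}\sum_{\tau=1}^t \mathbbm{1}\{K(\tau)=k\}.
\]
Recognizing the inner sum as $N_k(t)$ from \eqref{eqn:n_k_t_known} and discarding the arms with $\Delta_k < \tilde\Delta$ from the outer sum gives exactly the claimed inequality.

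I do not expect any genuine obstacle here: the statement is a deterministic, pathwise truncation-and-reindexing argument. The only point requiring a moment's care is the nonnegativity of $\tilde\Delta$ (which follows from $\tilde\mu \le \mu^*$, i.e. from the specified range $\tilde\mu\in[\lambda,\mu^*]$), since that is precisely what allows replacing $\Delta_k - \tilde\Delta$ by $\Delta_k$; everything else is routine.
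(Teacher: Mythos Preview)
Your proposal is correct and follows essentially the same route as the paper: both arguments rewrite the sum over slots as a sum over arms, observe that $[\tilde\mu-\mu_k]_+$ is nonzero only when $\Delta_k\ge\tilde\Delta$ (equivalently, your $[\Delta_k-\tilde\Delta]_+$ formulation), and then bound $\tilde\mu-\mu_k\le\mu^*-\mu_k=\Delta_k$ using $\tilde\mu\le\mu^*$. Your explicit remark that $\tilde\Delta\ge0$ is the sole substantive hypothesis is exactly the point the paper uses implicitly.
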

\begin{proof}[Proof of Lemma~\ref{lem:sum2Delta}]
Decompose the sum over time into a sum over arms:
\begin{align*}
\sum_{\tau=1}^t \left[ \tilde{\mu} - \mu_{K(\tau)}  \right]_+ 
&=\sum_{\tau=1}^t \sum_{k\in\mathcal{K}} \left[\tilde{\mu} - \mu_k \right]_+ \, \mathbbm{1}\{K(\tau)=k\} \\
&=\sum_{k\in\mathcal{K}} \left[ \tilde{\mu} - \mu_k \right]_+ \, N_k(t).
\end{align*}
If \(\widetilde{\mu}<\mu_k\), the term vanishes.  Otherwise \([\widetilde{\mu}-\mu_k]_+ = \widetilde{\mu}-\mu_k \le \mu^*-\mu_k=\Delta_k\), and \(\widetilde{\mu}\ge\mu_k\) implies \(\Delta_k\ge\tilde\Delta\).  The claimed bound follows.
\end{proof}

\begin{lemma}\label{lem:bounding-Q-tilde}
For any \(H \ge 1\),
\[
\frac{1}{H} \sum_{t=1}^{H} \EE\bigl[\widetilde{Q}(t)\bigr]
\;\le\;
\frac{2}{\vedisc - \tilde{\Delta}}.
\]
\end{lemma}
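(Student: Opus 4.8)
The plan is to run a standard quadratic Lyapunov drift argument on the auxiliary queue $\widetilde{Q}(t)$, exploiting the defining property that $\EE[\widetilde{S}_k(t)] = \widetilde{\mu}$ for \emph{every} arm $k\in\mathcal{K}$, so that the one‑step drift is insensitive to which arm the bandit policy happens to select.

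First I would square the recursion $\widetilde{Q}(t+1) = \bigl[\widetilde{Q}(t) + A(t) - \widetilde{S}_{K(t)}(t)\bigr]_+$. Using $[x]_+^2 \le x^2$ and expanding gives
\[
\widetilde{Q}(t+1)^2 \;\le\; \widetilde{Q}(t)^2 + \bigl(A(t) - \widetilde{S}_{K(t)}(t)\bigr)^2 + 2\,\widetilde{Q}(t)\bigl(A(t) - \widetilde{S}_{K(t)}(t)\bigr).
\]
Since $A(t)\in[0,1]$ and $\widetilde{S}_k(t) = S_k(t) - \mu_k + \widetilde{\mu}\in[-1,2]$ for all $k$ (because $S_k(t),\mu_k\in[0,1]$ and $\widetilde{\mu}\in[\lambda,\mu^*]\subseteq[0,1]$), we have $A(t) - \widetilde{S}_{K(t)}(t)\in[-2,2]$, so the quadratic term is at most $4$ deterministically.

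Next I would let $\mathcal{F}_{t-1}$ denote the history (arrivals, service realizations, and decisions) up to and including slot $t-1$, so that $\widetilde{Q}(t)$ and $K(t)$ are $\mathcal{F}_{t-1}$‑measurable, while the pair $\bigl(A(t),\{\widetilde{S}_k(t)\}_{k\in\mathcal{K}}\bigr)$ is independent of $\mathcal{F}_{t-1}$ with $\EE[A(t)]=\lambda$ and $\EE[\widetilde{S}_k(t)]=\widetilde{\mu}$ for every $k$. Taking $\EE[\,\cdot\mid\mathcal{F}_{t-1}]$ and then full expectation, the cross term contributes $2\,\EE[\widetilde{Q}(t)](\lambda-\widetilde{\mu})$, yielding
\[
\EE\bigl[\widetilde{Q}(t+1)^2\bigr] \;\le\; \EE\bigl[\widetilde{Q}(t)^2\bigr] + 4 - 2(\widetilde{\mu}-\lambda)\,\EE\bigl[\widetilde{Q}(t)\bigr],
\]
where $\widetilde{\mu}-\lambda = (\mu^*-\lambda) - (\mu^*-\widetilde{\mu}) = \vedisc - \tilde{\Delta}$. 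Finally I would telescope this over $t=1,\dots,H$, discard the nonnegative term $\EE[\widetilde{Q}(H+1)^2]$, use $\widetilde{Q}(1)=0$, and rearrange to obtain $2(\vedisc-\tilde{\Delta})\sum_{t=1}^H\EE[\widetilde{Q}(t)]\le 4H$, i.e. the claimed bound.

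There is no genuine obstacle here; the only step that warrants care is the measurability/independence bookkeeping in the conditional‑expectation step — one must check that $K(t)$ is chosen using only information in $\mathcal{F}_{t-1}$, so that the conditional mean of the cross term is exactly $(\lambda-\widetilde{\mu})\widetilde{Q}(t)$ — together with the uniform bound $4$ on the per‑slot increment, which is where $\mu_k,\widetilde{\mu}\in[0,1]$ is used. Everything else is routine algebra.
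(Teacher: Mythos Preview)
Your proposal is correct and follows essentially the same quadratic Lyapunov drift argument as the paper: square the recursion, bound the increment by $4$, exploit $\EE[\widetilde{S}_k(t)]=\widetilde{\mu}$ for all $k$ to get the drift $-2(\vedisc-\tilde{\Delta})\EE[\widetilde{Q}(t)]$, telescope, and rearrange. Your explicit conditioning on $\mathcal{F}_{t-1}$ is in fact a slightly cleaner justification of the factorization step than the paper's direct appeal to independence.
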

\begin{proof}[Proof of Lemma~\ref{lem:bounding-Q-tilde}]
Using the standard Lyapunov drift bound for the auxiliary queue,
\begin{align*}
\EE\bigl[\widetilde{Q}(t+1)^2 - \widetilde{Q}(t)^2\bigr]
&\le
\EE\Bigl[(A(t) - \widetilde{S}_{K(t)}(t))^2\Bigr]
+2\,\EE\Bigl[(A(t)-\widetilde{S}_{K(t)}(t))\,\widetilde{Q}(t)\Bigr]\\
&\le
4 + 2\,\EE\bigl[A(t)-\widetilde{S}_{K(t)}(t)\bigr]\,\EE\bigl[\widetilde{Q}(t)\bigr]\\
&=
4 + 2\bigl(\lambda - \widetilde{\mu}\bigr)\,\EE\bigl[\widetilde{Q}(t)\bigr]\\
&=
4 - 2(\vedisc - \tilde{\Delta})\,\EE\bigl[\widetilde{Q}(t)\bigr],
\end{align*}
where we used \(\EE[(A(t)-\widetilde{S}_{K(t)}(t))^2]\le4\) and \(\lambda-\widetilde{\mu}=-\bigl(\vedisc-\tilde\Delta\bigr)\).

Summing over \(t=1,\dots,H\) and noting \(\widetilde{Q}(1)=0\) gives
\[
\EE\bigl[\widetilde{Q}(H+1)^2\bigr]
\;\le\;
4n - 2(\vedisc - \tilde{\Delta})\sum_{t=1}^{H}\EE\bigl[\widetilde{Q}(t)\bigr].
\]
Since the left side is nonnegative,
\[
2(\vedisc - \tilde{\Delta})\sum_{t=1}^{H}\EE\bigl[\widetilde{Q}(t)\bigr]
\;\le\;
4n,
\]
and dividing by \(2n(\vedisc - \tilde{\Delta})\) yields the stated bound.
\end{proof}

\begin{lemma}\label{Lemma:initialbound-main-lemma}
For any integer \(H\ge2\),
\[
\frac{1}{H}\sum_{t=1}^H \mathbb{E}\bigl[Q(t)\bigr]
\;\le\;
\frac{2}{\vedisc - \tilde{\Delta}}
\;+\;
\frac{1}{H}
\sum_{\substack{k\in\mathcal{K}\\\Delta_k \ge \tilde{\Delta}}}
\Delta_k
\sum_{t=1}^{H-1}\mathbb{E}\bigl[N_k(t)\bigr].
\]
\end{lemma}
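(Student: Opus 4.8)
The plan is to combine the three preceding lemmas essentially by linearity. First I would write, for each $t\ge 2$,
\[
Q(t) \;=\; \widetilde{Q}(t) \;+\; \bigl(Q(t)-\widetilde{Q}(t)\bigr)
\;\le\;
\widetilde{Q}(t) \;+\; \sum_{\tau=1}^{t-1}\bigl[\widetilde{\mu}-\mu_{K(\tau)}\bigr]_+,
\]
which is exactly Lemma~\ref{lem:Q2sum}; for $t=1$ the bound is trivial since $Q(1)=\widetilde{Q}(1)=0$ and the empty sum vanishes. Then I would apply Lemma~\ref{lem:sum2Delta} with $t-1$ in place of $t$ to get $\sum_{\tau=1}^{t-1}[\widetilde{\mu}-\mu_{K(\tau)}]_+\le \sum_{k:\Delta_k\ge\tilde\Delta}\Delta_k\,N_k(t-1)$, so that $Q(t)\le \widetilde{Q}(t)+\sum_{k:\Delta_k\ge\tilde\Delta}\Delta_k N_k(t-1)$ holds for every $t\in[1:H]$ (with the convention $N_k(0)=0$).

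Next I would take expectations and sum over $t=1,\dots,H$. The auxiliary-queue term is controlled directly by Lemma~\ref{lem:bounding-Q-tilde}:
\[
\sum_{t=1}^H \mathbb{E}[\widetilde{Q}(t)] \;\le\; \frac{2H}{\vedisc-\tilde\Delta}.
\]
For the remaining term, I would swap the order of summation and reindex:
\[
\sum_{t=1}^H \sum_{\substack{k\in\mathcal{K}\\\Delta_k\ge\tilde\Delta}}\Delta_k\,\mathbb{E}[N_k(t-1)]
\;=\;
\sum_{\substack{k\in\mathcal{K}\\\Delta_k\ge\tilde\Delta}}\Delta_k \sum_{t=1}^{H-1}\mathbb{E}[N_k(t)],
\]
using $N_k(0)=0$ to drop the $t=1$ term. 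Putting these together and dividing by $H$ yields the claimed inequality.

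I do not expect a genuine obstacle here: the statement is a bookkeeping consequence of Lemmas~\ref{lem:Q2sum}, \ref{lem:sum2Delta}, and \ref{lem:bounding-Q-tilde}. The only points requiring a little care are (i) verifying the $t=1$ boundary case so that the bound $Q(t)\le\widetilde{Q}(t)+\sum_{k}\Delta_k N_k(t-1)$ is uniform over $t\in[1:H]$, and (ii) the index shift $\sum_{t=1}^H N_k(t-1)=\sum_{t=1}^{H-1}N_k(t)$, which relies on $N_k(0)=0$. Both are immediate, and no new estimates are needed.
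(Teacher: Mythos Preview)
Your proposal is correct and matches the paper's proof essentially step for step: both chain Lemma~\ref{lem:Q2sum} with Lemma~\ref{lem:sum2Delta} to bound $Q(t)-\widetilde{Q}(t)$ by $\sum_{k:\Delta_k\ge\tilde\Delta}\Delta_k N_k(t-1)$, sum over $t$, reindex using $N_k(0)=0$, and then apply Lemma~\ref{lem:bounding-Q-tilde} for the auxiliary-queue contribution. Your explicit treatment of the $t=1$ boundary case and the index shift is exactly the bookkeeping the paper carries out.
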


\begin{proof}[Proof of Lemma~\ref{Lemma:initialbound-main-lemma}]
By Lemmas~\ref{lem:Q2sum} and \ref{lem:sum2Delta}, for each \(t\ge2\),
\[
Q(t) - \widetilde{Q}(t)
\;\le\;
\sum_{\tau=1}^{t-1}\bigl[\widetilde{\mu}-\mu_{K(\tau)}\bigr]_+
\;\le\;
\sum_{\substack{k\in\mathcal{K}\\\Delta_k \ge \tilde{\Delta}}}
\Delta_k\,N_k(t-1).
\]
Summing over \(t=2,\dots,H\) (and using \(Q(1)=\widetilde{Q}(1)=0\)) gives
\[
\sum_{t=1}^H\bigl[Q(t)-\widetilde{Q}(t)\bigr]
\;\le\;
\sum_{\substack{k\in\mathcal{K}\\\Delta_k \ge \tilde{\Delta}}}
\Delta_k
\sum_{t=2}^H N_k(t-1)
=
\sum_{\substack{k\in\mathcal{K}\\\Delta_k \ge \tilde{\Delta}}}
\Delta_k
\sum_{t=1}^{H-1} N_k(t).
\]
Dividing by \(H\) and taking expectations yields
\[
\frac{1}{H}\sum_{t=1}^H \mathbb{E}\bigl[Q(t)-\widetilde{Q}(t)\bigr]
\;\le\;
\frac{1}{H}
\sum_{\substack{k\in\mathcal{K}\\\Delta_k \ge \tilde{\Delta}}}
\Delta_k
\sum_{t=1}^{H-1}\mathbb{E}\bigl[N_k(t)\bigr].
\]
Combining this with Lemma~\ref{lem:bounding-Q-tilde} gives the desired result:
\[
\frac{1}{H}\sum_{t=1}^H \mathbb{E}\bigl[Q(t)\bigr]
\;\le\;
\frac{2}{\vedisc - \tilde{\Delta}}
\;+\;
\frac{1}{H}
\sum_{\substack{k\in\mathcal{K}\\\Delta_k \ge \tilde{\Delta}}}
\Delta_k
\sum_{t=1}^{H-1}\mathbb{E}\bigl[N_k(t)\bigr].
\]
\end{proof}

Lemma~\ref{Lemma:initialbound-main-lemma} applies to any stochastic bandit algorithm.  Here, similar to \cite{Kleinberg2003} we specialize to a UCB1  from \cite{Auer2002FinitetimeAO}.  Define
At each slot \(t\ge{d}+1\), the controller selects
\[
K(t)
\;=\;
\arg\max_{k\in\mathcal{K}}
\Bigl\{
\bar\mu_k(t-1)
\;+\;
\sqrt{\frac{2\log t}{N_k(t-1)}}
\Bigr\},
\]
where \(\bar\mu_k(t-1)\) is the empirical mean reward of arm \(k\) up to (and including)
time step \(t-1\).

\begin{algorithm}[ht]
\caption{UCB1}
\label{alg:ucb-queue}
\SetKwInOut{Input}{Input}
\Input{Number of arms ${d}$}
\For{$t \gets 1$ \KwTo ${d}$}{
  Pull arm $t$: $K(t) \gets t$\;
}
\For{$t \gets {d}+1$ \KwTo $\infty$}{
  \For{$k \gets 1$ \KwTo ${d}$}{
    $U_k(t-1) \gets \bar{\mu}_k(t-1) + \sqrt{\frac{2\log t}{N_k(t-1)}}$\;
  }
  $K(t) \gets \arg\max_{k\in\mathcal{K}} U_k(t-1)$\;
}
\end{algorithm}

\begin{lemma}
\label{lem:UCB1}[Lemma~1 from \cite{10.1007/978-3-540-72927-3_33} see also Theorem~1 from \cite{Auer2002FinitetimeAO}]
For any ${d} > 1$, consider running UCB1 on a set of ${d}$ actions with arbitrary unknown reward distributions supported on $[0,1]$.  For any suboptimal arm \(k\in\mathcal{K}\) with gap \(\Delta_k=\mu^*-\mu_k\), and any \(H\ge1\), 
\[
\EE\bigl[N_k(H)\bigr] \leq 
8  \frac{\log H}{\Delta_k^2}
+ \Bigg(1 + \frac{\pi^2}{3}\Bigg).
\]
\end{lemma}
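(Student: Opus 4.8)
The plan is to reproduce the classical UCB1 analysis of \cite{Auer2002FinitetimeAO}. Fix a suboptimal arm $k$ and set the threshold $\ell \triangleq \left\lceil 8\log H / \Delta_k^2 \right\rceil$. After the initial round-robin phase (slots $1,\dots,{d}$), arm $k$ is selected at slot $t>{d}$ only if $U_k(t-1) \ge U_{k^*}(t-1)$, where $U_j(t-1) = \bar\mu_j(t-1) + \sqrt{2\log t / N_j(t-1)}$ and $k^*$ is an optimal arm. Writing
\[
N_k(H) \;\le\; \ell \;+\; \sum_{t={d}+1}^{H} \mathbbm{1}\{K(t)=k,\; N_k(t-1) \ge \ell\},
\]
it remains to bound the expected number of these ``late'' pulls of $k$.

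Next I would invoke the standard dichotomy: if $K(t)=k$ and $N_k(t-1)\ge\ell$, then at least one of the following holds at slot $t$: (i) $\bar\mu_{k^*}(t-1) \le \mu^* - \sqrt{2\log t / N_{k^*}(t-1)}$; (ii) $\bar\mu_k(t-1) \ge \mu_k + \sqrt{2\log t / N_k(t-1)}$; or (iii) $\mu^* < \mu_k + 2\sqrt{2\log t / N_k(t-1)}$. Since $t\le H$ and $N_k(t-1)\ge\ell \ge 8\log H/\Delta_k^2 \ge 8\log t/\Delta_k^2$, event (iii) would force $\Delta_k < \Delta_k$, so (iii) never occurs. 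For (i) and (ii) I would union-bound over the unknown sample counts: for each fixed value $s$ of $N_{k^*}(t-1)$ (resp.\ $s_k$ of $N_k(t-1)$), the relevant empirical mean is an average of i.i.d.\ $[0,1]$ samples, so Hoeffding's inequality gives probability at most $e^{-4\log t} = t^{-4}$ for the corresponding one-sided deviation of size $\sqrt{2\log t / s}$. Summing over $s, s_k \in \{1,\dots,t-1\}$ bounds the probability of ``(i) or (ii) at slot $t$'' by $2t^2 \cdot t^{-4} = 2t^{-2}$.

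Finally, taking expectations and summing over $t$,
\[
\EE[N_k(H)] \;\le\; \ell \;+\; \sum_{t=1}^{\infty} \frac{2}{t^2} \;=\; \ell + \frac{\pi^2}{3} \;\le\; \frac{8\log H}{\Delta_k^2} + 1 + \frac{\pi^2}{3},
\]
using $\ell \le 8\log H/\Delta_k^2 + 1$. There is no genuine obstacle here; the only mildly delicate point is decoupling the randomness of the counts $N_{k^*}(t-1),N_k(t-1)$ from the concentration of the empirical means, which is handled cleanly by the union bound over all admissible sample sizes before invoking Hoeffding. Since the statement is exactly Theorem~1 of \cite{Auer2002FinitetimeAO} (equivalently Lemma~1 of \cite{10.1007/978-3-540-72927-3_33}), one may alternatively just cite it rather than reproduce the argument.
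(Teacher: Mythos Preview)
Your proposal is correct and faithfully reproduces the classical argument of \cite{Auer2002FinitetimeAO}; the paper does not give its own proof but simply cites the result (with a remark that the proof does not require independence across arms, which your Hoeffding-plus-union-bound argument indeed does not). Your final suggestion to just cite the lemma is exactly what the paper does.
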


\begin{remark}
The model in~\cite{Auer2002FinitetimeAO} formally assumes that rewards across different actions are independent. 
This assumption does not hold in our setting. 
However, as mention in \cite{Kleinberg2003} since their proof of Lemma~\ref{lem:UCB1} does not actually rely on independence, 
we are still justified in applying the result.
\end{remark}
\begin{proof}[Proof of Lemma~\ref{thm:log(H)}]
By Lemma~\ref{lem:UCB1}, for each arm \(k\) with gap \(\Delta_k\ge\tilde\Delta\),
\begin{align*}
\Delta_k\sum_{t=1}^{H-1}\EE[N_k(t)]
&\le
\Delta_k\sum_{t=1}^{H-1}\Bigl(8  \frac{\log t}{\Delta_k^2}
+ 1 + \frac{\pi^2}{3}\Bigg)\Bigr)
=
\sum_{t=1}^{H-1}\Bigl(8  \frac{\log t}{\Delta_k}
+ (1 + \frac{\pi^2}{3})\Delta_k\Bigg)\\
&\le
\sum_{t=1}^{H-1}\Bigl((1 + \frac{\pi^2}{3}) + 8\frac{\log t}{\tilde\Delta}\Bigr)
\;\le\;
\Bigl[(H-1)(1 + \frac{\pi^2}{3}) + \tfrac{8}{\tilde\Delta}\sum_{t=1}^{H-1}\log t\Bigr]\\
&\le
\Bigl[(H-1)(1 + \frac{\pi^2}{3}) + \tfrac{8}{\tilde\Delta}(H(\log H - 1) + 1)\Bigr]\\
&=
\Bigl[(H-1)(1 + \frac{\pi^2}{3} - \frac{8}{\tilde\Delta}) + \tfrac{8}{\tilde\Delta}H\log H\Bigr]\\
&\leq
\tfrac{8}{\tilde\Delta}H\log H,
\end{align*}
where
\[
\sum_{t=1}^{H-1}\log t
\;\le\;
\int_{1}^{H}\log x\,dx
=
H\log H - H + 1
=
H(\log H - 1) + 1,
\]
we get
\[
\frac{1}{H}\sum_{\Delta_k\ge\tilde\Delta}\Delta_k\sum_{t=1}^{H-1}\EE[N_k(t)]
\;\le\;
\tfrac{8\,{d}\,\log H}{\tilde\Delta}.
\]
Substituting into Lemma~\ref{Lemma:initialbound-main-lemma} yields
\[
\frac{1}{H}\sum_{t=1}^H\EE\bigl[Q(t)\bigr]
\;\le\;
\frac{2}{\vedisc-\tilde\Delta}
\;+\;
\frac{8\,{d}\,\log H}{\tilde\Delta}.
\]
Taking the infimum over \(\tilde\Delta\in(0,\vedisc)\) gives the first claim.  Finally, setting
\[
\tilde\Delta_{\mathrm{opt}}
=
\vedisc\,
\frac{\sqrt{8\,{d}\,\log H}}{\sqrt{2} + \sqrt{8\,{d}\,\log H}}
=
\vedisc\,
\frac{2\sqrt{{d}\,\log H}}{1 + 2\sqrt{{d}\,\log H}}
\]
and plugging back yields
\[
\frac{1}{H}\sum_{t=1}^H\EE\bigl[Q(t)\bigr]
\;\le\;
\frac{2}{\vedisc}\Bigl(1+2\sqrt{\,{d}\,\log H}\Bigr)^2
\]
completing the proof.

\end{proof}
\subsection{Proof of Lemma~\ref{thm:stage2}: Stage II}
\label{sec:stageII}

\begin{lemma}\label{lem:drift-t<kappa}
For each slot \(t\) with \(1 \le t \le {d}\), the drift satisfies
\[
\frac{1}{2}\EE\bigl[Q(t+1)^2\bigr]
- \frac{1}{2}\EE\bigl[Q(t)^2\bigr]
\;\le\;
\frac{1}{2}
\;-\;
\vedisc\,\EE\bigl[Q(t)\bigr]
\;+\;
\EE\bigl[Q(t)\bigl(\mu^*-\mu_t\bigr)\bigr].
\]
\end{lemma}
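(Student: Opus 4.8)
The plan is a one‑step Lyapunov drift argument on the squared queue, essentially the same computation as in the proof of Lemma~\ref{lem:bounding-Q-tilde} but retaining the per‑arm mean $\mu_t$. Starting from the queue recursion $Q(t+1)=[Q(t)+A(t)-S_{K(t)}(t)]_+$ and discarding the projection via $[x]_+^2\le x^2$, I would expand
\[
Q(t+1)^2\le Q(t)^2+\bigl(A(t)-S_{K(t)}(t)\bigr)^2+2Q(t)\bigl(A(t)-S_{K(t)}(t)\bigr),
\]
and use $A(t),S_{K(t)}(t)\in[0,1]$ to bound $\bigl(A(t)-S_{K(t)}(t)\bigr)^2\le 1$. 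This yields the deterministic inequality $\tfrac12 Q(t+1)^2-\tfrac12 Q(t)^2\le \tfrac12+Q(t)\bigl(A(t)-S_{K(t)}(t)\bigr)$, and it remains to take expectations of the cross term.

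The one point that needs care is exactly that expectation. For $1\le t\le d$, Algorithm~\ref{alg:ucb-queue} is still in its initialization loop and sets $K(t)=t$ deterministically, independently of all observed arrivals and services. Hence $Q(t)$ is a (deterministic) function of $A(1),\dots,A(t-1)$ and $S_{1}(1),\dots,S_{t-1}(t-1)$ only, and is therefore independent of the current‑slot variables $A(t)$ and $S_t(t)$. Invoking Assumption~\textbf{A1} together with the i.i.d.\ assumption on the service vector, $\EE[A(t)]=\lambda$ and $\EE[S_t(t)]=\mu_t$, so
\[
\EE\bigl[Q(t)\bigl(A(t)-S_{K(t)}(t)\bigr)\bigr]=\EE\bigl[Q(t)\bigl(A(t)-S_{t}(t)\bigr)\bigr]=\EE[Q(t)]\,(\lambda-\mu_t).
\]

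Finally I would rewrite $\lambda-\mu_t=(\lambda-\mu^*)+(\mu^*-\mu_t)=-\vedisc+(\mu^*-\mu_t)$, using $\vedisc=\mu^*-\lambda$, and observe that $\mu^*-\mu_t$ is a constant so $\EE[Q(t)]\,(\mu^*-\mu_t)=\EE\bigl[Q(t)(\mu^*-\mu_t)\bigr]$. Combining the displays gives
\[
\tfrac12\EE\bigl[Q(t+1)^2\bigr]-\tfrac12\EE\bigl[Q(t)^2\bigr]\le \tfrac12-\vedisc\,\EE\bigl[Q(t)\bigr]+\EE\bigl[Q(t)(\mu^*-\mu_t)\bigr],
\]
which is the claim. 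There is no genuine obstacle here: the only substantive step is noting that the UCB1 initialization phase is oblivious to the observed rewards, which is precisely what licenses factoring the cross term into $\EE[Q(t)]\,(\lambda-\mu_t)$; the remainder is the routine squared‑queue drift bound.
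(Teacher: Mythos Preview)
Your proposal is correct and follows essentially the same route as the paper: both use the standard squared-queue drift inequality via $[x]_+^2\le x^2$ and $(A(t)-S_{K(t)}(t))^2\le 1$, then exploit that during UCB1 initialization $K(t)=t$ is deterministic so the cross term factors as $\EE[Q(t)](\lambda-\mu_t)$, and finally rewrite $\lambda-\mu_t=-\vedisc+(\mu^*-\mu_t)$. If anything, your treatment is slightly more explicit than the paper's about why the factorization of the cross term is justified.
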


\begin{proof}[Proof of Lemma~\ref{lem:drift-t<kappa}]
By the queue update,
\begin{align*}
Q(t+1)^2
&\le Q(t)^2 + \left(A(t)-S_{K(t)}(t)\right)^2
+ 2\,Q(t)\,\left(A(t)-S_{K(t)}(t)\right)\\
&\le Q(t)^2 + 1 + 2\,Q(t)\,\left(A(t)-S_{K(t)}(t)\right),
\end{align*}
since \(\bigl(A(t)-S_{K(t)}(t)\bigr)^2\le1\).  Taking expectations and dividing by 2 gives
\[
\frac12\EE\bigl[Q(t+1)^2\bigr]
-\frac12\EE\bigl[Q(t)^2\bigr]
\le \frac12 + \EE\bigl[Q(t)\,(\lambda-\mu_{K(t)})\bigr].
\]
For \(t\le{d}\), the algorithm sets \(K(t)=t\), and \(\lambda=\mu^*-\vedisc\), so
\[
\EE\bigl[Q(t)\,(\lambda-\mu_t)\bigr]
= \EE\bigl[Q(t)\,(\mu^*-\vedisc-\mu_t)\bigr]
= -\vedisc\,\EE[Q(t)] + \EE\bigl[Q(t)\,(\mu^*-\mu_t)\bigr].
\]
Substituting yields the claimed bound.
\end{proof}

\begin{lemma}\label{lem:drift-t>kappa}
For every slot \(t\ge{d}+1\), the drift satisfies
\[
\frac{1}{2}\EE\bigl[Q(t+1)^2\bigr]
- \frac{1}{2}\EE\bigl[Q(t)^2\bigr]
\;\le\;
\frac{1}{2}
\;-\;
\vedisc\,\EE\bigl[Q(t)\bigr]
\;+\;
4{d} t^{-2}
\;+\;
2\,\EE\!\Bigl[\,Q(t)\,\sqrt{\tfrac{2\log t}{\,N_{K(t)}(t-1)\,}}\Bigr].
\]
\end{lemma}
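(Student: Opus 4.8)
The plan is to prove Lemma~\ref{lem:drift-t>kappa} by repeating the squared‑Lyapunov‑drift computation used for Lemma~\ref{lem:drift-t<kappa}, but now replacing the fixed assignment \(K(t)=t\) by an analysis of the UCB1 selection rule. First I would square the queue update, use \((A(t)-S_{K(t)}(t))^2\le1\), and take expectations conditioned on the history \(\mathcal{F}_{t-1}\) generated by \(\{A(\tau),C(\tau)\}_{\tau\le t-1}\). Since \(A(t)\) and \(C(t)\) are independent of \(\mathcal{F}_{t-1}\), while \(Q(t)\), \(K(t)\), \(N_k(t-1)\), and \(\bar\mu_k(t-1)\) are \(\mathcal{F}_{t-1}\)-measurable and \(\EE[S_{K(t)}(t)\mid\mathcal{F}_{t-1}]=\mu_{K(t)}\), the tower property yields
\[
\tfrac12\EE\bigl[Q(t+1)^2\bigr]-\tfrac12\EE\bigl[Q(t)^2\bigr]\;\le\;\tfrac12+\EE\bigl[Q(t)(\lambda-\mu_{K(t)})\bigr].
\]
Writing \(\lambda=\mu^*-\vedisc\) and \(\Delta_{K(t)}=\mu^*-\mu_{K(t)}\), the cross term equals \(-\vedisc\,\EE[Q(t)]+\EE[Q(t)\Delta_{K(t)}]\), so it remains to show
\[
\EE\bigl[Q(t)\Delta_{K(t)}\bigr]\;\le\;4{d}\,t^{-2}+2\,\EE\Bigl[Q(t)\sqrt{\tfrac{2\log t}{N_{K(t)}(t-1)}}\Bigr].
\]

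Second, I would introduce the UCB clean event \(\mathcal{A}_t\) at slot \(t\): that \(\bar\mu_{k^*}(t-1)+\sqrt{2\log t/N_{k^*}(t-1)}\ge\mu^*\) and \(\bar\mu_{k}(t-1)\le\mu_k+\sqrt{2\log t/N_{k}(t-1)}\) for every \(k\in\mathcal{K}\); these radii are well defined because \(t\ge{d}+1\) forces \(N_k(t-1)\ge1\) (each arm is pulled once during the first \({d}\) rounds). On \(\mathcal{A}_t\) the selection rule \(U_{K(t)}(t-1)\ge U_{k^*}(t-1)\) chains into
\begin{align*}
\mu^*\le U_{k^*}(t-1)\le U_{K(t)}(t-1)
&=\bar\mu_{K(t)}(t-1)+\sqrt{\tfrac{2\log t}{N_{K(t)}(t-1)}}\\
&\le\mu_{K(t)}+2\sqrt{\tfrac{2\log t}{N_{K(t)}(t-1)}},
\end{align*}
so \(\Delta_{K(t)}\le 2\sqrt{2\log t/N_{K(t)}(t-1)}\) on \(\mathcal{A}_t\); dropping the nonnegative indicator then gives \(\EE[Q(t)\Delta_{K(t)}\mathbbm{1}\{\mathcal{A}_t\}]\le 2\,\EE[Q(t)\sqrt{2\log t/N_{K(t)}(t-1)}]\).

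Third, on \(\mathcal{A}_t^c\) I would bound crudely: \(\Delta_{K(t)}\le1\) and \(Q(t)\le t-1\) by Lemma~\ref{eqn:det_queue_bound}, so \(\EE[Q(t)\Delta_{K(t)}\mathbbm{1}\{\mathcal{A}_t^c\}]\le (t-1)\,\mathbb{P}(\mathcal{A}_t^c)\). For a fixed pull count \(s\in\{1,\dots,t-1\}\), Hoeffding's inequality gives \(\mathbb{P}(\overline{S}_{k,s}-\mu_k\ge\sqrt{2\log t/s})\le e^{-4\log t}=t^{-4}\) (where \(\overline{S}_{k,s}\) is the average of the first \(s\) pulls of arm \(k\)), and likewise for the lower tail of \(k^*\); unioning over \(s\) and over the \({d}\) arms — a peeling argument that does not use across‑arm independence, exactly as in the proof of Lemma~\ref{lem:UCB1} — yields \(\mathbb{P}(\mathcal{A}_t^c)\le({d}+1)t^{-3}\le 2{d}\,t^{-3}\), hence \((t-1)\mathbb{P}(\mathcal{A}_t^c)\le 2{d}\,t^{-2}\le 4{d}\,t^{-2}\). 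Adding the \(\mathcal{A}_t\) and \(\mathcal{A}_t^c\) contributions to \(\EE[Q(t)\Delta_{K(t)}]\) and substituting back into the drift inequality completes the proof.

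The step I expect to be the main obstacle is the clean‑event bookkeeping: making the Hoeffding‑plus‑union‑bound argument rigorous despite \(N_k(t-1)\) being data‑dependent (handled by peeling over the fixed counts \(s\)) and the per‑slot service vector being correlated across arms (harmless for each arm's own time series, cf.\ the Remark after Lemma~\ref{lem:UCB1}), and picking constants so that the stray terms fit inside the stated \(4{d}\,t^{-2}\). Everything else is the same routine squared‑drift algebra already carried out for Lemma~\ref{lem:drift-t<kappa}.
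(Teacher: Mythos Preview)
Your proposal is correct and follows essentially the same approach as the paper: squared-Lyapunov drift, split on a UCB ``clean'' event, on the clean event chain \(U_{K(t)}(t-1)\ge U_{k^*}(t-1)\ge\mu^*\) with the upper-confidence radius to get \(\Delta_{K(t)}\le 2\sqrt{2\log t/N_{K(t)}(t-1)}\), and on the complement use \(Q(t)\le t-1\) together with a Hoeffding-plus-peeling union bound giving \(\mathbb{P}(\text{bad})=O({d}\,t^{-3})\). The only cosmetic differences are that the paper keeps the two-sided confidence event for all arms (yielding \(2{d}\,t^{-3}\)) and splits \(\EE[Q(t)(\lambda-\mu_{K(t)})]\) before extracting \(-\vedisc\), which produces an extra \((1+\vedisc)\le 2\) factor on the bad-event term; your decomposition \(\lambda-\mu_{K(t)}=-\vedisc+\Delta_{K(t)}\) is slightly cleaner and lands at \(2{d}\,t^{-2}\), comfortably inside the stated \(4{d}\,t^{-2}\).
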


\begin{proof}[Proof of Lemma~\ref{lem:drift-t>kappa}]
From the queue dynamics, for any \(t\ge1\),
\begin{equation}\label{eq:queue_dynamics}
\begin{aligned}
Q(t+1)^2
&\le Q(t)^2 + \left(A(t)-S_{K(t)}(t)\right)^2
+ 2\,Q(t)\,\left(A(t)-S_{K(t)}(t)\right)\\
&\le Q(t)^2 + 1 + 2\,Q(t)\,\left(A(t)-S_{K(t)}(t)\right),
\end{aligned}
\end{equation}
since \(\left(A(t)-S_{K(t)}(t)\right)^2\le1\).  Taking expectations for \(t\ge{d}+1\) gives
\begin{equation}\label{eq:drift}
\begin{aligned}
&\frac{1}{2}\,\mathbb{E}\left[Q(t+1)^2\right]
- \frac{1}{2}\,\mathbb{E}\left[Q(t)^2\right]
\le \frac{1}{2}
+ \mathbb{E}\left[\,Q(t)\,(\lambda-\mu_{K(t)})\right]\\
&= \frac{1}{2}
+ \mathbb{E}\left[Q(t)\,(\lambda-\mu_{K(t)})\,\mathbbm{1}\{\mathcal{G}(t-1)\}\right]
+ \mathbb{E}\left[Q(t)\,(\lambda-\mu_{K(t)})\,\mathbbm{1}\{\mathcal{G}(t-1)^c\}\right],
\end{aligned}
\end{equation}
where the “good” event is
\[
\mathcal{G}(t-1)
=\left\{\mu_k \in [\,U_k(t-1)-2\sqrt{\tfrac{2\log\!t}{N_k(t-1)}},\,U_k(t-1)\,]\ \forall\,k\in\mathcal{K}\right\}.
\]
On \(\mathcal{G}(t-1)\) we have
\(\mu_{K(t)}\ge U_{K(t)}(t-1)-2\sqrt{\tfrac{2\log\!t}{N_{K(t)}(t-1)}}\),
so
\[
\lambda-\mu_{K(t)}
\le \left(\lambda - U_{K(t)}(t-1)\right)
+2\sqrt{\frac{2\log\!t}{N_{K(t)}(t-1)}}.
\]
By the selection rule,
\(U_{K(t)}(t-1)\ge U_{k^*}(t-1)\) and by the \emph{good event},
\( U_{k^*}(t-1)\ge\mu_{k^*}\)
and \(\vedisc=\mu_{k^*}-\lambda\).  Hence
\[
\mathbb{E}\left[Q(t)\,(\lambda-\mu_{K(t)})\,\mathbbm{1}\{\mathcal{G}(t-1)\}\right]
\le \mathbb{E}\!\left[Q(t)\left(-\vedisc+2\sqrt{\tfrac{2\log\!t}{N_{K(t)}(t-1)}}\right)\mathbbm{1}\{\mathcal{G}(t-1)\}\right].
\]
Expanding gives
\[
\mathbb{E}\left[Q(t)\,(\lambda-\mu_{K(t)})\,\mathbbm{1}\{\mathcal{G}(t-1)\}\right]
=-\vedisc\,\mathbb{E}[Q(t)]
+\vedisc\,\mathbb{E}\!\left[Q(t)\,\mathbbm{1}\{\mathcal{G}(t-1)^c\}\right]
+2\,\mathbb{E}\!\left[Q(t)\sqrt{\tfrac{2\log\!t}{N_{K(t)}(t-1)}}\right].
\]
Substituting into \eqref{eq:drift} and using
\(\lambda-\mu_{K(t)}\le1\) yields
\begin{align}
\frac{1}{2}\,\mathbb{E}\left[Q(t+1)^2\right]
- \frac{1}{2}\,\mathbb{E}\left[Q(t)^2\right]
\le &\frac{1}{2}
-\vedisc\,\mathbb{E}[Q(t)]
+(1+\vedisc)\,\mathbb{E}\!\left[Q(t)\,\mathbbm{1}\{\mathcal{G}(t-1)^c\}\right]
\nonumber\\&+2\,\mathbb{E}\!\left[Q(t)\sqrt{\tfrac{2\log\!t}{N_{K(t)}(t-1)}}\right]. \nonumber
\end{align}
Using Lemma~\ref{eqn:det_queue_bound} and \(1+\vedisc\le 2\) gives
\[
\frac{1}{2}\,\mathbb{E}\left[Q(t+1)^2\right]
- \frac{1}{2}\,\mathbb{E}\left[Q(t)^2\right]
\le \frac{1}{2}
-\vedisc\,\mathbb{E}[Q(t)]
+2 t\,\mathbb{E}\!\left[\mathbbm{1}\{\mathcal{G}(t-1)^c\}\right]
+2\,\mathbb{E}\!\left[Q(t)\sqrt{\tfrac{2\log\!t}{N_{K(t)}(t-1)}}\right].
\]

To bound $\EE\!\left[\mathbbm{1}\{\mathcal{G}(t-1)^c\}\right] = \mathbb{P}\{\mathcal{G}(t-1)^c\}$, we first fix $k \in \mathcal{K}$ and note
\begin{align*}
    \mathbb{P}\!\left\{ \big|\mu_k - \bar\mu_k\big| \;\geq\; \sqrt{\tfrac{2\log t}{N_k(t-1)}} \right\}
    &=
    \sum_{n=1}^{t-1}
    \mathbb{P}\!\left\{ \big|\mu_k - \bar\mu_k\big| \;\geq\; \sqrt{\tfrac{2\log t}{n}} \,\middle|\, N_k(t-1) = n \right\}
    \mathbb{P}\{N_k(t-1) = n\}  \\
    &\leq
    \sum_{n=1}^{t-1}
    \mathbb{P}\!\left\{ \big|\mu_k - \bar\mu_k\big| \;\geq\; \sqrt{\tfrac{2\log t}{n}} \,\middle|\, N_k(t-1) = n \right\}.
\end{align*}
By Hoeffding's inequality,
\begin{align*}
     \mathbb{P}\!\left\{ \big|\mu_k - \bar\mu_k\big| \;\geq\; \sqrt{\tfrac{2\log t}{n}} \,\middle|\, N_k(t-1) = n \right\}
     \;\leq\; 2 t^{-4}.
\end{align*}
Plugging back, we obtain
\begin{align*}
    \mathbb{P}\!\left\{ \big|\mu_k - \bar\mu_k\big| \;\geq\; \sqrt{\tfrac{2\log t}{N_k(t-1)}} \right\}
    &\leq
    \sum_{n=1}^{t-1} 2 t^{-4}
    \;\leq\; 2 t^{-3}.
\end{align*}
Finally, applying a union bound over all $k \in \mathcal{K}$ yields
\[
    \mathbb{P}\{\mathcal{G}(t-1)^c\} \;\leq\; 2 d t^{-3}.
\]

\end{proof}

\begin{lemma}\label{lem:H2-sum}
Define
\[
h(t) = 
\begin{cases}
\mu^* - \mu_t, 
&1 \le t \le {d},\\[6pt]
2\sqrt{\dfrac{2\log t}{\,N_{K(t)}(t-1)\,}}, 
&t \ge {d}+1.
\end{cases}
\]
Then for any \(H\ge1\),
\[
\sum_{t=1}^H h(t)^2
\;\le\;
{d}\Bigl(\tfrac{1}{3} + 8\bigl(1+\log H\bigr)\log H\Bigr).
\]
\end{lemma}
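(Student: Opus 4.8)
The plan is to break the sum at the boundary index $t=d$ and estimate the two resulting pieces separately. Writing $\sum_{t=1}^{H}h(t)^2 = \sum_{t=1}^{\min\{d,H\}}\bigl(\mu^*-\mu_t\bigr)^2 + \sum_{t=d+1}^{H}\tfrac{8\log t}{N_{K(t)}(t-1)}$ (the second sum being empty when $H\le d$), it suffices to show $\sum_{t=1}^{\min\{d,H\}}(\mu^*-\mu_t)^2\le d/3$ and $\sum_{t=d+1}^{H}\tfrac{8\log t}{N_{K(t)}(t-1)}\le 8d(1+\log H)\log H$; adding these yields the claimed bound $d\bigl(\tfrac13 + 8(1+\log H)\log H\bigr)$.

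For the first piece I would exploit the structure $\mu_k = g(k/d)$ with $g(r)=r\,\mathbb{P}\{C(1)\ge r\}$. Let $k^*$ attain $\mu^*=\mu_{k^*}=g(k^*/d)$, and split the arms into $k\le k^*$ and $k>k^*$. For $k\le k^*$ the one-sided Lipschitz property of $g$ (Lemma~\ref{lem:lip}) gives $0\le \mu^*-\mu_k = g(k^*/d)-g(k/d)\le (k^*-k)/d$, while for $k>k^*$ we use only $0\le \mu^*-\mu_k\le \mu^* = g(k^*/d)\le k^*/d$ (since $g(r)\le r$). Hence $\sum_{t=1}^{d}(\mu^*-\mu_t)^2 \le d^{-2}\bigl(\sum_{j=0}^{k^*-1}j^2 + (d-k^*)(k^*)^2\bigr)$, and it remains to verify the elementary inequality $\sum_{j=0}^{m-1}j^2 + (d-m)m^2\le d^3/3$ for every integer $m\in[1:d]$. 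This is immediate once one observes $6\bigl(\sum_{j=0}^{m-1}j^2 + (d-m)m^2\bigr) = -4m^3+(6d-3)m^2+m \le -4m^3+6dm^2 = m^2(6d-4m)$, using $-3m^2+m\le 0$ for $m\ge1$, and $m^2(6d-4m)$ is nondecreasing on $[0,d]$ with value $2d^3$ at $m=d$.

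For the second piece, note first that UCB1 pulls every arm once during rounds $1,\dots,d$, so $N_{K(t)}(t-1)\ge1$ for all $t>d$; in particular the bound we seek holds for every realization. Grouping the terms by the selected arm and using that every time arm $k$ is chosen after round $d$ the count $N_k(\cdot)$ takes a strictly larger value, so that these counts are at least $1,2,3,\dots$ successively, we get $\sum_{t=d+1}^{H}\tfrac{1}{N_{K(t)}(t-1)} \le \sum_{k\in\mathcal{K}}\sum_{j=1}^{N_k(H)-1}\tfrac1j \le d\,(1+\log H)$; bounding $\log t\le\log H$ then gives $\sum_{t=d+1}^{H}\tfrac{8\log t}{N_{K(t)}(t-1)}\le 8d(1+\log H)\log H$, and summing the two pieces finishes the proof.

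The main obstacle is the first piece: the constant $1/3$ is genuinely not available for an arbitrary bandit and relies on the one-sided Lipschitz structure of $g$ together with $g(r)\le r$; one also has to be mildly careful with the boundary cases $H\le d$ (second sum empty) and with the fact that the inequality must hold pathwise in $N_{K(t)}(t-1)$. The second piece is the standard harmonic-counting argument for UCB and should be routine.
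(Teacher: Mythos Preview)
Your proof is correct and follows essentially the same approach as the paper: split the sum at $t=d$, handle the first $d$ terms by splitting arms into $k\le k^*$ (via the one-sided Lipschitz bound) and $k>k^*$ (via $\mu^*\le k^*/d$), and handle the tail by the standard arm-by-arm harmonic counting argument with $\log t\le\log H$. The only cosmetic difference is that the paper bounds $\sum_{j=0}^{k^*-1}j^2\le(k^*)^3/3$ first and then maximizes $\tfrac{x^2}{d}-\tfrac{2x^3}{3d^2}$ over $x\in[1,d]$, whereas you keep the exact sum and verify the cubic inequality directly; both arrive at the same $d/3$.
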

\begin{proof}[Proof of Lemma~\ref{lem:H2-sum}]
Split the sum into three parts:
\[
\sum_{t=1}^H h(t)^2
=
\sum_{t=1}^{k^*} h(t)^2
+
\sum_{t=k^*+1}^{{d}} h(t)^2
+
\sum_{t={d}+1}^H h(t)^2.
\]

\noindent\textbf{Case 1: \(1\le t\le k^*\).}  By Lemma~\ref{lem:lip}, \(\mu^*-\mu_t\le r_{k^*}-r_t =  (k^*-t)/{d}\).  Hence
\[
\sum_{t=1}^{k^*}h(t)^2
=\sum_{t=1}^{k^*}(\mu^*-\mu_t)^2
\;\le\;
\frac{1}{{d}^2}\sum_{t=1}^{k^*}(k^*-t)^2
=\frac{(k^*-1)k^*(2k^*-1)}{6{d}^2}
\;\le\;
\frac{{k^*}^3}{3{d}^2}.
\]

\noindent\textbf{Case 2: \(k^*+1\le t\le {d}\).}  Since each \(\mu_t\ge0\) and \(\mu^*\le k^*/{d}\),
\[
\sum_{t=k^*+1}^{{d}}h(t)^2
=\sum_{t=k^*+1}^{{d}}(\mu^*-\mu_t)^2
\;\le\;
({d}-k^*)\,(\mu^*)^2
\;\le\;
({d}-k^*)\,\frac{{k^*}^2}{{d}^2}.
\]
Combining Cases 1 and 2,
\[
\sum_{t=1}^{{d}}h(t)^2
\;\le\;
\frac{{k^*}^3}{3{d}^2}
\;+\;
({d}-k^*)\frac{{k^*}^2}{{d}^2}
=
\frac{{k^*}^2}{{d}}
-\frac{2{k^*}^3}{3{d}^2}
\;\le\;
\sup_{x\in[1,{d}]}\Bigl(\tfrac{x^2}{{d}}-\tfrac{2x^3}{3{d}^2}\Bigr)
\;=\;
\frac{{d}}{3}.
\]

\noindent\textbf{Case 3: \(t\ge{d}+1\).}  
\[
\sum_{t={d}+1}^nH(t)^2
=4\sum_{t={d}+1}^H\frac{2\log t}{N_{K(t)}(t-1)}
\;\le\;
8\log\ H
\sum_{t={d}+1}^H\frac{1}{N_{K(t)}(t-1)}.
\]
Further we simply have
\[
\sum_{t={d}+1}^H\frac{1}{N_{K(t)}(t-1)}
\;\le\;
{d}\sum_{m=1}^{H-1} \frac{1}{m}
\;\le\;
{d}\left(1 + \log H\right).
\]
Hence
\[
\sum_{t={d}+1}^nH(t)^2
\;\le\;
8{d}\,\bigl(1 + \log H\bigr)\log H.
\]

Putting the three parts together gives the claimed bound.
\end{proof}

\begin{proof}[Proof of Lemma \ref{thm:stage2}]
Summing the bound from Lemma~\ref{lem:drift-t<kappa} over \(t=1,\dots,{d}\) gives
\[
\frac{1}{2}\EE\bigl[Q({d}+1)^2\bigr]-\frac{1}{2}\EE\bigl[Q(1)^2\bigr]
\;\le\;
\frac{{d}}{2}
\;-\;
\vedisc\,\EE\Bigl[\sum_{t=1}^{{d}}Q(t)\Bigr]
\;+\;
\EE\Bigl[\sum_{t=1}^{{d}}Q(t)\,(\mu^*-\mu_t)\Bigr].
\]
Similarly, summing Lemma~\ref{lem:drift-t>kappa} for \(t={d}+1,\dots,H\) yields
\[
\begin{aligned}
\frac{1}{2}\EE\bigl[Q(H+1)^2\bigr]
- \frac{1}{2}\EE\bigl[Q({d}+1)^2\bigr]
&\le
\frac{H-{d}}{2}
- \vedisc\,\EE\Bigl[\sum_{t={d}+1}^nQ(t)\Bigr]
+4{d} \sum_{t={d}+1}^H t^{-2}\\
&\quad
+2\,\EE\!\Bigl[\sum_{t={d}+1}^nQ(t)\sqrt{\tfrac{2\log t}{N_{K(t)}(t-1)}}\Bigr].
\end{aligned}
\]
Adding these two inequalities, noting \(Q(1)=0\) and \(Q(H+1)^2\ge0\), gives
\[
0
\;\le\;
\frac{H}{2}
- \vedisc\,\EE\Bigl[\sum_{t=1}^nQ(t)\Bigr]
+4{d} \sum_{t={d}+1}^H t^{-2}
+2\,\EE\!\Bigl[\sum_{t={d}+1}^nQ(t)\sqrt{\tfrac{2\log t}{N_{K(t)}(t-1)}}\Bigr]
+\EE\Bigl[\sum_{t=1}^{d} Q(t)(\mu^*-\mu_t)\Bigr].
\]
Rearranging and dividing by \(\vedisc\), and using the definition of \(h(t)\) from Lemma~\ref{lem:H2-sum}, yields
\[
\EE\Bigl[\sum_{t=1}^nQ(t)\Bigr]
\;\le\;
\frac{H}{2\vedisc}
+4\,\frac{{d} }{\vedisc}\sum_{t={d}+1}^H t^{-2}
+\frac{1}{\vedisc}\,\EE\Bigl[\sum_{t=1}^nQ(t)\,h(t)\Bigr].
\]
By Cauchy–Schwarz,
\[
\sum_{t=1}^nQ(t)\,h(t)
\;\le\;
\sqrt{\sum_{t=1}^nQ(t)^2}\;\sqrt{\sum_{t=1}^nH(t)^2}
\;\le\;
2^{1/4}\Bigl(\sum_{t=1}^nQ(t)\Bigr)^{3/4}
\sqrt{\sum_{t=1}^nH(t)^2},
\]
where the last step uses Lemma~\ref{lemma:q_pow_bnd}.  Invoking Lemma~\ref{lem:H2-sum},
\[
\sum_{t=1}^nQ(t)\,h(t)
\;\le\;
2^{1/4}\Bigl(\sum_{t=1}^nQ(t)\Bigr)^{3/4}
\sqrt{{d}\Bigl(\tfrac13+8\bigl(1+\log H\bigr)\log H\Bigr)}.
\]
Since \(x\mapsto x^{-2}\) is decreasing on \([\,{d},\infty)\), we can bound the sum by the corresponding integral:
\[
\sum_{t={d}+1}^H \frac{1}{t^2}
\;\le\;
\int_{{d}}^{H} x^{-2}\,\mathrm{d}x
=
\bigl[-\,x^{-1}\bigr]_{x={d}}^{x=H}
=
\frac{1}{d} - \frac{1}{H}
\;\le\;
\frac{1}{d}.
\]  
Substituting these bounds and dividing both sides by \(H\) yields
\[
\frac{1}{H}\EE\Bigl[\sum_{t=1}^nQ(t)\Bigr]
\;\le\;
\frac{1}{2\vedisc}
+\frac{4}{H\vedisc}
+2^{1/4}\frac{\sqrt{{d}}}{H^{1/4}\vedisc}
\sqrt{\tfrac13+8\bigl(1+\log H\bigr)\log H}
\Bigl(\frac{1}{H}\EE\bigl[\sum_{t=1}^nQ(t)\bigr]\Bigr)^{3/4}.
\]
Set \(x^4=\tfrac{1}{H}\EE[\sum_{t=1}^nQ(t)]\).  Then this implies
\begin{align*}
    x^4- b x^{3} -a
    \leq
    0
\end{align*}
where
\[
a = \frac{1}{2\vedisc}+\frac{4}{H\vedisc},
\quad
b = 2^{1/4}\frac{\sqrt{{d}}}{H^{1/4}\vedisc}
\sqrt{\tfrac{1}{3}+8\bigl(1+\log H\bigr)\log H}.
\]
Applying Lemma~\ref{lemma:abX_lemma} with \(d=4\) gives
\[
\frac{1}{H}\EE\Bigl[\sum_{t=1}^nQ(t)\Bigr]
\;\le\;
8\,a + 8\,b^4
=
\frac{4}{\vedisc}
+\frac{32}{H\vedisc}
+16\,\frac{{d}^2}{H\vedisc^4}
\bigl(\tfrac13+8\bigl(1+\log H\bigr)\log H\bigr)^2.
\]
    
\end{proof}
\subsection{Proof of Theorem~\ref{thm:main_known}}\label{app:Proof-thm:main_known}
Since Lemma~\ref{thm:log(H)} and \ref{thm:stage2} both apply, we take the tighter of the two bounds: use Lemma~\ref{thm:log(H)} for \(2 \le H \le n_{\mathrm{stage}}\) and Lemma~\ref{thm:stage2} for \(H \ge n_{\mathrm{stage}}\). This yields a bound that holds uniformly for all \(H \ge 2\). Set
\[
  n_{\mathrm{stage}} \;=\; e^{15.5}\,\velb^{-4.5}.
\]

\noindent\textbf{Case 1: \(2 \le H \le n_{\mathrm{stage}}\).}
By Theorem~\ref{thm:log(H)},
\begin{align*}
    \frac{1}{H}\sum_{t=1}^H \EE\!\bigl[Q(t)\bigr]
    &\le  
    \frac{4}{\vedisc}\Bigl(1+4 d\,\log H\Bigr)
    \\
    &\stackrel{(a)}{\le}
    \frac{6}{\velb}
    + \frac{96\bigl(15.5 + 4.5\log(1/\velb)\bigr)}{\velb^{2}}
    \\
    &\le
    \frac{6}{\velb}
    + \frac{1488}{\velb^{2}}
    + \frac{432\,\log(1/\velb)}{\velb^{2}}.
\end{align*}
For \((a)\), we use: (i) \(\vedisc \ge \tfrac{2}{3}\,\velb\), hence \(\tfrac{4}{\vedisc} \le \tfrac{6}{\velb}\); (ii) \(d = \lceil 3/\velb\rceil \le 4/\velb\); and (iii) \(H \le n_{\mathrm{stage}} = e^{15.5}\velb^{-4.5}\), so \(\log H \le 15.5 + 4.5\log(1/\velb)\).

\noindent\textbf{Case 2: \(H \ge n_{\mathrm{stage}}\).}
We first state a useful lemma.

\begin{lemma}\label{lem:decreasing-for-H-large}
For all real \(x\ge e^{8}\), define
\[
f(x)\;=\;\frac{1}{x}\Bigl(\tfrac13+8\,(1+\log x)\,\log x\Bigr)^2.
\]
Then
\[
f(x)\;\le\;\frac{82(\log x)^4}{x},
\]
and the function \(x\mapsto 82(\log x)^4/x\) is strictly decreasing on \([e^{8},\infty)\).
\end{lemma}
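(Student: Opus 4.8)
The plan is to treat the two assertions separately, writing $L \triangleq \log x$ throughout so that the hypothesis $x \ge e^{8}$ becomes the clean condition $L \ge 8$.

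For the inequality $f(x) \le 82(\log x)^4/x$, I would observe that both $f(x)$ and the claimed bound carry the common positive factor $1/x$, so it suffices to prove $\bigl(\tfrac13 + 8(1+L)L\bigr)^2 \le 82\,L^4$ for $L \ge 8$. First I would bound the base of the square: since $L \ge 8$ we have $8L \le L^2$ and $\tfrac13 \le 1 \le L^2/64$, hence
\[
\tfrac13 + 8(1+L)L \;=\; \tfrac13 + 8L + 8L^2 \;\le\; \tfrac{1}{64}L^2 + L^2 + 8L^2 \;=\; \tfrac{577}{64}\,L^2 .
\]
Squaring and using the single numerical check $577^2 = 332929 < 335872 = 82\cdot 64^2$ gives $\bigl(\tfrac13 + 8(1+L)L\bigr)^2 \le \bigl(\tfrac{577}{64}\bigr)^2 L^4 < 82\,L^4$, and multiplying back by $1/x$ yields the stated bound.

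For the monotonicity claim, set $g(x) \triangleq 82(\log x)^4/x$ and differentiate:
\[
g'(x) \;=\; 82\bigl(4(\log x)^3 x^{-2} - (\log x)^4 x^{-2}\bigr) \;=\; 82\,x^{-2}(\log x)^3\bigl(4 - \log x\bigr).
\]
On $[e^{8},\infty)$ we have $\log x \ge 8 > 4$, so the last factor is strictly negative while $x^{-2}(\log x)^3 > 0$; thus $g'(x) < 0$ on the whole interval and $g$ is strictly decreasing there.

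I do not expect a genuine obstacle: once the substitution $L = \log x$ is made, both parts collapse to elementary one-variable estimates. The only place that needs a little care is selecting the intermediate constant in the first part loose enough to keep the algebra painless — any constant in $(9,\sqrt{82})$ works, and $\tfrac{577}{64}$ is a convenient choice — while still strictly beating $\sqrt{82}$, which is exactly what the inequality $577^2 < 82\cdot 64^2$ certifies.
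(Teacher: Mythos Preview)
Your proof is correct and follows essentially the same approach as the paper: both arguments bound the base $\tfrac13+8(1+L)L$ by a constant times $L^2$ (the paper uses $9+\tfrac{1}{192}$, you use $\tfrac{577}{64}$) and then square, and the monotonicity computation is identical.
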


\begin{proof}[Proof of Lemma~\ref{lem:decreasing-for-H-large}]
Since \(x\ge e^{8}\) we have \(\tfrac{1}{8}\log x\ge 1 \), so
\[
1+\log x\;\le\;\tfrac{9}{8}\log x
\quad\text{and}\quad
\tfrac13\;\le\;\tfrac{1}{192}(\log x)^2.
\]
Therefore
\[
\tfrac13 +8(1+\log x)\,\log x
\;\le\;(9+\tfrac{1}{192})\,(\log x)^2,
\]
and thus
\[
f(x)\;\le\;
\frac{82(\log x)^4}{x}.
\]

It remains to check that \(g(x)=82(\log x)^4/x\) is decreasing for \(x\ge e^{8}\).  A direct derivative gives
\[
g'(x)
=82\left(\frac{\,4(\log x)^3\cdot\frac1x - (\log x)^4\cdot\frac1x\,}{x}\right)
=\frac{82(\log x)^3}{x^2}\bigl(4-\log x\bigr).
\]
For \(x\ge e^{8}> e^4\) we have \(\log x>4\), so \(4-\log x<0\), and hence \(g'(x)<0\).  This shows \(g\) is strictly decreasing on \([e^{8},\infty)\), completing the proof.

\end{proof}

 Combining Lemma~\ref{thm:stage2} with Lemma~\ref{lem:decreasing-for-H-large} yields
\begin{align*}
    \frac{1}{H}\sum_{t=1}^H \EE\bigl[Q(t)\bigr]
    &\le
    \frac{4}{\vedisc}
    + \frac{32}{H\,\vedisc}
    + 16\,\frac{d^2}{H\,\vedisc^4}
    \Bigl(\tfrac13 + 8\,(1+\log H)\,\log H\Bigr)^2
    \\&\le_{(a)}
    \frac{4}{\vedisc}
    + \frac{32}{H\,\vedisc}
    + 16\,\frac{d^2}{\vedisc^4}\,82\,\frac{(\log H)^4}{H}
    \\&\le_{(b)}
    \frac{6}{\velb}
    +\frac{48}{e^{15.5}}
    + 16\,\frac{d^2}{\vedisc^4}\,82\,\frac{(\log H)^4}{H}
    \\&\le_{(c)}
    \frac{6}{\velb}
    +\frac{48}{e^{15.5}}
    +\frac{106272}{e^{15.5}}\,
      \frac{\bigl(15.5 + 4.5\log(1/\velb)\bigr)^4}{\velb^{1.5}}
    \\&\le
    \frac{6}{\velb}
    +\frac{8\cdot 106273}{e^{15.5}}\,
      \frac{{15.5}^4 + 5^4(\log(1/\velb))^4}{\velb^{1.5}}
    \\&\le
    \frac{6}{\velb}
    +\frac{9105}{\velb^{1.5}}
    +\frac{99\,(\log(1/\velb))^4}{\velb^{1.5}}.
\end{align*}
Here, \((a)\) uses Lemma~\ref{lem:decreasing-for-H-large}; \((b)\) uses \(\vedisc \ge  \tfrac{2}{3}\,\velb\) and \(H \ge n_{\text{stage}}\); and \((c)\) uses \(H \ge n_{\text{stage}} \ge e^8\) so that, by the second part of Lemma~\ref{lem:decreasing-for-H-large}, \(\tfrac{(\log H)^4}{H} \le \tfrac{(\log n_{\text{stage}} )^4}{n_{\text{stage}} }\), together with \(\vedisc \ge \tfrac{2}{3}\,\velb\) and \(d \le 4/\velb\).

Combining the two cases gives
\begin{align*}
    \frac{1}{H}\sum_{t=1}^H \EE\bigl[Q(t)\bigr]
    &\le
    \max\!\Biggl\{
    \frac{6}{\velb}
    + \frac{1488}{\velb^2}
    + \frac{432\,\log(1/\velb)}{\velb^2}
    \;,\;
    \frac{6}{\velb}
    +\frac{9105}{\velb^{1.5}}
    +\frac{99\,(\log(1/\velb))^4}{\velb^{1.5}}
    \Biggr\}.
\end{align*}
To streamline the analysis, we treat the large- and small-\(\velb\) regimes separately and begin with the following lemma.

\begin{lemma}\label{lem:log4}
For all \(x \in (0,1]\), we have
\[
  \frac{\bigl(\log(1/x)\bigr)^{4}}{x^{3/2}}
  \;\le\;
  \frac{11\,\log(1/x)}{x^{2}}.
\]
\end{lemma}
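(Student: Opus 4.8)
The plan is to prove the elementary inequality $\bigl(\log(1/x)\bigr)^{4}/x^{3/2} \le 11\,\log(1/x)/x^{2}$ for all $x\in(0,1]$ by reducing it to a one-variable calculus fact. First I would note that at $x=1$ both sides vanish (since $\log 1 = 0$), so the inequality holds with equality there; hence assume $x\in(0,1)$, where $\log(1/x)>0$ and both sides are strictly positive. Dividing both sides by the positive quantity $\log(1/x)/x^{2}$, the claim becomes
\begin{align}
    x^{1/2}\bigl(\log(1/x)\bigr)^{3} \;\le\; 11 \quad\text{for all } x\in(0,1). \nonumber
\end{align}

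Next I would reduce this to a maximization over a single nonnegative variable via the substitution $s = \log(1/x) \in (0,\infty)$, so that $x = e^{-s}$ and $x^{1/2} = e^{-s/2}$. The inequality then reads $e^{-s/2} s^{3} \le 11$ for all $s > 0$. I would bound the left-hand side by its maximum on $(0,\infty)$: differentiating $\varphi(s) = s^{3}e^{-s/2}$ gives $\varphi'(s) = e^{-s/2}\bigl(3s^{2} - \tfrac{1}{2}s^{3}\bigr) = \tfrac{1}{2}s^{2}e^{-s/2}(6 - s)$, which is positive for $s\in(0,6)$ and negative for $s>6$, so the unique maximizer is $s^{*}=6$, with maximal value $\varphi(6) = 6^{3}e^{-3} = 216\,e^{-3}$. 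Since $e^{3} > 20$ (indeed $e^{3} \approx 20.09$), we have $216\,e^{-3} < 216/20 = 10.8 < 11$, which establishes the reduced inequality and hence the lemma.

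The only mild subtlety — and the step I would be most careful about — is handling the boundary behavior: as $x\to 0^{+}$ one has $\log(1/x)\to\infty$ but $x^{1/2}\to 0$, and the substitution $s=\log(1/x)$ cleanly shows the product stays bounded, so no separate limiting argument is needed beyond observing that the supremum over $s\in(0,\infty)$ is attained at $s=6$. Everything else is a routine single-variable optimization together with the numerical fact $e^{3}>20$; there is no real obstacle here, just bookkeeping with the sign of $\varphi'$ and a crude-but-sufficient numerical bound on $e^{3}$.
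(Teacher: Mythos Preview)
Your proof is correct and follows essentially the same approach as the paper: both reduce the inequality to a single-variable optimization via the substitution $s=\log(1/x)$ (the paper uses the rescaled $t=s/6$ after first taking cube roots), and both ultimately verify the same numerical fact $216\,e^{-3}<11$. The only cosmetic difference is that the paper rewrites the target as $x^{1/6}\log(1/x)\le 11^{1/3}$ before optimizing, whereas you keep the cubed form; the critical point and the final comparison are identical.
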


\begin{proof}[Proof of Lemma~\ref{lem:log4}]
simplification gives
  \[
    \frac{\bigl(\log(1/x)\bigr)^{4}}{x^{1.5}}
    \;\le\;
    \frac{11\,\log(1/x)}{x^{2}}
    \Leftrightarrow
    x^{\tfrac{1}{6}} \log\tfrac{1}{x} \leq 11^{\tfrac{1}{3}}
  \]
  Write \(t = \tfrac{1}{6}\log(1/x)\), so \(t \ge 0\) and \(x = e^{-6t}\).  The desired inequality
  is equivalent to
  \[
    f(t) \;\stackrel{\text{def}}{=}\;t\,e^{-t}\;\le\;\frac{11^{1/3}}{6}
    \quad\text{for all }t\ge0.
  \]
  Now
  \[
    f'(t)
    = {e^{-t}}\,(1 - t),
  \]
  so \(f'(t)=0\) only at \(t=1\).  Checking these and the zero and the limit at infinity:
  \[
    f(0)=0,
    \quad
    f(1)=e^{-3} < \frac{11^{1/3}}{6},
    \quad
    \lim_{t\to\infty} f(t)=0.
  \]
  Hence \(f(t)\le f(1)<\frac{11^{1/3}}{6}\) for all \(t\ge0\), which proves the lemma.

\end{proof}

\noindent\textbf{Small \(\velb\) (i.e., \(\velb \le e^{-3}\)).}
In this regime,
\[
  \frac{1}{\velb}\ge e^{3}, \qquad \log\!\frac{1}{\velb}\ge 3,
\]
and hence
\[
  \frac{1}{\velb}\log\!\frac{1}{\velb}\ge 3e^{3},
  \qquad
  \frac{1}{\sqrt{\velb}}\log\!\frac{1}{\velb}\ge 3e^{3/2}.
\]
Consequently,
\begin{align}\label{eqn:2222}
  \frac{1}{H}\sum_{t=1}^{H}\EE\bigl[Q(t)\bigr]
  &\le_{(a)}
  \frac{\log(1/\velb)}{\velb^{2}}
  \cdot
  \max\!\Biggl\{
    \frac{6}{3e^{3}}
    + \frac{1488}{3}
    + 432
    \;,\;
    \frac{6}{3e^{3}}
    + \frac{9105}{3e^{3/2}}
    + 99\cdot 11
  \Biggr\}
  \nonumber\\&\le
  1767\,\frac{\log(1/\velb)}{\velb^{2}}.
\end{align}
Here, \((a)\) invokes Lemma~\ref{lem:log4}.

\noindent\textbf{Large \(\velb\) (i.e., \(e^{-3}\le \velb \le 1\)).}
Using \(\log(1/\velb)\le 3\), we obtain
\begin{align}\label{eqn:11111}
    \frac{1}{H}\sum_{t=1}^H \EE\bigl[Q(t)\bigr]
    \le
    \max\!\Biggl\{
    \frac{6}{\velb^2}
    + \frac{1488}{\velb^2}
    + \frac{432\cdot 3}{\velb^2}
    \;,\;
    \frac{6}{\velb^2}
    + \frac{9105}{\velb^{2}}
    + \frac{99\cdot 11 \cdot 3}{\velb^{2}}
    \Biggr\}
    \;=\; \frac{12378}{\velb^2}.
\end{align}

 The proof is an immediate consequence of combining \eqref{eqn:2222} and \eqref{eqn:11111}.

\end{document}